\setlist[enumerate]{nosep, topsep=0ex}
\setlist[itemize]{nosep, topsep=0ex}
\titlespacing*{\section} {0pt}{2ex}{2ex}
\titlespacing*{\subsection} {0pt}{2ex}{2ex}
\titlespacing*{\subsubsection} {0pt}{2ex}{2ex}
\newtheorem{theorem}{Theorem}[]
\newtheorem{question}{Question}[]
\newtheorem{definition}{Definition}[]
\newtheorem{remark}{Remark}[]
\newtheorem{lemma}{Lemma}[]
\title{Optimal Sensitivity Oracle for Steiner Mincut} 
  \author[1]{Koustav Bhanja}
 \affil[1]{Indian Institute of Technology Kanpur, India}
 \affil[ ]{\texttt{kbhanja@cse.iitk.ac.in}}
\date{}
\begin{document}
\begin{titlepage}

\maketitle

\begin{abstract}
    Let $G=(V,E)$ be an undirected weighted graph on $n=|V|$ vertices and $S\subseteq V$ be a Steiner set. Steiner mincut is a well-studied concept, which also provides a generalization to both $(s,t)$-mincut (when $|S|=2$) and global mincut (when $|S|=n$). Here, we address the problem of designing a compact data structure that can efficiently report a Steiner mincut and its capacity after the failure of any edge in $G$; such a data structure is known as a \textit{Sensitivity Oracle} for Steiner mincut.

    In the area of minimum cuts, although many Sensitivity Oracles have been designed in unweighted graphs, however, in weighted graphs, Sensitivity Oracles exist only for $(s,t)$-mincut [Annals of Operations Research 1991, NETWORKS 2019, ICALP 2024], which is just a special case of Steiner mincut. Here, we generalize this result from $|S|=2$ to any arbitrary set $S\subseteq V$, that is, $2 \le |S| \le n$.
    
    We first design an ${\mathcal O}(n^2)$ space Sensitivity Oracle for Steiner mincut by suitably generalizing the approach used for $(s,t)$-mincuts [Annals of Operations Research 1991, NETWORKS 2019].
    However, the main question that arises quite naturally is the following.
    \begin{center}
        \textit{Can we design a Sensitivity Oracle for Steiner mincut that breaks the ${\mathcal O}(n^2)$ bound on space?}
    \end{center}
In this manuscript, we present the following two results that provide an answer to this question.   
\begin{enumerate}
\item \textbf{Sensitivity Oracle:} Assuming the capacity of every edge is known,
\begin{enumerate}
    \item there is an ${\mathcal O}(n)$ space data structure that can report the capacity of Steiner mincut in ${\mathcal O}(1)$ time and
    \item there is an ${\mathcal O}(n(n-|S|+1))$ space data structure that can report a Steiner mincut in ${\mathcal O}(n)$ time
\end{enumerate}
after the failure of any given edge in $G$.
\item \textbf{Lower Bound:} We show that any data structure that, after the failure of any edge, can report a Steiner mincut or its capacity must occupy $\Omega(n^2)$ bits of space in the worst case, irrespective of the size of the Steiner set.  
\end{enumerate}
The lower bound in (2) shows that the assumption in (1) is essential to break the $\Omega(n^2)$ lower bound on space. Sensitivity Oracle in (1.b) occupies only subquadratic, that is ${\mathcal O}(n^{1+\epsilon})$, space if $|S|=n-n^{\epsilon}+1$, for every $\epsilon\in [0,1)$.  For $|S|=n-k$ for any constant $k\ge 0$, it occupies only ${\mathcal O}(n)$ space. So, we also present the first Sensitivity Oracle occupying ${\mathcal O}(n)$ space for global mincut.
In addition, we are able to match the existing best-known bounds on both space and query time for $(s,t)$-mincut [Annals of Operations Research 1991, NETWORKS 2019] in undirected graphs.
\end{abstract} 
\end{titlepage}

\tableofcontents{}
\pagebreak
\pagenumbering{arabic}
\section{Introduction}
\label{sec : introduction}
 In the real world, networks (graphs) are often subject to the failure of edges and vertices due to a variety of factors, such as physical damage, interference, or other disruptions. This can lead to changes in the solution to several graph problems. While these failures can happen at any location in the network at any time, they are typically short-lived. Naturally, it requires us to have compact data structures that can efficiently report the solution to the given graph problem (without computing from scratch) once any failure has occurred. Such data structures are known as \textit{Sensitivity Oracles} for several graph problems. There exist elegant Sensitivity Oracles for many fundamental graph problems, such as shortest paths  \cite{baswana2020approximate, bilo2023approximate}, reachability  \cite{italiano2021planar, choudhary2016optimal}, traversals \cite{DBLP:conf/podc/Parter15, baswana2019dynamic}, etc.
 
 The minimum cut of a graph is also a fundamental concept of graph theory. Moreover, it has a variety of practical applications in the real world \cite{DBLP:books/daglib/0069809}.  
 Designing Sensitivity Oracles for various minimum cuts of a graph has been an emerging field of research for the past few decades \cite{DBLP:journals/talg/BaswanaBP23, DBLP:journals/anor/ChengH91, DBLP:journals/siamcomp/DinitzV00, dinitz19952,
 baswana2022mincut, bhanja2024minimum+, baswana2022sensitivity, DBLP:conf/icalp/BaswanaB24}. There are two well-known mincuts of a graph. They are global mincut and (s,t)-mincut.
Here, we design the first Sensitivity Oracle for global mincut in undirected weighted graphs that can handle the failure of any edge.
The concept of Steiner mincut is also well-studied in the area of minimum cuts \cite{dinitz1994connectivity, dinitz1995locally, DBLP:journals/siamcomp/DinitzV00, baswana2022sensitivity, cole2003fast, he2024cactus, DBLP:journals/corr/abs-1912-11103}; moreover, it has global mincut, as well as $(s,t)$-mincut, as just a special corner case. In this article, as our main result, we present the first \textit{Sensitivity Oracle for Steiner mincut} for handling the failure of any edge in undirected weighted graphs.
Interestingly, our result bridges the gap between the two extreme scenarios of Steiner mincut while matching their bounds, namely,   $(s,t)$-mincut \cite{DBLP:journals/networks/AusielloFLR19, DBLP:journals/anor/ChengH91} and global mincut (designed in this article). In addition, it also provides the first generalization from unweighted graphs \cite{baswana2022sensitivity, dinitz1994connectivity, dinitz1995locally, DBLP:journals/siamcomp/DinitzV00} to weighted graphs. 

Let $G=(V,E)$ be an undirected graph on $n=|V|$ vertices and $m=|E|$ edges with non-negative real values assigned as the capacity to edges. We denote the capacity of an edge $e$ by $w(e)$. Let $S\subseteq V$ be a \textit{Steiner set} of $G$ such that $|S|\ge 2$. A vertex $s$ is called a \textit{Steiner vertex} if $s\in S$; otherwise, $s$ is called a \textit{nonSteiner vertex}. 
\begin{definition} [Steiner cut] \label{def : steiner cut}
    A nonempty set $C\subset V$ is said to be a \textit{Steiner cut} if there is at least one pair of Steiner vertices $s,s'$ such that $s\in C$ and $s'\notin C$. 
\end{definition}
For $S=V$, a Steiner cut is a \textit{(global) cut}. Similarly, for $S=\{s,t\}$, a Steiner cut is an \textit{$(s,t)$-cut}. 
A cut $C$ is said to \textit{separate} a pair of vertices $u,v$ if $u\in C$ and $v\in \overline{C}=V\setminus C$ or vice versa. An edge $e=(u,v)$ is said to \textit{contribute} to a cut $C$ if $C$ separates endpoints $u,v$ of $e$. The \textit{capacity of a cut} $C$, denoted by $c(C)$, is the sum of capacities of all contributing edges of $C$. A Steiner cut of the least capacity is known as the \textit{Steiner mincut}, denoted by $S$-mincut. Let $\lambda_S$ be the capacity of $S$-mincut. 
The problem of designing a Sensitivity Oracle for $S$-mincut for handling the failure of any edge is defined as follows.
\begin{definition} [single edge Sensitivity Oracle for Steiner mincut]
    For any graph $G$, a single edge Sensitivity Oracle for Steiner mincut is a compact data structure that can efficiently report a Steiner mincut and its capacity after the failure of any edge in $G$.
\end{definition}
For unweighed graphs, there exist single edge Sensitivity Oracles for global mincut \cite{dinitz1976structure}, $(s,t)$-mincut \cite{DBLP:journals/mp/PicardQ80, DBLP:journals/talg/BaswanaBP23}, and Steiner mincut \cite{dinitz1994connectivity, dinitz1995locally, baswana2022sensitivity}. Unfortunately, for weighted graphs, in the area of minimum cuts, the only existing results are single edge Sensitivity Oracles for $(s,t)$-mincut \cite{DBLP:journals/networks/AusielloFLR19, DBLP:journals/anor/ChengH91, DBLP:conf/icalp/BaswanaB24}.
For undirected weighted graphs, Ausiello et al. \cite{DBLP:journals/networks/AusielloFLR19}, exploiting the Ancestor tree data structure of Cheng and Hu \cite{DBLP:journals/anor/ChengH91}, designed the first single edge Sensitivity Oracle for $(s,t)$-mincut. Their Sensitivity Oracle occupies ${\mathcal O}(n^2)$ space.
After the failure of any edge, it can report an $(s,t)$-mincut $C$ and its capacity in ${\mathcal O}(|C|)$ and ${\mathcal O}(1)$ time, respectively.  
Recently, Baswana and Bhanja \cite{DBLP:conf/icalp/BaswanaB24} complemented this result by showing that $\Omega(n^2\log~n)$ bits of space is required in the worst case, irrespective of the query time. 
 
For Steiner mincuts, it follows from the above discussion that the existing Sensitivity Oracles are either for undirected unweighted graphs or only for a special case, when $|S|=2$, in weighted graphs. Therefore, to provide a generalization of these results to any Steiner set, the following is an important question to raise.
\begin{center}
    \textit{Does there exist a single edge Sensitivity Oracle for $S$-mincut in undirected weighted graphs?}
\end{center}
\noindent
We show that the approach taken by Ausiello et al. \cite{DBLP:journals/networks/AusielloFLR19} can be generalized from $S=\{s,t\}$ to any set $S\subseteq V$. This answers the above-mentioned question in the affirmative and leads to the following result.   
\begin{theorem} \label{thm : single edge sensitivity oracle using cheng and hu}
    For any undirected weighted graph $G$ on $n=|V|$ vertices, for every Steiner set $S$, there exists an ${\mathcal O}(n^2)$ space data structure that, after the failure of any edge in $G$, can report an $S$-mincut $C$ and its capacity in ${\mathcal O}(|C|)$ time and ${\mathcal O}(1)$ time respectively.
\end{theorem}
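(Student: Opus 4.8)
Fix the failing edge $e=(u,v)$, set $w=w(e)$, and write $G-e$ for the graph after its failure. Since neither $V$ nor $S$ changes, a set is a Steiner cut of $G-e$ exactly when it is a Steiner cut of $G$, and for such a cut $C$ we have $c_{G-e}(C)=c_G(C)-w$ if $e$ contributes to $C$ (i.e.\ $C$ separates $u$ and $v$) and $c_{G-e}(C)=c_G(C)$ otherwise. Writing $\lambda_S(G-e)$ as the minimum of $c_{G-e}(C)$ over all Steiner cuts $C$ and partitioning these cuts according to whether they separate $u$ from $v$, we get
\[
  \lambda_S(G-e)\;=\;\min\!\big(A_{uv},\ Y_{uv}-w\big),
\]
where $A_{uv}$, resp.\ $Y_{uv}$, is the smallest capacity in $G$ of a Steiner cut that does \emph{not}, resp.\ \emph{does}, separate $u$ and $v$. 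Two elementary observations drive the construction: $\min(A_{uv},Y_{uv})=\lambda_S$; and $Y_{uv}>\lambda_S$ forces $A_{uv}=\lambda_S$ and, moreover, forces \emph{every} $S$-mincut of $G$ to keep $u$ and $v$ on the same side. It therefore suffices to return, in $O(1)$ time, the numbers $\lambda_S$ and $Y_{uv}$, one fixed $S$-mincut $C_0$ of $G$, and one Steiner cut $D_{uv}$ of capacity exactly $Y_{uv}$ separating $u,v$: the answer is $(Y_{uv}-w,\,D_{uv})$ when $Y_{uv}-w<\lambda_S$, and $(\lambda_S,\,C_0)$ otherwise --- in the second branch $\lambda_S(G-e)=\lambda_S$ and $C_0$ still has capacity $\lambda_S$ in $G-e$ (if $w>0$ then $Y_{uv}>\lambda_S$, so $C_0$ does not separate $u,v$; and $w=0$ is trivial). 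Storing $C_0$ and each $D_{uv}$ as the list of vertices on its smaller side, the reported cut $C$ is output in $O(|C|)$ time.

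The whole problem thus reduces to storing, in $O(n^2)$ space, the values $Y_{uv}$ and representative cuts $D_{uv}$ for all pairs $u,v\in V$. The structural identity I would use is
\[
  Y_{uv}\;=\;\min_{a,b\in S}\ \lambda\!\big(\{u,a\},\{v,b\}\big),
\]
where $\lambda(\{u,a\},\{v,b\})$ is the minimum capacity of a cut placing $u,a$ on one side and $v,b$ on the other (and $+\infty$ when the two pairs intersect): ``$\le$'' holds because each such term is the capacity of a Steiner cut separating $u,v$, and ``$\ge$'' holds because in an optimal cut for $Y_{uv}$ one can choose a Steiner vertex on $u$'s side and a Steiner vertex on $v$'s side to play the roles of $a$ and $b$. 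For $|S|=2$, say $S=\{s,t\}$, this is exactly the family of constrained $(s,t)$-mincut values $\lambda(\{u,s\},\{v,t\})$ and $\lambda(\{u,t\},\{v,s\})$ that Ausiello et al.\ \cite{DBLP:journals/networks/AusielloFLR19} extract from the ancestor-tree machinery of Cheng and Hu \cite{DBLP:journals/anor/ChengH91}; there, although the pairs $(u,v)$ number $\Theta(n^2)$, the values and cuts $D_{uv}$ are encoded within $O(n^2)$ total space by the ancestor tree, with any one cut reportable in $O(|C|)$ time.

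The generalization I would carry out is to build the ancestor tree with the full Steiner set $S$ in the role of the terminal pair $\{s,t\}$ --- recursively, following the Cheng--Hu construction but computing minimum \emph{Steiner} cuts of the current (sub)graph, recording each such cut and its capacity at a tree node. The resulting rooted tree has $O(n)$ nodes; $\lambda_S$ and $C_0$ are read from its root; and the plan is to show that for every ordered pair $(u,v)$ the quantity $\min_{a,b\in S}\lambda(\{u,a\},\{v,b\})=Y_{uv}$ equals the weight of a node that is located from $(u,v)$ in $O(1)$ time using precomputed ancestor/LCA information, with the cut stored there furnishing a valid $D_{uv}$. Annotating each of the $O(n)$ nodes with the explicit list of vertices on the smaller side of its cut then gives $O(n^2)$ space and $O(|C|)$ reporting time, matching the $|S|=2$ bounds.

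The step I expect to be the real obstacle --- and the only place where passing from $|S|=2$ to an arbitrary Steiner set needs genuine work --- is establishing this last claim: that the $S$-ancestor tree faithfully represents the two-versus-two quantities $\min_{a,b\in S}\lambda(\{u,a\},\{v,b\})$ for \emph{all} pairs $u,v$, including the boundary cases where $u\in S$ or $v\in S$ (so the optimal pair $(a,b)$ may satisfy $a=u$ or $b=v$), and that in the regime $Y_{uv}>\lambda_S$ --- i.e.\ no $S$-mincut separates $u,v$ --- it still returns the correct, strictly larger value together with a cut of exactly that capacity. I expect this to follow from the laminar/nesting structure of the minimal Steiner mincuts produced along the recursion, by submodular-uncrossing arguments paralleling the $|S|=2$ analysis, but it is the part that must be verified with care.
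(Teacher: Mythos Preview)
Your high-level reduction is exactly the paper's: for a failed edge $e=(u,v)$ one needs only $\lambda_S$, a fixed $S$-mincut $C_0$, and for every pair $u,v$ the value $Y_{uv}$ (in the paper's language, the capacity of a \emph{mincut for the edge} $(u,v)$) together with a witnessing cut; the answer is then decided by comparing $Y_{uv}-w(e)$ with $\lambda_S$. The data structure you propose to build --- the Cheng--Hu ancestor tree, but with ``minimum Steiner cut'' in place of ``minimum cut'' at each recursive step --- is also precisely the paper's construction.

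Where you diverge is in how you justify correctness, and this is where the gap lies. You reformulate $Y_{uv}$ as $\min_{a,b\in S}\lambda(\{u,a\},\{v,b\})$ and then plan to argue, by uncrossing and laminarity, that the $S$-ancestor tree faithfully encodes these two-versus-two quantities --- a step you yourself flag as the real obstacle and leave unverified. The paper sidesteps this entirely by invoking a stronger form of the Cheng--Hu theorem than you seem to be using: their ancestor tree is stated for an \emph{arbitrary} real-valued function $F$ on cuts, not just the capacity function. Taking
\[
F(C)=\begin{cases} c(C) & \text{if $C$ is a Steiner cut},\\ \infty & \text{otherwise},\end{cases}
\]
the ancestor tree for $F$ is, by Cheng--Hu's own theorem, an $O(n^2)$ structure that for every pair $u,v$ returns in $O(1)$ time the minimum $F$-value of a cut separating $u,v$ --- which is exactly $Y_{uv}$ --- and in $O(|C|)$ time a witnessing cut. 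No two-versus-two identity, no new laminarity argument, no boundary-case analysis for $u\in S$ or $v\in S$ is needed; correctness is a one-line citation. Your detour through $\lambda(\{u,a\},\{v,b\})$ is therefore unnecessary, and the ``obstacle'' you identify dissolves once you use the full generality of the Cheng--Hu result.
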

The space and query time of the Sensitivity Oracle in Theorem \ref{thm : single edge sensitivity oracle using cheng and hu} match with the existing optimal results for $(s,t)$-mincut \cite{DBLP:journals/anor/ChengH91, DBLP:journals/networks/AusielloFLR19, DBLP:conf/icalp/BaswanaB24}. 
The lower bound of $\Omega(n^2\log~n)$ bits of space in \cite{DBLP:conf/icalp/BaswanaB24} is only for $|S|=2$. However, to the best of our knowledge, no lower bound is known for any $|S|>2$.
Therefore, the main question that we address in this article arises quite naturally as follows.
\begin{question} \label{question : 1}
    For undirected weighted graphs, does there exist a single edge Sensitivity Oracle for $S$-mincut that breaks the quadratic bound on space and still achieves optimal query time if $|S|>2$?
\end{question}
\subsection{Our Results} \label{sec : Our results}
A Sensitivity Oracle in a weighted graph addresses queries in a more generic way \cite{DBLP:conf/icalp/BaswanaB24}.  Given any edge $e$ and any value $\Delta$ satisfying $\Delta \ge 0$, the aim is to efficiently report the solution of a given problem after reducing the capacity of edge $e$ by $\Delta$.  In this generic setting, we first design an ${\mathcal O}(n)$ space single edge Sensitivity Oracle for global mincut that also achieves optimal query time. Now, in order to bridge the gap between the two extreme scenarios of Steiner set ($|S|=n$ and $|S|=2$) while matching their bounds, we present our main result that breaks the ${\mathcal O}(n^2)$ space bound of Theorem \ref{thm : single edge sensitivity oracle using cheng and hu}, and answers Question \ref{question : 1} in the affirmative. 

\begin{theorem}[Sensitivity Oracle for Steiner Mincut] \label{thm : main result}
    Let $G=(V,E)$ be an undirected weighted graph on $n=|V|$ vertices and $m=|E|$ edges. For any Steiner set $S$ of $G$,  
    \begin{enumerate}
        \item there is an ${\mathcal O}(n)$ space rooted tree ${\mathcal T}(G)$ that, given any edge $e\in E$ and any value $\Delta$ satisfying $0\le \Delta\le w(e)$, can report the capacity of $S$-mincut in ${\mathcal O}(1)$ time after reducing the capacity of edge $e$ by $\Delta$ and 
        
        \item  there is an ${\mathcal O}(n(n-|S|+1))$ space data structure ${\mathcal F}(G)$ that, given any edge $e\in E$ and any value $\Delta$ satisfying $0\le \Delta\le w(e)$, can report an $S$-mincut $C$ in ${\mathcal O}(|C|)$ time after reducing the capacity of edge $e$ by $\Delta$.
    \end{enumerate}
\end{theorem}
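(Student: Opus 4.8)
The plan is to reduce both parts of the theorem to encoding, for every edge $e=(x,y)\in E$, two numbers: $\alpha_e$, the least capacity of an $S$-cut to which $e$ does \emph{not} contribute, and $\beta_e$, the least capacity of an $S$-cut to which $e$ \emph{does} contribute. Lowering $w(e)$ by $\Delta$ decreases $c(C)$ by exactly $\Delta$ for every cut $C$ that $e$ contributes to and leaves $c(C)$ unchanged otherwise; since every $S$-cut is of one of these two types, the $S$-mincut capacity of the updated graph $G'$ equals $\min(\alpha_e,\ \beta_e-\Delta)$, realized either by a cheapest $S$-cut avoiding $e$ or by a cheapest $S$-cut through $e$ (whose capacity in $G'$ is $\beta_e-\Delta$). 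Note that $\min(\alpha_e,\beta_e)=\lambda_S$, so at least one of the two equals $\lambda_S$. Both parts of the theorem thus come down to storing the maps $e\mapsto\alpha_e$ and $e\mapsto\beta_e$ compactly, and, for part (2), the corresponding witness cuts as well.

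For part (1), I would realize ${\mathcal T}(G)$ as a rooted tree on $O(n)$ nodes carrying $O(n)$ numeric labels --- a Steiner-set generalization of the ancestor-tree data structure of Cheng and Hu exploited by Ausiello et al.\ for $|S|=2$ --- arranged so that $\alpha_e$ and $\beta_e$ are each recovered from $x,y$ by an $O(1)$ lowest-common-ancestor lookup. The value $\alpha_e$ is the $S$-mincut of $G$ with $x$ and $y$ identified, which I expect to read off a single Gomory--Hu-type tree of $G$: in such a tree, the edges whose removal leaves a Steiner vertex on each side form the Steiner-spanning subtree, $\lambda_S$ is their least weight, and $\alpha_e$ follows from a query over the tree path between $x$ and $y$ together with its complement. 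The genuinely delicate quantity is $\beta_e$, the cheapest $S$-cut through a prescribed edge: this is \emph{not} in general the capacity of a single Gomory--Hu tree cut, but I expect that across all edges these cuts are organized by a laminar family of $O(n)$ cuts (informally, the ancestors of $x$ and of $y$ in ${\mathcal T}(G)$), making $\beta_e$ an $O(1)$ lookup too. Then ${\mathcal T}(G)$ returns $\min(\alpha_e,\beta_e-\Delta)$ in $O(1)$ time in $O(n)$ space; the hypothesis $\Delta\le w(e)$ enters only through $w(e)-\Delta\ge 0$ and $\beta_e\ge w(e)$ (any $S$-cut crossed by $e$ has capacity at least $w(e)$), which keeps the reported value non-negative.

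For part (2), ${\mathcal F}(G)$ must in addition output an explicit $S$-mincut of $G'$ in time $O(|C|)$. A witness for $\alpha_e$ lies in the laminar family stored inside ${\mathcal T}(G)$, hence is listable in $O(|C|)$ time within $O(n)$ total space. The obstruction is a witness for $\beta_e$: the cheapest $S$-cut through $e$ need not be a tree cut, and over all $e$ these witnesses need not be laminar, so no single linear-size structure stores them all --- already for $|S|=2$ this forces $\Theta(n^2)$. The plan is to keep, besides ${\mathcal T}(G)$, one auxiliary Gomory--Hu-type tree ${\mathcal T}_u$ for each non-Steiner vertex $u\notin S$, built with Steiner set $S\cup\{u\}$ and annotated with the $V$-bipartition of each of its tree edges; a cheapest $S$-cut through an edge $e=(x,y)$ with $x\notin S$ is read from ${\mathcal T}_x$ (from the path between $x$ and $y$, restricted to edges still separating two vertices of $S$), while if $x,y\in S$ it is read from ${\mathcal T}(G)$ itself, since then every cut separating $x$ from $y$ is an $S$-cut. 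Each ${\mathcal T}_u$ with its annotations occupies $O(n)$ space, so ${\mathcal F}(G)$ occupies $O(n(n-|S|+1))$ space --- $O(n)$ when $S=V$ and $\Theta(n^2)$ when $|S|=2$, matching the two extremes --- and a query selects the relevant tree in $O(1)$, compares $\alpha_e$ with $\beta_e-\Delta$, and writes out the smaller cut in $O(|C|)$ time.

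The principal obstacle is the structural analysis of $\beta_e$ --- the cheapest $S$-cut a given edge contributes to --- uniformly over all edges, and specifically proving that these cuts and their capacities are captured by ${\mathcal T}(G)$ together with only $O(n-|S|+1)$ auxiliary trees, with the construction degenerating correctly to a single tree at $S=V$ and to the full quadratic bound at $|S|=2$. The remaining technical work --- the case analysis according to whether $e$'s endpoints lie in $S$ (both for the contraction identity behind $\alpha_e$ and for selecting the auxiliary tree answering a $\beta_e$-query), the verification that ${\mathcal T}(G)$ as described indeed yields $\min(\alpha_e,\beta_e-\Delta)$ in constant time, and the standard LCA / path-minimum / subtree-minimum engineering underlying the $O(1)$ and $O(|C|)$ query times --- is routine but must be carried out with care.
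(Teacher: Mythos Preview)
Your high-level framing is right in several places: the updated $S$-mincut capacity is indeed $\min(\lambda_S,\beta_e-\Delta)$ (so your $\alpha_e$ is superfluous --- since $\min(\alpha_e,\beta_e)=\lambda_S$, one always has $\min(\alpha_e,\beta_e-\Delta)=\min(\lambda_S,\beta_e-\Delta)$), the $O(n(n-|S|+1))$ bound does come from one $O(n)$ structure per non-Steiner vertex plus $O(n)$ for the rest, and your treatment of edges with both endpoints in $S$ via the Gomory--Hu tree is exactly what the paper does. For Part~(1) you assert but do not prove that the values $\beta_e$ are encoded by an $O(n)$ tree with LCA lookup; the paper supplies the missing construction: pick an edge $e^*$ minimizing $\beta_{e^*}$, note its mincut $C(e^*)$ is also a mincut for every edge crossing it, store $c(C(e^*))$ at the root, and recurse on $C(e^*)$ and $\overline{C(e^*)}$, yielding a full binary tree with $\beta_e$ at the LCA of the endpoints of $e$.

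The genuine gap is in Part~(2) for edges with a non-Steiner endpoint. A Gomory--Hu tree --- even one ``built with Steiner set $S\cup\{u\}$'', however that is meant --- stores for each pair only \emph{one} minimum separating cut; it carries no guarantee of containing the minimum \emph{Steiner} cut through $(x,u)$ when that differs from the overall minimum $(x,u)$-cut, and restricting to those tree-path edges that happen to separate $S$ does not recover it. The paper's route is different and hinges on an ingredient you never invoke: \emph{vitality}. It classifies edges by how many endpoints lie in $S$. For both endpoints non-Steiner it simply augments each of the $O(n-|S|)$ internal nodes of the binary tree above with the explicit cut chosen there ($O((n-|S|)n)$ total). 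For exactly one non-Steiner endpoint $u$, it proves that the \emph{nearest} mincuts for the \emph{vital} edges $(x,u)$, $x\in S$, form a laminar family on $V$ --- via an intersection lemma whose proof uses vitality essentially --- and hence fit in one $O(n)$ tree per $u$. The paper gives explicit examples showing that without the vitality restriction these nearest mincuts can cross and are not closed under intersection or union, so a purely Gomory--Hu based argument of the kind you sketch cannot succeed without an additional structural insight of this sort.
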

    For any $\epsilon\in [0,1)$, the space occupied by the single edge Sensitivity Oracle for $S$-mincut in Theorem \ref{thm : main result}(2) is subquadratic, that is ${\mathcal O}(n^{1+\epsilon})$, for $|S|=n-n^{\epsilon}+1$. Moreover, it approaches to ${\mathcal O}(n)$ as $|S|$ tends to $n$. In particular, for $|S|=n-k$, for any constant $k\ge 0$, it occupies only ${\mathcal O}(n)$ space. 
    
    Observe that our results in Theorem \ref{thm : main result} interestingly match the bounds on both space and query time for the two extreme scenarios of the Steiner set. On one extreme ($|S|=n$), it occupies ${\mathcal O}(n)$ space for global mincut. On the other extreme ($|S|=2$), it occupies ${\mathcal O}(n^2)$ space, which match the best-known existing results for $(s,t)$-mincut \cite{DBLP:journals/networks/AusielloFLR19, DBLP:journals/anor/ChengH91, DBLP:conf/icalp/BaswanaB24}. Finally, notice that the time taken by our Sensitivity Oracle to answer any query is also worst-case optimal.

We also provide lower bounds on both space and query time of Sensitivity Oracles for $S$-mincut. Our first lower bound is for reporting the capacity of $S$-mincut and our second lower bound is for reporting an $S$-mincut. 

\begin{theorem} [Lower Bound for Reporting Capacity] \label{thm : lower bound on reporting capacity}
    Let $D$ be any data structure that can report the capacity of Steiner mincut after the failure of any edge for undirected weighted graphs on $n$ vertices. Data structure $D$ must occupy $\Omega(n^2\log~n)$ bits of space in the worst case, irrespective of the query time and the size of the Steiner set. 
\end{theorem}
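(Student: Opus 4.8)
The plan is to prove an $\Omega(n^2 \log n)$ lower bound by an information-theoretic (encoding) argument: I will exhibit a family of graphs whose $S$-mincut structure under single edge failures encodes $\Omega(n^2 \log n)$ bits of information, so that any data structure answering the capacity query after an edge failure must be able to recover this information, and hence must occupy at least that many bits. Since the statement claims the bound holds irrespective of the size of the Steiner set, it suffices to prove it for the weakest case $|S|=2$, i.e. for $(s,t)$-mincut; the lower bound construction of Baswana and Bhanja~\cite{DBLP:conf/icalp/BaswanaB24} already does exactly this, so strictly speaking this theorem follows by specializing that result (every $(s,t)$-mincut is a Steiner mincut with $S=\{s,t\}$). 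But I would prefer to give a self-contained construction that makes the Steiner-set independence transparent.

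First I would fix the vertex set as $s$, $t$, and two groups $A=\{a_1,\dots,a_k\}$, $B=\{b_1,\dots,b_k\}$ with $k=\Theta(n)$, and take $S=\{s,t\}$ (or any superset, padding the remaining vertices with zero-capacity edges so they never participate in a mincut). The idea is to hang a matrix of ``secret'' capacities on the edges between $A$ and $B$: for a target matrix $M\in\{1,\dots,W\}^{k\times k}$ with $W=\mathrm{poly}(n)$, I would set $w(a_i,b_j)=M_{i,j}$, connect $s$ to every $a_i$ by a very heavy edge and every $b_j$ to $t$ by a very heavy edge, so that in the intact graph the unique min $(s,t)$-cut is $\{s\}\cup A$ (all the $A$--$B$ edges cross), of capacity $\sum_{i,j}M_{i,j}$. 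The key gadget: I would arrange the heavy edges so that after the failure of the single heavy edge $(s,a_i)$, the cheapest way to separate $s$ from $t$ changes in a way whose capacity reveals the row sum $\sum_j M_{i,j}$ (and similarly failing $(b_j,t)$ reveals a column sum). Row and column sums of a matrix with $\mathrm{poly}(n)$ entries do not by themselves pin down $\Omega(n^2\log n)$ bits, so the real construction must be more refined — typically one layers the gadget so that failing an edge indexed by $(i,j)$ isolates the single entry $M_{i,j}$, e.g. by a binary-search / prefix-sum style encoding across $\Theta(\log n)$ parallel copies, or by the ``staircase of cuts'' technique from~\cite{DBLP:conf/icalp/BaswanaB24}.

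The core steps, in order, are: (i) describe the gadget graph $G_M$ on $\Theta(n)$ vertices parameterized by a matrix $M$ with entries in a range of size $\mathrm{poly}(n)$, so the family has $2^{\Omega(n^2\log n)}$ members; (ii) prove the structural lemma that, for each index $(i,j)$, querying the data structure built for $G_M$ on the failure of a specific designated edge $e_{i,j}$ returns a value from which $M_{i,j}$ can be computed in isolation (this is where the capacities of the auxiliary heavy edges must be chosen — as geometrically separated ``weights'' — so that the new mincut capacity is an affine function of $M_{i,j}$ alone); (iii) conclude by a standard counting argument: if the data structure used $o(n^2\log n)$ bits, then two distinct matrices $M\neq M'$ in the family would receive the same encoding, yet step (ii) lets us read off all entries of $M$ from the encoding, contradicting $M\neq M'$; (iv) remark that the Steiner set plays no role because $|S|=2$ is the hardest instance and padding extends it to any $S$, giving the ``irrespective of the size of the Steiner set'' clause.

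The main obstacle is step (ii): one must design the auxiliary edge capacities so that a single edge failure cleanly exposes exactly one matrix entry (or a prefix thereof) rather than an aggregate like a row sum, while simultaneously keeping \emph{all} intended min-cuts strictly cheaper than every competing cut — in a weighted graph with $\mathrm{poly}(n)$-sized capacities this forces a careful separation-of-scales argument (assign the ``structural'' heavy edges capacities that dominate any possible swing in the secret entries, and within the secret block use a positional weighting $M_{i,j}=\sum_\ell B^{\ell} x^{(i,j)}_\ell$ for a base $B$ large enough that the $\log n$ bits of each entry are recoverable). I expect that the cleanest route is in fact to cite and instantiate Theorem from~\cite{DBLP:conf/icalp/BaswanaB24} for $|S|=2$ and then add one short paragraph observing that this is a legitimate special case of the Steiner problem and that the bound is therefore insensitive to $|S|$; I would present the explicit gadget only if a fully self-contained paper is desired, in which case the scale-separation bookkeeping is the part that needs the most care.
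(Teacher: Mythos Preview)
Your proposal is in the right spirit (an encoding argument via a matrix of secret capacities) but is significantly overcomplicated compared to the paper's proof, and you have overlooked the simplification that makes your step~(ii) trivial.

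The paper's construction is entirely direct: split the $n$ vertices into $L$ and $R$ of sizes $\lfloor n/2\rfloor$ and $\lfloor (n+1)/2\rfloor$; make each of $L$ and $R$ into a clique of infinite-capacity edges; and put an edge of capacity $M[i,j]$ between $a_i\in L$ and $b_j\in R$. The bipartition $(L,R)$ is then the \emph{only} cut of finite capacity, hence the unique Steiner mincut (for any Steiner set with at least one vertex on each side). Failing the matrix edge $(a_i,b_j)$ itself drops the mincut capacity from $\lambda$ to $\lambda-M[i,j]$, so a single query recovers $M[i,j]$ outright. No staircases, no scale separation, no positional weighting are needed: you fail the matrix edge, not a heavy auxiliary edge. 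Your detour through failing $(s,a_i)$ or $(b_j,t)$ manufactures an artificial obstacle (only row and column sums are visible) which you then propose to repair with heavy machinery; but in your own $s$--$A$--$B$--$t$ gadget the same one-line argument already works if you fail $(a_i,b_j)$ instead.

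On the ``irrespective of $|S|$'' clause: the paper handles this directly in the construction rather than by reduction to $|S|=2$ and padding. For any prescribed Steiner-set size, it places roughly half the Steiner vertices in $L$ and half in $R$, so $(L,R)$ is always a Steiner cut and the argument goes through verbatim. This is cleaner than your padding route and demonstrates that a hard instance exists at \emph{every} value of $|S|$, which is what the theorem actually asserts; your approach of proving it for $|S|=2$ and extending by padding is correct but less direct, and one must be slightly careful that the padded vertices do not create a cheaper Steiner cut.
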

For reporting the capacity of $S$-mincut, Theorem \ref{thm : lower bound on reporting capacity} provides a generalization of the existing lower bound on both space and time for $(s,t)$-mincut by Baswana and Bhanja \cite{DBLP:conf/icalp/BaswanaB24}. However, for reporting an $S$-mincut, no lower bound on space or query time for single edge Sensitivity Oracle was known till date, even for the two extreme scenarios of Steiner set. So, the following theorem is the first lower bound for reporting an $S$-mincut after the failure of any edge.   
\begin{theorem}[Lower Bound for Reporting Cut] \label{thm : lower bound on reporting cut}
    Let $D$ be any data structure that can report a Steiner mincut $C$ in ${\mathcal O}(|C|)$ time after the failure of any edge for undirected weighted graphs on $n$ vertices. Data structure $D$ must occupy $\Omega(n^2)$ bits of space in the worst case, irrespective of the size of Steiner set.
\end{theorem}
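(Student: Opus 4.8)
The plan is to prove the bound by an information‑theoretic encoding argument, generalising the $\Omega(n^2\log n)$‑bit construction for $(s,t)$‑mincut capacity of Baswana and Bhanja~\cite{DBLP:conf/icalp/BaswanaB24}. I would build a family $\mathcal{G}=\{G_M\}$ of undirected weighted graphs on $n$ vertices, indexed by Boolean matrices $M\in\{0,1\}^{k\times k}$ with $k=\Theta(n)$, such that any data structure $D$ that answers single‑edge failure queries by reporting a Steiner mincut must, by itself, determine $M$ from $G_M$. Since $|\mathcal{G}|=2^{\Theta(n^2)}$ and the map $M\mapsto D(G_M)$ will be shown to be injective, the worst‑case description length of $D$ is $\Omega(n^2)$ bits. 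Note that ``failure of an edge'' is the special case $\Delta=w(e)$ of the reduce‑capacity query model, and that our argument only uses correctness of the reported cut (which has size $O(n)$, so an $O(|C|)$‑time oracle can indeed output it and we can read it in $O(n)$ time); hence the bound applies a fortiori to the restricted class of oracles in the statement.

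The core gadget has vertex set $\{s,t\}\cup A\cup B$ with $A=\{a_1,\dots,a_k\}$, $B=\{b_1,\dots,b_k\}$ and Steiner set $S=\{s,t\}$ (padding with a constant number of vertices to reach exactly $n$ if needed). The edges incident to $s$ and $t$, together with a ``backbone'' on $A\cup B$, are assigned large capacities, while $a_ib_j$ is present with unit capacity if and only if $M_{ij}=1$. The capacities are to be chosen so that (i) in $G_M$ itself the minimum Steiner cut is a fixed trivial cut $C_0$ of capacity $\lambda$, and (ii) for every $i\in[k]$ there is a designated edge $e_i$ incident to $a_i$ such that in \emph{every} Steiner mincut $C$ of $G_M\setminus e_i$ a vertex $b_j$ lies on the same side of $C$ as $s$ if and only if $M_{ij}=1$ (or consistently its negation). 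Given (i)--(ii), recovery works as follows: for each $i\in[k]$ issue the query ``fail $e_i$'' to $D$, read the reported Steiner mincut $C$, normalise by looking at which side of $C$ contains $s$, and read off row $i$ of $M$ from $C\cap B$; after $k$ queries all of $M$, hence $G_M$, is determined, so $M\mapsto D(G_M)$ is injective and $\max_M|D(G_M)|\ge\log_2 2^{\Theta(n^2)}=\Omega(n^2)$ bits.

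I expect the design of the capacities in (ii) to be the main obstacle. We need the ``structured'' cuts that encode rows of $M$ to be strictly cheaper than all other Steiner cuts of $G_M\setminus e_i$ — in particular cheaper than cuts that peel off small subsets of $A\cup B$, which is why the backbone is needed — while at the same time keeping those $M$‑dependent cuts minimum (so the backbone must not be made too heavy). Moreover the encoding has to be \emph{robust}: since the oracle may return an arbitrary Steiner mincut, it is not enough that some canonical minimum cut exposes row $i$; every minimum cut of $G_M\setminus e_i$ must do so. The route I would take is a staircase/threshold assignment of capacities (so that, after $e_i$ fails, the marginal gain or loss of moving $b_j$ across the cut has a sign dictated by $M_{ij}$ and by nothing else), combined with a standard submodular exchange argument on cut capacities to show that every minimum cut places each $b_j$ on the side forced by $M_{ij}$.

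Finally, to obtain the bound ``irrespective of the size of the Steiner set'', fix a target Steiner‑set size $\sigma$ with $2\le\sigma\le n$; choosing $k=\Theta(n)$ with $2+2k=n$ (up to rounding) we have $\sigma-2\le 2k$, so we keep the same gadget and simply declare, in addition to $s$ and $t$, an arbitrary set of $\sigma-2$ vertices of $A\cup B$ to be Steiner. The cuts $C_0$ and the row‑encoding cuts separate $s$ from $t$, hence remain Steiner cuts; the only new danger is that a cut separating two of the newly‑Steiner vertices of $A\cup B$ (e.g. isolating a single such vertex) could become competitive, which is prevented by raising the backbone capacities just enough — exactly the $\sigma=n$ (global mincut) instance being the extreme case. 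This also yields, as a byproduct, the first $\Omega(n^2)$‑bit lower bound for a single‑edge Sensitivity Oracle for global mincut that reports a cut, complementing Theorem~\ref{thm : main result} and mirroring Theorem~\ref{thm : lower bound on reporting capacity}.
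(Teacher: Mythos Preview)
Your proposal has a genuine gap at the crucial step: you never actually construct the gadget satisfying (ii). You acknowledge this yourself (``I expect the design of the capacities in (ii) to be the main obstacle'') and offer only a vague ``staircase/threshold assignment'' together with a ``standard submodular exchange argument'' in lieu of a construction. The robustness requirement is severe: for every $M\in\{0,1\}^{k\times k}$ and every $i$, \emph{every} Steiner mincut of $G_M\setminus e_i$ must place each $b_j$ on the $s$-side if and only if $M_{ij}=1$. It is far from clear that a single backbone of heavy edges can be calibrated so that, simultaneously for all $2^{k^2}$ matrices $M$, the mincut after failing $e_i$ is forced to make $k$ independent binary placements, each governed by a different bit of $M$; a submodular exchange shows only that swapping a vertex across a mincut does not decrease capacity, not that it strictly increases it in the direction you need for every $j$ and every $M$. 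This is the heart of your argument and it is missing.

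The paper sidesteps this difficulty entirely by a reduction that extracts only \emph{one} bit per query rather than a full row. It first proves, by a direct bipartite encoding (edge weights $0$/$1$ between two infinite-weight cliques), an $\Omega(n^2)$-bit lower bound for the easier problem of merely \emph{deciding whether the Steiner-mincut capacity changed} after an edge failure. It then reduces cut-reporting to change-detection: given any graph $H$, add one new Steiner vertex $s$ joined to $H$ by a single edge of weight $\lambda'=(\lambda+\alpha)/2$ (with $\alpha=\max_e\{c(C(e))-w(e)\}$ over vital edges of $H$), so that $C_m=V_H$ is the \emph{unique} Steiner mincut of the augmented graph $G_s(H)$, and $C_m$ ceases to be the mincut after failing an edge $e\neq(s,a)$ precisely when $e$ is vital in $H$. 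An oracle that reports a Steiner mincut $C$ then lets you test ``$C=C_m$?'' in $O(|C|)$ time and thereby answer the change-detection query on $H$. No delicate control over the \emph{shape} of the reported mincut is required---only over whether it equals $C_m$---and the underlying bipartite construction already handles arbitrary Steiner-set size by placing at least one Steiner vertex on each side.
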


\begin{remark}
    It is assumed in Theorem \ref{thm : main result} that the query edge $e$ is present in $G$ and the change in capacity (that is, $\Delta$) provided with the query is at most $w(e)$.
    So, the lower bounds of $\Omega(n^2)$ bits of space in Theorem \ref{thm : lower bound on reporting capacity} and Theorem \ref{thm : lower bound on reporting cut} do not violate the sub-quadratic space data structures in Theorem \ref{thm : main result}. Moreover, the assumption in Theorem \ref{thm : main result} seems practically justified. This is because, as discussed in \cite{DBLP:conf/icalp/BaswanaB24}, in the real world, the capacity of an edge reduces only if the edge actually exists in the graph, and furthermore, it can reduce by a value at most the capacity of the edge.
\end{remark}
\subsection{Related Works}
In the seminal works by Dinitz and Vainshtein \cite{dinitz1994connectivity, dinitz1995locally, DBLP:journals/siamcomp/DinitzV00}, they designed an ${\mathcal O}(\min\{n\lambda_S,m\})$ space data structure, known as \textit{Connectivity Carcass}, for storing all $S$-mincuts of an unweighted undirected graph. 
It can report an $S$-mincut in ${\mathcal O}(m)$ time and its capacity in ${\mathcal O}(1)$ time. 
Baswana and Pandey \cite{baswana2022sensitivity}, using Connectivity Carcass as the foundation, designed an ${\mathcal O}(n)$ space single edge Sensitivity Oracle for $S$-mincut in undirected unweighted graphs that also reports an $S$-mincut in ${\mathcal O}(n)$ time. Their result matches the bounds on both space and time for the existing result on the two extreme scenarios of $S$-mincut, namely, $(s,t)$-mincut \cite{DBLP:journals/mp/PicardQ80} and global mincut \cite{dinitz1976structure}. The result in \cite{baswana2022sensitivity} also acts as the foundation of single edge Sensitivity Oracles for all-pairs mincut \cite{baswana2022sensitivity}

For directed weighted graphs, Baswana and Bhanja \cite{DBLP:conf/icalp/BaswanaB24} presented a single edge Sensitivity Oracle for $(s,t)$-mincuts that matches both space and query time of the undirected weighted graph results \cite{DBLP:journals/anor/ChengH91, DBLP:journals/networks/AusielloFLR19}.

\noindent
Providing a generalization from the two extreme scenarios of the Steiner set ($S=V$ and $|S|=2$) is also addressed for various problems, namely, computing Steiner mincut \cite{dinitz1994connectivity, DBLP:journals/siamcomp/DinitzV00, cole2003fast, he2024cactus, DBLP:journals/corr/abs-1912-11103}, 
Steiner connectivity augmentation and splitting-off \cite{cen2023steiner}, construction of a cactus graph for Steiner mincuts \cite{DBLP:journals/siamcomp/DinitzV00, he2024cactus}.

\subsection{Organization of the Article}
This article is organized as follows. Section \ref{sec : Preliminaries} contains the basic preliminaries. We first construct an ${\mathcal O}(n^2)$ space single edge Sensitivity Oracle for Steiner mincut in Section \ref{sec : simple sensitivity oracle}. In Section \ref{sec : simple data structure}, we design an ${\mathcal O}(n)$ space single edge Sensitivity Oracle for reporting only the capacity of Steiner mincut. A linear space single edge Sensitivity Oracle for global mincut is designed in Section \ref{sec : global mincut}. Our main result on the subquadratic space single edge Sensitivity Oracle for Steiner mincut is developed in Section \ref{sec : weighted edge failure oracle}. The lower bound results are given in Section \ref{sec : lower bound}. Finally, we conclude in Section \ref{sec : conclusion} with couple of open problems.

\section{Preliminaries} \label{sec : Preliminaries}
In this section, we define a set of basic notations and properties of cuts.
Let $G\setminus \{e\}$ denote the graph obtained from $G$ after the removal of edge $e$.
We now define the concept of crossing cuts,  introduced by Dinitz, Karzanov, and Lomonosov \cite{dinitz1976structure}. 

\begin{definition}[crossing cuts]
    A pair of cuts $C,C'$ in $G$ is said to be crossing if each of the four sets $C\cap C'$, $C\setminus C'$, $C'\setminus C$, and $\overline{C\cup C'}$ is nonempty. 
\end{definition}
The following concept of mincut for an edge and vital edges are to be used crucially in the construction of our data structure.
\begin{definition} [Mincut for an edge]
    A Steiner cut $C$ is said to be a mincut for an edge $e$ if $e$ contributes to $C$ and $c(C)\le c(C')$ for every Steiner cut $C'$ in which $e$ contributes. 
\label{def : mincut for an edge}
\end{definition}
\begin{definition} [Vital Edge] \label{def : vital edge}
    Let $e$ be an edge and $C$ be a mincut for edge $e$. Edge $e$ is said to be a vital edge if its removal reduces the capacity of Steiner mincut, that is, $c(C)-w(e)<\lambda_S$.
\end{definition}
We now define a \textit{special} mincut for an edge.
 \begin{definition}[Nearest mincut for an edge] \label{def : nearest mincut}
    A mincut $C$ for an edge $e=(x,y)\in E$ with $x\in C$ is said to be a nearest mincut for $e$ if there is no mincut $C'$ for $e$ such that $x\in C'$ and $C'\subset C$. The set of all nearest mincuts for an edge $e$ is denoted by $N(e)$.
\end{definition}
\begin{lemma}[Sub-modularity of Cuts (Problem 48(a,b) in \cite{Lovasz-book})] \label{submodularity of cuts}
    For any two sets $A,B\subset V$,
    \begin{enumerate}
        \item $c(A)+c(B)\ge c(A\cap B)+c(A\cup B)$ and
        \item $c(A)+c(B)\ge c(A\setminus B)+c(B\setminus A)$.
    \end{enumerate}
\end{lemma}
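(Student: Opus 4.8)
The plan is to prove both inequalities by a counting argument carried out one edge at a time. Recall from the preliminaries that $c(X)$ is the sum of $w(e)$ over all edges $e$ that contribute to $X$, and that $e=(u,v)$ contributes to $X$ exactly when precisely one of $u,v$ lies in $X$. Since every capacity $w(e)$ is nonnegative, it suffices to establish, for each individual edge $e$, the inequality between the number of sets among $\{A,B\}$ to which $e$ contributes and the number of sets among $\{A\cap B,\,A\cup B\}$ to which $e$ contributes (for part~(1)); summing these per-edge inequalities with weights $w(e)$ then gives the lemma. So the first step is this reduction to a purely combinatorial, edge-local claim, which is legitimate precisely because the weights are nonnegative.

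The second step is to verify the edge-local claim by a finite case analysis. Whether a fixed edge contributes to any of $A$, $B$, $A\cap B$, $A\cup B$ depends only on which of the four pairwise disjoint regions $A\cap B$, $A\setminus B$, $B\setminus A$, $\overline{A\cup B}$ contain its two endpoints; equivalently, it suffices to check the inequality on a graph with at most four vertices, obtained by contracting each region. I would go through the (unordered) choices of a pair of regions for the endpoints $u,v$: if both endpoints lie in the same region, neither side of the inequality counts the edge; the only configuration creating a strict gap is one endpoint in $A\setminus B$ and the other in $B\setminus A$, where the edge contributes to both $A$ and $B$ but to neither $A\cap B$ nor $A\cup B$; in all remaining configurations both sides count the edge the same number of times. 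This proves part~(1).

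For part~(2), rather than repeat the case analysis I would deduce it from part~(1) using only the elementary identity $c(X)=c(\overline{X})$ (an edge contributes to $X$ iff it contributes to $\overline{X}$). Applying part~(1) to the pair $A,\overline{B}$ gives
\[
c(A)+c(\overline{B})\ \ge\ c(A\cap\overline{B})+c(A\cup\overline{B}).
\]
Since $A\cap\overline{B}=A\setminus B$ and $A\cup\overline{B}=\overline{B\setminus A}$, we have $c(A\cup\overline{B})=c(B\setminus A)$ and $c(\overline{B})=c(B)$, and substituting yields $c(A)+c(B)\ge c(A\setminus B)+c(B\setminus A)$. Alternatively, part~(2) admits exactly the same four-region case check as part~(1), with $A\setminus B$ and $B\setminus A$ playing the roles of $A\cap B$ and $A\cup B$.

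There is no genuine obstacle in this argument. The only two points that need care are the reduction to a per-edge inequality, which relies on nonnegativity of the capacities, and organizing the four-region case enumeration so that the symmetric cases (e.g.\ swapping the roles of $A$ and $B$, or of the two endpoints of the edge) are grouped rather than checked redundantly.
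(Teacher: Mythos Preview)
Your argument is correct and is the standard elementary proof of submodularity: reduce to a per-edge inequality using nonnegativity of the weights, then do the four-region case check, and derive part~(2) from part~(1) via $c(X)=c(\overline{X})$. The paper itself does not give a proof of this lemma at all; it is stated in the preliminaries as a classical fact with a citation to Lov\'asz's book, so there is nothing to compare against.
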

For undirected weighted graphs, Gomory and Hu \cite{gomory1961multi} designed the following tree structure, which is widely known as \textsc{Gomory Hu Tree}.
\begin{theorem} [\textsc{Gomory Hu Tree} \cite{gomory1961multi}] \label{thm : Gomory Hu tree}
    For any undirected weighted graph $G=(V,E)$ on $n=|V|$ vertices, there is an ${\mathcal O}(n)$ space undirected weighted tree ${\mathcal T}_{GH}$ on vertex set $V$ that satisfies the following property. Let $u,v$ be any pair of vertices in $G$. A cut of the least capacity separating $u,v$ in ${\mathcal T}_{GH}$ is also a cut of the least capacity separating $u,v$ in $G$. 
    Moreover, ${\mathcal T}_{GH}$ can report a cut $C$ of the least capacity separating $u,v$ in ${\mathcal O}(|C|)$ time and its capacity in ${\mathcal O}(1)$ time. 
\end{theorem}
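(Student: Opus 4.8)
The plan is to reprove the classical Gomory--Hu construction and then equip the resulting tree with simple auxiliary structures. I would build ${\mathcal T}_{GH}$ by $n-1$ minimum-cut computations while maintaining a \emph{partial tree} whose nodes form a partition of $V$ into ``supernodes''. Initially there is a single supernode equal to $V$. At a generic step, pick a supernode $X$ with $|X|\ge 2$ and two vertices $s,t\in X$; form the \emph{contracted graph} $G_X$ in which, for every partial-tree edge incident to $X$, the whole subtree hanging on the far side of that edge is contracted to a single vertex (the vertices of $X$ are kept individually); compute a minimum $(s,t)$-cut $C$ in $G_X$; split $X$ into $X\cap C$ and $X\setminus C$, join them by an edge of weight $c_{G_X}(C)$, and re-attach each former neighbour subtree of $X$ to the part of $X$ lying on its own side of $C$. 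After $n-1$ such steps every supernode is a singleton, which gives a weighted tree on $V$ with $n-1$ edges, hence $O(n)$ space.

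Correctness rests on the standard \emph{uncrossing} lemma, which I would deduce directly from Lemma~\ref{submodularity of cuts}: if $C$ is a minimum $(s,t)$-cut and $D$ is a minimum $(x,y)$-cut with $s,t$ on the same side of $D$, then the appropriate one of $C\cap D$, $C\setminus D$ is still a minimum $(s,t)$-cut and does not cross $D$ --- submodularity forces the relevant chain of inequalities to collapse to equalities. Iterating this over all cuts recorded so far shows that some minimum $(s,t)$-cut of $G$ does not cross any of the subtree boundaries we contract along, so it survives in $G_X$ with the same value; thus $\lambda_{G_X}(s,t)=\lambda_G(s,t)$ and the cut found lifts to a minimum $(s,t)$-cut of $G$. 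Consequently one maintains the invariant: for every partial-tree edge $e=(P,Q)$ there are representatives $p\in P$, $q\in Q$ such that the bipartition of $V$ obtained by deleting $e$ is a minimum $(p,q)$-cut \emph{in $G$} whose $G$-capacity equals the weight stored on $e$. Termination is immediate, since each step strictly increases the number of supernodes. This is the step I expect to be the main obstacle: carefully verifying, via a further application of uncrossing, that the split-and-reattach operation preserves this invariant both for the new edge and for every old edge whose representative may have ended up on the wrong side.

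For the query property, write $\lambda_G(u,v)$ for the capacity of a minimum cut of $G$ separating $u$ and $v$; one first checks the elementary inequality $\lambda_G(u,v)\ge \min\{\lambda_G(u,z),\lambda_G(z,v)\}$ for any $z$, since a minimum $(u,v)$-cut must separate $z$ from whichever of $u,v$ lies on its opposite side. In the final tree the invariant says each edge $(a,b)$ has weight $\lambda_G(a,b)$ and that deleting it yields a minimum $(a,b)$-cut of $G$. Fix $u,v$, let $e^*$ be a minimum-weight edge on the $u$--$v$ tree path, and let $C$ be the component containing $u$ of the tree after deleting $e^*$. Deleting any edge of the path separates $u$ from $v$, so $\lambda_G(u,v)\le w(e^*)$; and iterating the displayed inequality along the path gives $\lambda_G(u,v)\ge$ the minimum of the path-edge weights $=w(e^*)$. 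Hence $C$ is a minimum $(u,v)$-cut of $G$, and it is exactly the cut obtained by removing the bottleneck edge of the $u$--$v$ path in ${\mathcal T}_{GH}$, which is what the ``least-capacity cut separating $u,v$ in ${\mathcal T}_{GH}$'' means.

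It remains to meet the stated time bounds. Preprocess ${\mathcal T}_{GH}$ for bottleneck (minimum-weight-edge) queries on tree paths using an $O(n)$-space structure that answers such a query in $O(1)$ time (e.g.\ a Cartesian tree / heavy-path decomposition on the edge weights); then $w(e^*)$, hence $\lambda_G(u,v)$, is reported in $O(1)$ time. To output an actual minimum $(u,v)$-cut $C$ in $O(|C|)$ time, after locating $e^*$ run two graph searches in parallel from its two endpoints (neither search using $e^*$), dovetailing them step by step and stopping as soon as one side is fully explored; reporting that side costs $O(\min\{|C|,n-|C|\})$, which is $O(|C|)$ for the cut actually reported. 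This completes the construction together with the correctness and complexity claims.
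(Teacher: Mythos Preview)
The paper does not prove this theorem at all: Theorem~\ref{thm : Gomory Hu tree} is stated in the preliminaries as a known result and simply attributed to Gomory and Hu~\cite{gomory1961multi}, with no argument given. So there is no ``paper's own proof'' to compare against.

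Your proposal is a faithful and essentially correct reconstruction of the classical Gomory--Hu argument, together with the standard auxiliary structures needed to meet the stated query bounds. The construction, the uncrossing step via submodularity (Lemma~\ref{submodularity of cuts}), the invariant, and the bottleneck-edge characterisation of $\lambda_G(u,v)$ are all handled correctly. The $O(1)$ capacity query via a path-minimum structure and the $O(|C|)$ cut reporting via dovetailed search from the two endpoints of the bottleneck edge are exactly the right ideas; since you report the smaller side, the time bound $O(\min\{|C|,n-|C|\})=O(|C|)$ holds for the cut actually output, which is what the statement requires. One small point worth tightening if you write this out in full: the $O(1)$ path-minimum query needs an $O(n)$-space, $O(1)$-query structure, and the Cartesian-tree-plus-LCA route you allude to does achieve this, but you should be explicit that LCA with $O(n)$ preprocessing and $O(1)$ queries is being invoked.
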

\begin{definition} [Laminar family of cuts] \label{def : laminar family}
    A set of cuts ${\mathcal L}$ is said to form a laminar family if, for any pair of cuts $C_1,C_2\in {\mathcal L}$, exactly one of the three is true -- $C_1\cap C_2$ is an empty set, $C_1\subseteq C_2$, and $C_2\subseteq C_1$.
\end{definition}
\paragraph*{A rooted tree ${\mathcal T}_{\mathcal L}$ representing a laminar family ${\mathcal L}$:} For any given laminar family ${\mathcal L}$ of cuts, we can construct a rooted tree ${\mathcal T}_{\mathcal L}$ as follows. 
 Let $x$ be any vertex in $G$.
Let $\phi_{\mathcal L}(x)$ denote the unique node in ${\mathcal T}_{\mathcal L}$ to which vertex $x$ is mapped and let $\textsc{SubTree}(x)$ denote the set of all vertices mapped to the subtree rooted at $\phi_{\mathcal L}(x)$ (including $\phi_{\mathcal L}(x)$) in ${\mathcal T}_{\mathcal L}$. 
The set \textsc{SubTree}($x$) defines the unique minimal cut in ${\mathcal L}$ that contains $x$. If $\phi_{\mathcal L}(x)$ is a child of the root node in ${\mathcal T}_{\mathcal L}$, then  \textsc{SubTree}($x$) is a maximal cut in ${\mathcal L}$. For any pair of vertices $x,y$ in $G$, let $C_1=\textsc{SubTree}($x$)$ and $C_2=\textsc{SubTree}($y$)$. Then, 
$\phi_{\mathcal L}(x)$ is a child of $\phi_{\mathcal L}(y)$ in ${\mathcal T}_{\mathcal L}$ if and only if $C_1$ is a maximal proper subset of $C_2$ in ${\mathcal L}$. A vertex in $G$ is mapped to the root node of ${\mathcal T}_{\mathcal L}$ if no cut in ${\mathcal L}$ contains it.  This leads to the following lemma.
\begin{lemma} \label{lem : tree for a laminar family}
    For any laminar family ${\mathcal L}$ of cuts, there exists an ${\mathcal O}(n)$ space rooted tree ${\mathcal T}_{\mathcal L}$ such that a cut $C\in {\mathcal L}$ if and only if there exists a node $\mu$ (except root node) of ${\mathcal T}_{\mathcal L}$ and $C$ is the set of vertices mapped to the subtree rooted at $\mu$ (including node $\mu$).
\end{lemma}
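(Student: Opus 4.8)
The plan is to verify that the rooted tree $\mathcal{T}_{\mathcal{L}}$ described in the paragraph preceding the statement actually has the two claimed properties. First I would make the construction precise: create one node $\mu_C$ for every cut $C\in\mathcal{L}$ together with a root node $r$, declare the parent of $\mu_C$ to be $\mu_{C'}$ where $C'$ is the smallest cut of $\mathcal{L}$ properly containing $C$ (and $r$ if no such cut exists), and map each vertex $x$ of $G$ by $\phi_{\mathcal{L}}$ to $\mu_{C_x}$, where $C_x$ is the smallest cut of $\mathcal{L}$ containing $x$ (and to $r$ if $x$ lies in no cut of $\mathcal{L}$). The only fact about $\mathcal{L}$ needed to see that these objects are well defined is the immediate consequence of Definition \ref{def : laminar family} that any two members of a laminar family which intersect are nested: hence the cuts of $\mathcal{L}$ that properly contain a fixed cut $C$ (they pairwise intersect, since each contains the nonempty set $C$) form a chain under inclusion and so have a unique minimal element when the chain is nonempty, and likewise the cuts of $\mathcal{L}$ containing a fixed vertex $x$ form a chain and have a unique minimal element $C_x$. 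Since each parent step strictly enlarges the associated cut, $\mathcal{T}_{\mathcal{L}}$ is a finite rooted tree.

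Next I would establish the structural bridge between the tree and the family: for $C,C'\in\mathcal{L}$, the node $\mu_{C'}$ is a weak descendant of $\mu_C$ if and only if $C'\subseteq C$. The forward direction follows because parent steps enlarge cuts; for the converse, if $C'\subsetneq C$ then the smallest cut properly containing $C'$ is itself contained in $C$ (as $C$ is one such cut and the containing cuts form a chain), so the parent of $\mu_{C'}$ is a weak descendant of $\mu_C$, and iterating, the parent chain out of $\mu_{C'}$ reaches $\mu_C$. Using this bridge I would prove the key identity $\textsc{SubTree}(\mu_C)=C$ for every $C\in\mathcal{L}$: if $x\in\textsc{SubTree}(\mu_C)$ then $\phi_{\mathcal{L}}(x)=\mu_{C_x}$ is a weak descendant of $\mu_C$, so $C_x\subseteq C$ and hence $x\in C_x\subseteq C$; conversely, if $x\in C$ then, since the cuts containing $x$ form a chain and $C$ is one of them, the minimal such cut $C_x$ satisfies $C_x\subseteq C$, so $\mu_{C_x}$ is a weak descendant of $\mu_C$ and $x\in\textsc{SubTree}(\mu_C)$. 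The claimed equivalence is now immediate: if $C\in\mathcal{L}$ take $\mu=\mu_C$; conversely every non-root node of $\mathcal{T}_{\mathcal{L}}$ is of the form $\mu_C$ for some $C\in\mathcal{L}$, and then $\textsc{SubTree}(\mu)=C\in\mathcal{L}$.

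For the space bound I would invoke the standard fact that a laminar family of nonempty subsets of an $n$-element ground set has $O(n)$ members; one can also see it directly from $\mathcal{T}_{\mathcal{L}}$, since the minimal cuts of $\mathcal{L}$ (the leaves of $\mathcal{T}_{\mathcal{L}}$) are pairwise disjoint and nonempty and hence number at most $n$, the number of nodes with two or more children is smaller than the number of leaves, and every node with exactly one child has at least one vertex of $G$ mapped to it (distinct such nodes receiving distinct vertices), giving at most $n$ of those as well. Thus $\mathcal{T}_{\mathcal{L}}$ has $O(n)$ nodes and $O(n)$ edges, and storing $\phi_{\mathcal{L}}$ on the $n$ vertices of $G$ costs another $O(n)$, for $O(n)$ space in total.

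I expect the main point requiring care to be the descendant-versus-containment bridge and the resulting identity $\textsc{SubTree}(\mu_C)=C$; once the nestedness property of laminar families is used to pin down the uniqueness of minimal super-cuts and of $C_x$, the remaining arguments are bookkeeping, and the space bound is routine given the $O(n)$ bound on the size of a laminar family.
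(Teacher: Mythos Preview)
Your proposal is correct and follows the same construction the paper sketches in the paragraph immediately preceding the lemma; the paper does not give a separate proof but simply states that the lemma follows from that construction. Your write-up supplies the details the paper omits---in particular the descendant/containment bridge, the identity $\textsc{SubTree}(\mu_C)=C$, and the $O(n)$ node count via the leaf/branching/single-child decomposition---all of which are sound.
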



\section{An ${\mathcal O}(n^2)$ Space Sensitivity Oracle for Steiner Mincut} \label{sec : simple sensitivity oracle}
In this section, we first provide the limitations of the previous results in unweighted graphs. 
Later, we design an ${\mathcal O}(n^2)$ space single edge Sensitivity Oracle for $S$-mincut. 
\paragraph*{Limitations of the existing results} 
For unweighted graphs, the following property is used crucially to design every existing single edge Sensitivity Oracle. \\
\noindent
\textsc{Property $P_1$:} \textit{Failure of an edge $e$ reduces the capacity of $S$-mincut if and only if edge $e$ contributes to an $S$-mincut.}\\ Dinitz and Vainshtein \cite{dinitz1994connectivity, dinitz1995locally, DBLP:journals/siamcomp/DinitzV00} designed the following quotient graph, known as the flesh graph, of $G$. Flesh graph is obtained by contracting every pair of vertices in $G$ that are not separated by any $S$-mincut. The construction ensures that every pair of vertices in flesh is separated by an $S$-mincut of $G$. 
Every vertex in $G$ is mapped to a unique vertex in flesh. Therefore, the endpoints of any edge $e$ are mapped to different vertices in flesh if and only if failure of $e$ reduces capacity of $S$-mincut. Thus, by \textsc{Property $P_1$}, storing the ${\mathcal O}(n)$ space mapping of vertices of $G$ to the vertices of flesh is sufficient to answer the capacity of $S$-mincut in ${\mathcal O}(1)$ time after the failure of any edge. In addition, 
Dinitz and Vainshtein \cite{dinitz1994connectivity, dinitz1995locally, DBLP:journals/siamcomp/DinitzV00} showed that the flesh graph can be used to report an $S$-mincut after the failure of any edge in $G$ in ${\mathcal O}(m)$ time. 

\noindent
Flesh graph is one of the three components of Connectivity Carcass designed by Dinitz and Vainshtein \cite{dinitz1994connectivity, dinitz1995locally, DBLP:journals/siamcomp/DinitzV00}; the other two components are \textit{Skeleton} and \textit{Projection mapping}.
Recently, Baswana and Pandey \cite{baswana2022sensitivity}, exploiting the properties of all the three components of Connectivity Carcass established an \textit{ordering} among the vertices of flesh graph. By using \textsc{Property $P_1$}, they showed that this ordering, along with Skeleton and Projection mapping, can be used to design an ${\mathcal O}(n)$ space single edge Sensitivity Oracle for $S$-mincut in unweighted graphs. This can report an $S$-mincut in ${\mathcal O}(n)$ time.

Unfortunately, for weighted graphs, it is easy to observe that multiple edges can exist that do not contribute to any $S$-mincut but failure of each of them reduces the capacity of $S$-mincut. Hence, in weighted graphs, \textsc{Property $P_1$} no longer holds. So, the existing structures are not suitable for handling the failure of weighted edges. It requires us to explore the structure of mincuts for every edge whose both endpoints belong to the same node of flesh graph. Moreover, the capacity of mincut for these edges can be quite \textit{large} compared to the capacity of $S$-mincut. 
\subsection*{Sensitivity Oracle for Steiner Mincut: ${\mathcal O}(n^2)$ Space} 
We now give a proof of Theorem \ref{thm : single edge sensitivity oracle using cheng and hu} by designing an ${\mathcal O}(n^2)$ space single edge Sensitivity Oracle for $S$-mincut. Let $F$ be any arbitrary real-valued function defined on cuts. Cheng and Hu \cite{DBLP:journals/anor/ChengH91} presented the following result. There is an ${\mathcal O}(n^2)$ space data structure, known as Ancestor tree, that, given any pair of vertices $u$ and $v$, reports a cut $C$ of the least capacity ($F$-value) separating $u,v$ in ${\mathcal O}(|C|)$ time and the capacity of $C$ in ${\mathcal O}(1)$ time.

In order to design Ancestor tree for Steiner cuts, similar to $(s,t)$-mincuts given by Ausiello et al. \cite{DBLP:journals/networks/AusielloFLR19}, we define function $F$ for Steiner cuts as follows. 
\begin{equation} \label{eq : s,t cuts}
\text{For a set $C\subset V$,~}F(C)=\begin{cases}
            c(C), \text{ if $C$ is a Steiner cut} \\
            \infty, \quad \text{otherwise.}
     \end{cases}
\end{equation}
Let $e=(x,y)$ be any failed edge. Ancestor tree can report a cut $C$ of the least capacity separating $x$ and $y$ in ${\mathcal O}(|C|)$ time and its capacity in ${\mathcal O}(1)$ time. By Equation \ref{eq : s,t cuts}, $C$ is also a Steiner cut separating $x$ and $y$. Therefore, by Definition \ref{def : mincut for an edge}, $C$ is a mincut for edge $e$. Hence, the new capacity of $S$-mincut is either $c(C)-w(e)$ or remains $\lambda_S$ if $c(C)-w(e)\ge \lambda_S$. By storing the capacities of all edges of $G$, we can determine whether $c(C)-w(e)< \lambda_S$ in ${\mathcal O}(1)$ time. If the capacity of $S$-mincut reduces, then we can report $C$ in ${\mathcal O}(|C|)$ time; otherwise report an $S$-mincut $C_m$ in ${\mathcal O}(|C_m|)$ time.  
This completes the proof of Theorem \ref{thm : single edge sensitivity oracle using cheng and hu}.
\section{A Sensitivity Oracle for Reporting Capacity of Steiner Mincut} \label{sec : simple data structure}
In this section, we address the problem of reporting the capacity of $S$-mincut after reducing the capacity of an edge $e\in E$ by a value $\Delta$ satisfying $0 \le \Delta\le w(e)$. We denote this query by $\textsc{cap}(e,\Delta)$. Observe that a trivial data structure for answering query $\textsc{cap}$ occupies ${\mathcal O}(m)$ space if we store the capacity of mincut for each vital edge in $G$. For $|S|=2$ in directed weighted graphs, Baswana and Bhanja \cite{DBLP:conf/icalp/BaswanaB24} designed an ${\mathcal O}(n)$ space data structure that implicitly stores all vital edges for $(s,t)$-mincut and the capacity of their mincuts. 
We extend their approach from vital edges to all edges in undirected weighted graphs 
in order to establish the following.  For any Steiner set $S$, with $2\le |S|\le n$, there exists an ${\mathcal O}(n)$ space data structure that can answer query $\textsc{cap}$ in ${\mathcal O}(1)$ time.

Let ${\mathcal E}\subseteq E$ and $V({\mathcal E})$ denote the smallest set of vertices such that, for each edge $(u,v)\in {\mathcal E}$, both $u$ and $v$ belongs to $V({\mathcal E})$. We first design an ${\mathcal O}(|V({\mathcal E})|)$ space rooted full binary tree for answering query $\textsc{cap}$ for all edges in ${\mathcal E}$. In Section \ref{sec : weighted edge failure oracle}, this construction also helps in designing a compact structure for reporting mincuts for a \textit{special} subset of edges. Later in this section, we show an extension to ${\mathcal O}(n)$ space rooted full binary tree for answering query $\textsc{cap}$ for all edges in $E$.

Let $C(e)$ denote a mincut for an edge $e$. Note that $C(e)$ is a Steiner cut as well (Definition \ref{def : mincut for an edge}). We say that an edge $e$ belongs to a set $U\subset V$ if both endpoints of $e$ belong to $U$.
Suppose $C(e)$ is a mincut for an edge $e$ belonging to $V({\mathcal E})$ such that, for every other edge $e'\in V({\mathcal E})$, $c(C(e))\le c(C(e'))$.
Let $e'$ be an edge from $V({\mathcal E})$. If $e'$ contributes to $C(e)$, it follows from the selection of edge $e$ that $C(e)$ is a Steiner cut of the least capacity to which $e'$ contributes. Hence, $C(e)$ is a mincut for edge $e'$ as well. This ensures that $C(e)$ partitions the set of all edges belonging to $V({\mathcal E})$ into three sets -- edges of $V({\mathcal E})$ belonging to $C(e)\cap V({\mathcal E})$, edges of $V({\mathcal E})$ belonging to $\overline{C(e)}\cap V({\mathcal E})$, and edges of $V({\mathcal E})$ that contribute to $C(e)$. This leads to a recursive procedure (Algorithm \ref{alg : hierarchy tree}) for the construction of a tree ${\mathcal T}$. Each internal node $\mu$ of tree ${\mathcal T}$ has three fields -- $(i)$ $\mu.\text{cap}$ stores the capacity of mincut for the selected edge at $\mu$, $(ii)$ $\mu.\textsc{left}$ points to the left child of $\mu$, and $(iii)$ $\mu.\textsc{right}$ points to the right child of $\mu$. Each vertex $u\in V({\mathcal E})$ is mapped to a leaf node of ${\mathcal T}$, denoted by ${\mathbb L}(u)$.  We invoke Algorithm \ref{alg : hierarchy tree} with $U=V({\mathcal E})$.

\begin{algorithm}[h]
\caption{Construction of Tree ${\mathcal T}$}
\label{alg : hierarchy tree}
\begin{algorithmic}[1]
\Procedure{\textsc{SteinerTreeConstruction}$(U)$}{}
    \State Create a node $\nu$; 
    \If {there is no edge that belongs to $U$}{ 
        \For{each vertex $x\in U$} {${\mathbb L}(x)\gets \nu$};
        \EndFor}
    \Else
       \State Select an edge $e\in U$ such that $c(C(e))\le c(C(e'))$ $\forall~\text{edge}~ {e'}\in U$;
        \State Assign $\nu.\text{cap}\gets c(C(e))$;
        \State $\nu.\textsc{left} \gets$ \textsc{SteinerTreeConstruction}($U\cap C(e)$);
        \State $\nu.\textsc{right} \gets$ \textsc{SteinerTreeConstruction}($U\cap \overline{C(e)}$);
    \EndIf
    \State \Return $\nu$;
\EndProcedure
\end{algorithmic}
\end{algorithm}
\noindent
Observe that tree ${\mathcal T}$ resulting from Algorithm \ref{alg : hierarchy tree} is a full binary tree. There are ${\mathcal O}(|V({\mathcal E})|)$ leaf nodes. So, the space occupied by the tree is ${\mathcal O}(|V({\mathcal E})|)$.

\paragraph*{Answering Query $\textsc{cap}(e=(x,y),\Delta):$} Suppose edge $e$ belongs to ${\mathcal E}$. Let $\mu$ be the \textsc{lca} of  ${\mathbb L}(x)$ and ${\mathbb L}(y)$. It follows from the construction of tree ${\mathcal T}$ that field $\mu.\text{cap}$ at node $\mu$ in ${\mathcal T}$ stores the capacity of mincut for edge $e$. Therefore, if $\mu.\text{cap}-\Delta<\lambda_S$, then we report $\mu.\text{cap}$ as the new capacity of S-mincut; otherwise, the capacity of S-mincut does not change. It leads to the following lemma.

\begin{lemma} \label{lem : tree with k endpoints of vital edges}
    Let $G=(V,E)$ be an undirected weighted graph on $n=|V|$ vertices. For any Steiner set $S\subseteq V$ and a set of edges ${\mathcal E}\subseteq E$, there is an ${\mathcal O}(|V({\mathcal E})|)$ space full binary tree ${\mathcal T}_{\mathcal E}$ that, given any edge $e\in {\mathcal E}$ and any value $\Delta$ satisfying $0\le \Delta\le w(e)$, can report the capacity of S-mincut in ${\mathcal O}(1)$ time after reducing the capacity of edge $e$ by $\Delta$. 
\end{lemma}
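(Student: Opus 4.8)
The plan is to verify that the tree $\mathcal{T}$ produced by Algorithm~\ref{alg : hierarchy tree} has the three claimed properties: it has $\mathcal{O}(|V(\mathcal{E})|)$ size, the \textsc{lca} of $\mathbb{L}(x)$ and $\mathbb{L}(y)$ stores exactly the capacity of a mincut for the edge $e=(x,y)$, and the query is answered in $\mathcal{O}(1)$ time. The size bound is immediate from the structure of the recursion: each recursive call creates one node, the two recursive subcalls are made on the disjoint vertex sets $U \cap C(e)$ and $U \cap \overline{C(e)}$, and the leaves of $\mathcal{T}$ are in bijection with (blocks of) $V(\mathcal{E})$; since $\mathcal{T}$ is a full binary tree with $\mathcal{O}(|V(\mathcal{E})|)$ leaves it has $\mathcal{O}(|V(\mathcal{E})|)$ nodes total. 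For the $\mathcal{O}(1)$ query time, I would invoke a standard linear-space \textsc{lca} data structure on $\mathcal{T}$ built during preprocessing, so that $\mu = \textsc{lca}(\mathbb{L}(x),\mathbb{L}(y))$ is found in constant time, after which a single comparison $\mu.\mathrm{cap} - \Delta < \lambda_S$ decides the answer.

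The heart of the argument is the correctness claim: if $e=(x,y) \in \mathcal{E}$ and $\mu$ is the \textsc{lca} of $\mathbb{L}(x)$ and $\mathbb{L}(y)$, then $\mu.\mathrm{cap}$ equals the capacity of a mincut for $e$. I would prove this by induction on the depth of the recursion. At the top call with $U = V(\mathcal{E})$, Algorithm~\ref{alg : hierarchy tree} selects an edge $e^\star \in U$ whose mincut $C(e^\star)$ has the minimum capacity among all mincuts $C(e')$ for edges $e'$ belonging to $U$. The key structural observation, already stated in the paragraph preceding the algorithm, is that $C(e^\star)$ partitions the edges belonging to $U$ into three classes: those lying entirely inside $C(e^\star)$, those lying entirely inside $\overline{C(e^\star)}$, and those that contribute to $C(e^\star)$. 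If $e$ is in the third class, then $x$ and $y$ are separated by $C(e^\star)$, so $\mathbb{L}(x)$ and $\mathbb{L}(y)$ land in different subtrees, making the root the \textsc{lca}; and by minimality of $c(C(e^\star))$ together with the fact that $C(e^\star)$ is itself a Steiner cut to which $e$ contributes, $C(e^\star)$ is a mincut for $e$, so $\mu.\mathrm{cap} = c(C(e^\star))$ is correct. If instead $e$ lies entirely inside $C(e^\star)$ (resp. $\overline{C(e^\star)}$), then both $\mathbb{L}(x)$ and $\mathbb{L}(y)$ descend into the left (resp. right) subtree, and the \textsc{lca} is computed inside that subtree; I then apply the induction hypothesis to the recursive call on $U \cap C(e^\star)$ (resp. $U \cap \overline{C(e^\star)}$). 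The one point requiring care here is that a mincut for $e$ within the recursive subproblem must still be a genuine mincut for $e$ in $G$ — but the algorithm never restricts the set of competing cuts; the quantity $C(e')$ in the recursion always refers to a mincut for $e'$ in $G$, and the recursion only controls which edges are considered at each node, not which cuts. So the induction hypothesis transfers cleanly.

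Once $\mu.\mathrm{cap} = c(C(e))$ is established, the final step is the semantics of the query $\textsc{cap}(e,\Delta)$: reducing $w(e)$ by $\Delta$ lowers the capacity of every Steiner cut to which $e$ contributes by exactly $\Delta$ and leaves all other Steiner cuts unchanged. Hence the minimum over all Steiner cuts after the update is $\min\{\lambda_S,\ c(C(e)) - \Delta\}$, since $c(C(e))$ is the least capacity among Steiner cuts to which $e$ contributes and $\lambda_S$ bounds the rest. This is precisely what the query procedure reports, which completes the proof of Lemma~\ref{lem : tree with k endpoints of vital edges}.

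I expect the main obstacle to be stating the induction cleanly enough that the recursive invariant — ``$\mu.\mathrm{cap}$ at the \textsc{lca} equals the capacity of a mincut for $e$ in the original graph $G$'' — is unambiguous, since the recursion shrinks the vertex set $U$ but the notion of mincut for an edge is always with respect to all of $G$. Making explicit that $C(e)$ in Algorithm~\ref{alg : hierarchy tree} always denotes a mincut in $G$, not in any contracted or restricted graph, removes the apparent circularity and makes the induction routine.
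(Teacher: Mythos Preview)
Your proposal is correct and follows essentially the same approach as the paper: the paper derives the lemma directly from Algorithm~\ref{alg : hierarchy tree} together with the partitioning observation stated just before it, and then the query procedure described just after, without writing out an explicit induction. Your write-up simply makes explicit the induction on recursion depth and the need for a constant-time \textsc{lca} structure that the paper leaves implicit, so there is no substantive difference in the argument.
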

We now answer query $\textsc{cap}(e,\Delta)$ where edge $e\in E$. Observe that edge $e$ in query $\textsc{cap}$ can be either a vital or a nonvital edge. In order to determine whether an edge is vital or not, we design a full binary tree ${\mathcal T}_E$ by invoking Algorithm \ref{alg : hierarchy tree} with $U=V$  since ${\mathcal E}=E$. Let us denote the tree by ${\mathcal T}(G)$. By Lemma \ref{lem : tree with k endpoints of vital edges}, the size of tree ${\mathcal T}(G)$ is ${\mathcal O}(n)$. It is now easy to observe that an edge $e$ is a vital edge in graph $G$ if and only if the capacity of the Steiner mincut in graph $G\setminus \{e\}$ is $\mu.\text{cap}-w(e)<\lambda_S$, where node $\mu$ is the \textsc{lca}$({\mathbb L}(x), {\mathbb L}(y))$. This leads to the following lemma.
\begin{lemma} \label{lem : report cap and determine vitality}
    Let $G=(V,E)$ be an undirected weighted graph on $n=|V|$ vertices. For any Steiner set $S\subseteq V$, there is an ${\mathcal O}(n)$ space full binary tree ${\mathcal T}(G)$ that, given any edge $e\in E$ and any value $\Delta$ satisfying $0\le \Delta\le w(e)$, can report the capacity of $S$-mincut in ${\mathcal O}(1)$ time after reducing the capacity of edge $e$  by $\Delta$. 
\end{lemma}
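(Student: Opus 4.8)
The plan is to bootstrap from \Cref{lem : tree with k endpoints of vital edges} by applying it with the special choice ${\mathcal E}=E$, so that $V({\mathcal E})=V$ and the resulting tree ${\mathcal T}_E$ has size ${\mathcal O}(n)$; we rename this tree ${\mathcal T}(G)$. The only thing that \Cref{lem : tree with k endpoints of vital edges} does not immediately hand us is that the query $\textsc{cap}(e,\Delta)$ must now be answered for \emph{all} edges of $G$, including those whose mincut has large capacity (equivalently, the nonvital edges), so I first need to spell out that the construction of \Cref{alg : hierarchy tree} with $U=V$ is legitimate and that the correctness argument given for edges in ${\mathcal E}$ goes through verbatim once ${\mathcal E}=E$.

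\medskip

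\noindent\textbf{Step 1: instantiate the tree.} I would invoke Algorithm~\ref{alg : hierarchy tree} with $U=V$. Since every edge of $G$ now belongs to $U$, the recursion is driven by picking, at each internal node, an edge $e\in U$ minimizing $c(C(e))$ over all edges belonging to the current vertex block $U$; the block is then split by $C(e)$ into $U\cap C(e)$, $U\cap\overline{C(e)}$, and the leaf recursion bottoms out when a block contains no edge (in particular isolated/contracted pieces). The key structural fact, already argued in the passage preceding \Cref{lem : tree with k endpoints of vital edges}, is that if another edge $e'$ in $U$ contributes to $C(e)$ then $C(e)$ is \emph{also} a mincut for $e'$; this is what makes the three-way partition well defined and the tree a full binary tree of ${\mathcal O}(|V|)={\mathcal O}(n)$ nodes. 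So the space bound is immediate from that discussion.

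\medskip

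\noindent\textbf{Step 2: answering $\textsc{cap}(e,\Delta)$ and deciding vitality.} Given a query edge $e=(x,y)\in E$, compute $\mu=\textsc{lca}({\mathbb L}(x),{\mathbb L}(y))$ in ${\mathcal O}(1)$ time using a standard linear-space LCA structure on ${\mathcal T}(G)$. By the construction, $\mu$ is the first node at which $x$ and $y$ are separated, and the edge selected at $\mu$ has a mincut $C$ separating $x$ and $y$ with $c(C)=\mu.\text{cap}$; since $e$ contributes to that cut and the selection was of minimum $c(C(\cdot))$ among edges of the block, $\mu.\text{cap}$ is exactly the capacity of a mincut for $e$ (\Cref{def : mincut for an edge}). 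After reducing $w(e)$ by $\Delta$, the only Steiner cut whose capacity can drop below $\lambda_S$ is one through which $e$ contributes, and the cheapest such is $C$; hence the new capacity of the $S$-mincut is $\min\{\lambda_S,\ \mu.\text{cap}-\Delta\}$. Reporting this value, and in particular testing $\mu.\text{cap}-w(e)<\lambda_S$ to decide whether $e$ is vital (\Cref{def : vital edge}), both take ${\mathcal O}(1)$ time. I should also keep $\lambda_S$ stored with the tree (it is $\mu.\text{cap}$ for the root or can be read off directly), which costs ${\mathcal O}(1)$ additional space.

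\medskip

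\noindent\textbf{Main obstacle.} The only genuinely delicate point is justifying that $\mu.\text{cap}$ is a mincut capacity for \emph{every} edge $e$, not merely for the edges that were explicitly selected during the recursion — i.e.\ that the minimality used at node $\mu$ transfers to an arbitrary contributing edge of the block. This is exactly the claim ``if $e'$ contributes to $C(e)$ then $C(e)$ is a mincut for $e'$'' established in the paragraph before \Cref{lem : tree with k endpoints of vital edges}, applied inductively down the tree; I would make that induction explicit (the block containing both $x$ and $y$ at depth of $\mu$ is precisely the set of vertices in the subtree, and $e$ never leaves this block before $\mu$, so its mincut capacity equals the value recorded when the block was first split, namely $\mu.\text{cap}$). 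Everything else is a routine invocation of \Cref{lem : tree with k endpoints of vital edges} with ${\mathcal E}=E$ plus the ${\mathcal O}(1)$-time LCA machinery, which yields the stated ${\mathcal O}(n)$ space and ${\mathcal O}(1)$ query time.
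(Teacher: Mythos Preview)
Your proposal is correct and follows essentially the same route as the paper: instantiate \Cref{lem : tree with k endpoints of vital edges} with ${\mathcal E}=E$ (so $V({\mathcal E})=V$ and the tree has ${\mathcal O}(n)$ nodes), then answer $\textsc{cap}(e,\Delta)$ by reading $\mu.\text{cap}$ at the \textsc{lca} of the leaves of the endpoints and returning $\min\{\lambda_S,\mu.\text{cap}-\Delta\}$. Your explicit justification that $\mu.\text{cap}$ equals $c(C(e))$ for \emph{every} edge $e$ (not only the selected ones) is a welcome elaboration of what the paper asserts in one line; the two-sided inequality you sketch---$\mu.\text{cap}\le c(C(e))$ from the minimality in Step~7 and $c(C(e))\le \mu.\text{cap}$ because $e$ contributes to the splitting cut at $\mu$---is exactly the right argument.
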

Lemma \ref{lem : report cap and determine vitality} completes the proof of Theorem \ref{thm : main result}(1).

\section{An ${\mathcal O}(n)$ Space Sensitivity Oracle for Global Mincut} \label{sec : global mincut}
The well-known $(s,t)$-mincut is one extreme scenario of $S$-mincut when $|S|=2$. In weighted graphs, designing a single edge Sensitivity Oracle for $(s,t)$-mincut has been addressed quite extensively \cite{DBLP:journals/networks/AusielloFLR19, DBLP:journals/anor/ChengH91, DBLP:conf/icalp/BaswanaB24}. Moreover, each of them occupies ${\mathcal O}(n^2)$ space. However, to this day, no nontrivial single edge Sensitivity Oracle exists for global mincut, which is the other extreme scenario of $S$-mincut when $|S|=n$. We now present the first single edge Sensitivity Oracle for global mincut that occupies only ${\mathcal O}(n)$ space and achieves optimal query time.

Let $\lambda_V$ be the capacity of global mincut. Given any edge $e$, we want to determine the capacity of mincut for edge $e$ for Steiner set $S=V$.
Observe that, for $S=V$, every cut in the graph is a Steiner cut (or global cut). Exploiting this insight, we can state the following interesting relation between global mincut and all-pairs mincuts (or $(u,v)$-mincut, for every $u,v\in V$). 
\begin{lemma} \label{lem : relation to gomory hu tree}
    For an edge $(u,v)$, $C$ is a cut of the least capacity that separates $u,v$ if and only if $C$ is a mincut for edge $(u,v)$.
\end{lemma}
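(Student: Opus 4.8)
The plan is to prove both directions of the equivalence using the fact that for $S = V$ every cut is a Steiner cut, so the notion of ``mincut for an edge $(u,v)$'' (Definition~\ref{def : mincut for an edge}) reduces to: a cut to which $(u,v)$ contributes, of minimum capacity among all such cuts. Since the edge $(u,v)$ contributes to a cut $C$ exactly when $C$ separates $u$ and $v$, ``mincut for $(u,v)$'' and ``minimum-capacity cut separating $u,v$'' are talking about the same collection of cuts --- we just need to check that minimizing over this collection gives the same objects, which is essentially immediate once we unwind the definitions.

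First I would fix an edge $e = (u,v) \in E$. For the forward direction, suppose $C$ is a cut of the least capacity separating $u,v$. Then $u,v$ lie on opposite sides of $C$, so by the definition of contribution, $e$ contributes to $C$. Now take any other cut $C'$ to which $e$ contributes; then $C'$ separates $u$ and $v$, so by the minimality of $C$ among cuts separating $u,v$ we have $c(C) \le c(C')$. Since (for $S=V$) $C$ is trivially a Steiner cut, this verifies both requirements of Definition~\ref{def : mincut for an edge}, so $C$ is a mincut for $e$.

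For the reverse direction, suppose $C$ is a mincut for $e=(u,v)$. By Definition~\ref{def : mincut for an edge}, $e$ contributes to $C$, hence $C$ separates $u$ and $v$. Let $C'$ be any cut separating $u,v$; then $e$ contributes to $C'$, so by the second clause of Definition~\ref{def : mincut for an edge} we get $c(C) \le c(C')$. Thus $C$ has the least capacity among all cuts separating $u,v$, as desired.

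There is no real obstacle here: the only thing to watch is that the argument uses $S = V$ in exactly one place --- to guarantee that every cut separating $u$ and $v$ is automatically a Steiner cut, so the restriction ``$C'$ is a Steiner cut'' in Definition~\ref{def : mincut for an edge} imposes nothing beyond separating $u,v$. For general $S$ this equivalence would fail, but for global mincut it is exactly what lets us later invoke the \textsc{Gomory Hu Tree} (Theorem~\ref{thm : Gomory Hu tree}) to read off a mincut for any edge, and hence the capacity $\min\{\lambda_V,\ c(C) - w(e)\}$ after the failure of $e$.
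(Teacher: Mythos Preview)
Your proof is correct and follows exactly the intended approach: the paper does not spell out a formal proof of this lemma but simply notes that ``for $S=V$, every cut in the graph is a Steiner cut,'' which is precisely the one ingredient you use to collapse Definition~\ref{def : mincut for an edge} to ``minimum-capacity cut separating $u,v$.'' Your write-up is a clean unpacking of that observation.
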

By Theorem \ref{thm : Gomory Hu tree}, for every pair of vertices $u,v$ in $G$, \textsc{Gomory Hu Tree} (Theorem \ref{thm : Gomory Hu tree}) stores a cut of the least capacity separating $u,v$. By Lemma \ref{lem : relation to gomory hu tree}, it follows that, for $S=V$, \textsc{Gomory Hu Tree} stores a mincut for every edge in $G$. Hence, it acts as a single edge Sensitivity Oracle for global mincut and
 can report a mincut $C$ for any given edge $e$ in ${\mathcal O}(|C|)$ time and its capacity in ${\mathcal O}(1)$ time. Therefore, after reducing $w(e)$ by a value $\Delta$, if $c(C)-\Delta<\lambda_V$, we can report a global mincut and its capacity optimally using ${\mathcal O}(n)$ space. This leads to the following result. 
\begin{theorem} [Sensitivity Oracle for Global Mincut] \label{thm : global mincut}
    For any undirected weighted graph $G=(V,E)$ on $n=|V|$ vertices, there is an ${\mathcal O}(n)$ space data structure that, given any edge $e$ in $G$ and any value $\Delta$ satisfying $0\le \Delta\le w(e)$, can report the capacity of global mincut in ${\mathcal O}(1)$ time and a global mincut $C$ in ${\mathcal O}(|C|)$ time  after reducing the capacity of edge $e$ by $\Delta$. 
\end{theorem}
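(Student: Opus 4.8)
The plan is to prove Theorem~\ref{thm : global mincut} using a single \textsc{Gomory Hu Tree} ${\mathcal T}_{GH}$ of $G$ together with one precomputed global mincut. The crucial observation is that when $S=V$ every nonempty proper subset of $V$ is a Steiner cut (Definition~\ref{def : steiner cut}); hence for an edge $e=(u,v)$ the Steiner cuts to which $e$ contributes are precisely the cuts separating $u$ and $v$, so by Definition~\ref{def : mincut for an edge} a mincut for $e$ is exactly a least-capacity cut separating $u$ and $v$ --- this is Lemma~\ref{lem : relation to gomory hu tree}. Since ${\mathcal T}_{GH}$ stores a least-capacity cut separating $u,v$ for \emph{every} pair $u,v$ and, by Theorem~\ref{thm : Gomory Hu tree}, can report such a cut $C$ in $O(|C|)$ time and its capacity in $O(1)$ time, it thereby stores and reports a mincut for every edge of $G$. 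I would additionally precompute, in $O(n)$ time and space, the value $\lambda_V$ and a fixed global mincut $C^\ast$ of $G$, obtained from a globally minimum-weight edge of ${\mathcal T}_{GH}$ (its two tree-components are the two sides of $C^\ast$).

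Given a query $(e=(x,y),\Delta)$ with $0\le \Delta\le w(e)$, query ${\mathcal T}_{GH}$ with the pair $x,y$ to obtain a mincut $C$ for $e$ and its capacity $c(C)$ in $O(1)$ time (and in $O(|C|)$ time if the cut itself is wanted). The claim is that the capacity of the global mincut of $G$ after decreasing $w(e)$ by $\Delta$ equals $\min\{\lambda_V,\ c(C)-\Delta\}$: every cut to which $e$ does not contribute is unaffected and hence still has capacity at least $\lambda_V$, while every cut to which $e$ contributes loses exactly $\Delta$, and among the latter $C$ had the smallest original capacity, so the cheapest of them now has capacity $c(C)-\Delta$ (the same argument already used for Steiner cuts in Section~\ref{sec : simple sensitivity oracle}). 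Thus the new capacity is reported in $O(1)$ time by comparing $c(C)-\Delta$ with $\lambda_V$. For a cut: if $c(C)-\Delta<\lambda_V$, output $C$; otherwise output $C^\ast$. In the latter case $C^\ast$ is still a global mincut of the modified graph, by a one-line check: if $e$ contributed to $C^\ast$ then $c(C)\le c(C^\ast)=\lambda_V$, forcing $c(C)-\Delta\le\lambda_V-\Delta$, which together with $c(C)-\Delta\ge\lambda_V$ gives $\Delta=0$, so nothing changed. Reporting either cut takes $O(|C|)$ (resp.\ $O(|C^\ast|)$) time, and the whole data structure --- ${\mathcal T}_{GH}$ plus the stored $\lambda_V$ and $C^\ast$ --- occupies $O(n)$ space.

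There is essentially no conceptual obstacle once Lemma~\ref{lem : relation to gomory hu tree} is in hand; the two points needing minor care are (a) the formula $\min\{\lambda_V,\,c(C)-\Delta\}$ for the post-failure capacity, which relies only on the fact that decreasing $w(e)$ alters the capacity of exactly those cuts separating $x$ from $y$, each by the same amount $\Delta$, and (b) that the $O(1)$-capacity / $O(|C|)$-cut query promised by Theorem~\ref{thm : Gomory Hu tree} is realized by a standard $O(n)$-space tree data structure for bottleneck (path-minimum) edges, so that the query times do not degrade to $O(\log n)$ or to the length of the tree path. Neither point is difficult, so the proof is short.
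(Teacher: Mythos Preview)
Your proposal is correct and follows essentially the same approach as the paper: use Lemma~\ref{lem : relation to gomory hu tree} to identify a mincut for any edge with a least-capacity separating cut, read such a cut and its capacity off the \textsc{Gomory Hu Tree} via Theorem~\ref{thm : Gomory Hu tree}, and compare $c(C)-\Delta$ with $\lambda_V$. You are simply more explicit than the paper about the ``otherwise'' branch (storing a fixed $C^\ast$ and verifying it remains a global mincut when $c(C)-\Delta\ge\lambda_V$), which the paper leaves implicit.
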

Now, for both extreme scenarios of Steiner mincuts, we have a single edge Sensitivity Oracle. Interestingly, the Sensitivity Oracle for global mincut achieves better than quadratic space. 
Therefore, the question that arises is how to generalize these results to any Steiner set.


\section{Sensitivity Oracle for Steiner Mincut: Breaking Quadratic Bound} \label{sec : weighted edge failure oracle}
In this section, we address the problem of reporting an $S$-mincut after reducing the capacity of any given edge $e$ by any given value $\Delta$ satisfying $0<\Delta\le w(e)$. We denote this query by $\textsc{cut}(e,\Delta)$. Our objective is to design a data structure that breaks ${\mathcal O}(n^2)$ bound on space for efficiently answering query \textsc{cut}, if $|S|>2$. A simple data structure can be designed by augmenting tree ${\mathcal T}(G)$ in Theorem \ref{thm : main result}(1) as follows. For each internal node $\mu$ of tree ${\mathcal T}(G)$, Algorithm \ref{alg : hierarchy tree} selects an edge $e$ in Step 7 and stores the capacity of mincut $C(e)$ for edge $e$ in $\mu.\text{cap}$. Observe that if we augment node $\mu$ with $C(e)$, then it helps in answering query $\textsc{cut}$ as well. However, the augmented tree occupies ${\mathcal O}(n^2)$ space, which defeats our objective. 

For global mincut ($S=V$), observe that \textsc{Gomory Hu Tree} essentially acts as a data structure that stores at least one mincut for every edge quite compactly. 
To design a more compact data structure for answering query \textsc{cut} for $S$-mincut compared to Theorem \ref{thm : single edge sensitivity oracle using cheng and hu}, we take an approach of designing a data structure that can compactly store at least one mincut for every edge.  

We begin by a \textit{classification} of all edges of graph $G$. This classification not only helps in combining the approaches taken for $(s,t)$-mincut and global mincut but also provides a way to design a compact data structure for efficiently answering query \textsc{cut} for any Steiner set $S$. 

\paragraph*{A Classification of All Edges:} An edge $e_1$ in $G$ belongs to 
\begin{itemize}
    \item Type-1 if both endpoints of $e_1$ belong to $V\setminus S$.
    \item  Type-2 if both endpoints of $e_1$ belong to $S$.
    \item Type-3 if exactly one endpoint of $e_1$ belongs to $S$.
\end{itemize}

\noindent
Given any edge $e_1$, we can classify $e_1$ into one of the above-mentioned three types in ${\mathcal O}(1)$ time using sets $V$ and $S$. Note that edges from Type-1 allow us to extend an approach for $(s,t)$-mincut given by Baswana and Bhanja \cite{DBLP:conf/icalp/BaswanaB24}. Similarly, edges from Type-2 help in extending the approach used for global mincut.  However, the \textbf{main challenge} arises in designing a data structure for compactly storing a mincut for all edges from Type-3. We now design a compact data structure for efficiently answering query \textsc{cut} for each type of edges separately.



\subsection{An ${\mathcal O}((n-|S|)n)$ Space Data Structure for All Edges from Type-1}
 In this section, we design a data structure for answering \textsc{cut} for all edges from Type-1.  Each edge from Type-1 has both endpoints in the set $V\setminus S$. The number of vertices in $V\setminus S$ is $n-|S|$. Therefore, trivially storing a mincut for every edge would occupy ${\mathcal O}((n-|S|)^2n)$ space, which is ${\mathcal O}(n^3)$ for $|S|=k$ for any constant $k\ge 2$. Exploiting the fact that the number of distinct endpoints of all edges from Type-1 is at most $n-|S|$, we design an ${\mathcal O}(n(n-|S|))$ space data structure for all edges from Type-1 using Algorithm \ref{alg : hierarchy tree} and Lemma \ref{lem : tree with k endpoints of vital edges} as follows.
 
 Let $E_1$ be the set of all edges from Type-1. It follows from Lemma \ref{lem : tree with k endpoints of vital edges} that, using Algorithm \ref{alg : hierarchy tree}, it is possible to design a rooted full binary tree ${\mathcal T}_{E_1}$ occupying ${\mathcal O}(n-|S|)$ space for answering query $\textsc{cap}(e,\Delta)$ when edge $e$ is from Type-1. 
 We augment each internal node $\mu$ of ${\mathcal T}_{E_1}$ with a mincut for the edge selected by Algorithm \ref{alg : hierarchy tree} (in Step 7) while processing node $\mu$. The resulting structure occupies ${\mathcal O}((n-|S|)n)$ space and acts as a data structure for answering query $\textsc{cut}$ for all edges from Type-1. Hence the following lemma holds.
\begin{lemma} [Sensitivity Oracle for Type-1 Edges] \label{lem : type 1 result}
    For any Steiner set $S\subseteq V$, there is an ${\mathcal O}((n-|S|)n)$ space data structure that, given any edge $e$ from Type-1 and any value $\Delta$ satisfying $0\le \Delta\le w(e)$, can report an $S$-mincut $C$ in ${\mathcal O}(|C|)$ time after reducing the capacity of edge $e$ by $\Delta$. 
\end{lemma}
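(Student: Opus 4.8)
The plan is to convert the full binary tree ${\mathcal T}_{E_1}$ of Lemma \ref{lem : tree with k endpoints of vital edges} --- instantiated with ${\mathcal E}=E_1$, the set of all Type-1 edges --- into an oracle for query $\textsc{cut}$ by augmenting each internal node with one explicit mincut. Since both endpoints of every Type-1 edge lie in $V\setminus S$, we have $|V(E_1)|\le n-|S|$, so ${\mathcal T}_{E_1}$ has ${\mathcal O}(n-|S|)$ nodes, and in particular ${\mathcal O}(n-|S|)$ internal nodes. When Algorithm \ref{alg : hierarchy tree} is run on $U=V(E_1)$ and an internal node $\mu$ selects an edge $e_\mu$ in Step 7, I would additionally store at $\mu$ an explicit description (say, the list of vertices of the smaller side) of a mincut $C(e_\mu)$ for $e_\mu$; this costs ${\mathcal O}(n)$ per internal node, hence ${\mathcal O}((n-|S|)n)$ overall. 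I would also store $\lambda_S$ together with one fixed $S$-mincut $C_m$ of $G$; this is an extra ${\mathcal O}(n)$ that is absorbed into ${\mathcal O}((n-|S|)n)$ whenever a Type-1 edge exists (which forces $n-|S|\ge 2$; if no Type-1 edge exists the claim is vacuous). Finally I would run the standard linear-time preprocessing on ${\mathcal T}_{E_1}$ so that $\textsc{lca}$ queries cost ${\mathcal O}(1)$.

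The correctness of the query procedure is the part I would write out carefully. Given a Type-1 edge $e=(x,y)$ and $\Delta$ with $0\le\Delta\le w(e)$, let $\mu=\textsc{lca}({\mathbb L}(x),{\mathbb L}(y))$ and let $U$ be the vertex set processed at $\mu$. Both $x$ and $y$ belong to $U$, so $e$ is an edge belonging to $U$, and Step 7 guarantees $c(C(e_\mu))\le c(C(e))$. Conversely, ${\mathbb L}(x)$ and ${\mathbb L}(y)$ descend into different children of $\mu$, i.e.\ one of $x,y$ lies in $C(e_\mu)$ and the other in $\overline{C(e_\mu)}$, so $e$ contributes to $C(e_\mu)$; by Definition \ref{def : mincut for an edge} this gives $c(C(e))\le c(C(e_\mu))$. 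Hence $c(C(e_\mu))=c(C(e))=\mu.\text{cap}$, and the stored set $C(e_\mu)$ is itself a mincut for $e$. Now reducing $w(e)$ by $\Delta$ decreases the capacity of every $S$-cut to which $e$ contributes by exactly $\Delta$ and leaves every other $S$-cut unchanged; since $C(e_\mu)$ has minimum capacity among the former and $C_m$ has minimum capacity ($=\lambda_S$) among the latter, the $S$-mincut capacity of the modified graph equals $\min\{\lambda_S,\ \mu.\text{cap}-\Delta\}$. The query procedure therefore compares $\mu.\text{cap}-\Delta$ with $\lambda_S$ in ${\mathcal O}(1)$ time, outputs $C(e_\mu)$ if $\mu.\text{cap}-\Delta<\lambda_S$ and $C_m$ otherwise, and in either case reads off the reported cut $C$ in ${\mathcal O}(|C|)$ time; by the displayed formula this $C$ is indeed an $S$-mincut after reducing $w(e)$ by $\Delta$.

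I do not expect a genuine obstacle, since this is a light augmentation of an already-proven data structure and both non-routine ingredients have already appeared in the paper: the pairing of the Step-7 selection rule with the observation that every edge contributing to $C(e_\mu)$ also has $C(e_\mu)$ as a mincut is exactly the partitioning argument that justifies Algorithm \ref{alg : hierarchy tree} in Section \ref{sec : simple data structure}, while the formula $\min\{\lambda_S,\ c(C(e))-\Delta\}$ for the updated $S$-mincut capacity is the same case analysis used for Theorem \ref{thm : single edge sensitivity oracle using cheng and hu} and Theorem \ref{thm : global mincut}. The only item needing a line of care is the space bookkeeping --- checking that the ${\mathcal O}(n-|S|)$ size of the unaugmented ${\mathcal T}_{E_1}$ and the ${\mathcal O}(n)$ auxiliary storage for $\lambda_S$ and $C_m$ are both dominated by the ${\mathcal O}((n-|S|)n)$ cost of the augmentation whenever Type-1 edges are present.
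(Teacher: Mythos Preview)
Your proposal is correct and follows essentially the same approach as the paper: instantiate Algorithm~\ref{alg : hierarchy tree} with ${\mathcal E}=E_1$ to obtain the ${\mathcal O}(n-|S|)$-size tree ${\mathcal T}_{E_1}$, then augment each internal node with the mincut $C(e_\mu)$ selected in Step~7, yielding ${\mathcal O}((n-|S|)n)$ total space. Your write-up is in fact more detailed than the paper's, which leaves the correctness of the \textsc{lca}-based lookup and the comparison against $\lambda_S$ implicit.
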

\subsection{An ${\mathcal O}(n)$ Space Data Structure for All Edges from Type-2}
In this section, we design a data structure for answering query \textsc{cut} for all edges from Type-2. For each edge from Type-2, both endpoints are Steiner vertices. So, the number of distinct endpoints of edges from Type-2 can be at most $|S|$. 
Trivially, storing a mincut for every edge from Type-2 would occupy ${\mathcal O}(|S|^2n)$ space, which is ${\mathcal O}(n^3)$ if $|S|={\mathcal O}(n)$. 
In a similar way as designing ${\mathcal T}_{E_1}$ for all edges from Type-1 (Lemma \ref{lem : type 1 result}), by using Lemma \ref{lem : tree with k endpoints of vital edges} and Algorithm \ref{alg : hierarchy tree}, it is possible to design an ${\mathcal O}(|S|n)$ space data structure for answering query $\textsc{cut}$ for all edges from Type-2. Unfortunately, it defeats our objective because, for $|S|=n$ or global mincuts, it occupies ${\mathcal O}(n^2)$ space. Interestingly, by exploiting the fact that both the endpoints of every edge from Type-2 are Steiner vertices, we are able to show that a \textsc{Gomory Hu Tree} of graph $G$ is sufficient for answering query \textsc{cut} for all edges from Type-2.
\begin{lemma} \label{lem : gomory hu tree type 2}
    \textsc{Gomory Hu Tree} of $G$ stores a mincut for every edge from Type-2.
\end{lemma}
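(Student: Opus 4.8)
The plan is to reduce a mincut for a Type-2 edge to an ordinary $(x,y)$-mincut and then invoke Theorem \ref{thm : Gomory Hu tree}. Let $e=(x,y)$ be any edge from Type-2, so that both $x\in S$ and $y\in S$. The first step is the key observation that the constraint "Steiner cut" is vacuous for such an edge: if $C$ is any cut of $G$ that separates $x$ and $y$, then one of $x,y$ lies in $C$ and the other lies in $\overline{C}$, and since both are Steiner vertices, $C$ satisfies Definition \ref{def : steiner cut} and is automatically a Steiner cut. Conversely, for $e$ to contribute to a cut $C$, the cut must separate its endpoints $x,y$. Hence the family of Steiner cuts to which $e$ contributes is exactly the family of all cuts of $G$ separating $x$ and $y$.

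Combining this with Definition \ref{def : mincut for an edge}, a mincut for the edge $e=(x,y)$ is precisely a cut of the least capacity separating $x$ and $y$ in $G$, i.e., an $(x,y)$-mincut. The second step is then immediate from Theorem \ref{thm : Gomory Hu tree}: applied to the pair of vertices $u=x$, $v=y$, the \textsc{Gomory Hu Tree} ${\mathcal T}_{GH}$ stores a cut $C$ of the least capacity separating $x,y$ in $G$. This cut separates $x$ and $y$, so $e$ contributes to it; it is a Steiner cut by the observation above; and among all Steiner cuts to which $e$ contributes it has the least capacity. Therefore $C$ is a mincut for $e$, and it is stored by ${\mathcal T}_{GH}$. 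Since $e$ was an arbitrary Type-2 edge, this proves the lemma.

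There is essentially no technical obstacle here; the only thing to be careful about is the logical equivalence in the first step — namely that for Type-2 edges the "Steiner" qualifier in Definition \ref{def : mincut for an edge} imposes no restriction beyond separating the two endpoints — and this follows directly from $x,y\in S$ together with Definition \ref{def : steiner cut}. The reporting guarantees (a cut $C$ in ${\mathcal O}(|C|)$ time, its capacity in ${\mathcal O}(1)$ time) then come for free from the corresponding guarantees of ${\mathcal T}_{GH}$ in Theorem \ref{thm : Gomory Hu tree}, which is what will be used in the subsequent paragraph to answer query \textsc{cut} for Type-2 edges within ${\mathcal O}(n)$ space.
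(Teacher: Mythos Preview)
Your proof is correct and follows essentially the same approach as the paper: both argue that since the endpoints $x,y$ of a Type-2 edge are Steiner vertices, any cut separating them is automatically a Steiner cut, so the $(x,y)$-mincut stored by the \textsc{Gomory Hu Tree} is a mincut for the edge. Your write-up is slightly more explicit in stating the full equivalence between ``Steiner cuts to which $e$ contributes'' and ``cuts separating $x$ and $y$,'' but the content is the same.
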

\begin{proof}
    By Theorem \ref{thm : Gomory Hu tree}, \textsc{Gomory Hu Tree} stores a cut of the least capacity separating every pair of vertices in $G$. Let $(u,v)$ be any edge from Type-2. Suppose $C$ is a cut of the least capacity separating $u$ and $v$ in $\textsc{Gomory Hu Tree}$ of $G$. So, edge $(u,v)$ is contributing to $C$. By definition of Type-2 edges, both $u$ and $v$ are Steiner vertices. Therefore, $C$ is a Steiner cut in which edge $(u,v)$ is contributing.  It follows from Theorem \ref{thm : Gomory Hu tree} that $C$ is also a cut of the least capacity in $G$ that separates $u$ and $v$. So, $C$ is also a Steiner cut of the least capacity in which edge $(u,v)$ is contributing. Hence, $C$ is a mincut for edge $(u,v)$.
\end{proof}
Let $e=(u,v)$ be any edge from Type-2. 
By Theorem \ref{thm : Gomory Hu tree}, \textsc{Gomory Hu Tree} of $G$ can determine in ${\mathcal O}(1)$ time the capacity of a cut $C$ of the least capacity in $G$ that separates $u$ and $v$. By Lemma \ref{lem : gomory hu tree type 2}, $C$ is a mincut for edge $e$. So, again by Theorem \ref{thm : Gomory Hu tree}, \textsc{Gomory Hu Tree} can be used to report mincut $C$ for edge $e$ in ${\mathcal O}(|C|)$ time. This completes the proof of the following lemma.
\begin{lemma}[Sensitivity Oracle for Type-2 Edges] \label{lem : type 2 result}
    For any Steiner set $S\subseteq V$, there is an ${\mathcal O}(n)$ space data structure that, given any edge $e$ from Type-2 and any value $\Delta$ satisfying $0\le \Delta\le w(e)$, can report an $S$-mincut $C$ in ${\mathcal O}(|C|)$ time after reducing the capacity of edge $e$ by $\Delta$. 
\end{lemma}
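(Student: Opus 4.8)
The plan is to avoid constructing any new data structure for Type-2 edges at all, and instead to show that a single \textsc{Gomory Hu Tree} ${\mathcal T}_{GH}$ of $G$ already does the job. (The naive alternative — running Algorithm \ref{alg : hierarchy tree} on the set $S$ of Steiner vertices, exactly as we did on $V\setminus S$ for Type-1 edges — would cost ${\mathcal O}(|S|n)$ space, which is ${\mathcal O}(n^2)$ when $|S|=\Theta(n)$ and defeats the whole point.) The heart of the argument is the structural claim I would isolate as Lemma \ref{lem : gomory hu tree type 2}: for every Type-2 edge, ${\mathcal T}_{GH}$ stores a mincut for that edge. To see this, take a Type-2 edge $e=(u,v)$, so $u,v\in S$, and let $C$ be a least-capacity cut separating $u$ and $v$, which ${\mathcal T}_{GH}$ furnishes by Theorem \ref{thm : Gomory Hu tree}. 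Because both endpoints are Steiner, $C$ is automatically a Steiner cut (Definition \ref{def : steiner cut}) and $e$ contributes to it; and any Steiner cut to which $e$ contributes must separate $u$ from $v$, hence has capacity at least $c(C)$ by Theorem \ref{thm : Gomory Hu tree}. Thus $C$ attains the least capacity among all Steiner cuts to which $e$ contributes, i.e.\ $C$ is a mincut for $e$ in the sense of Definition \ref{def : mincut for an edge}.

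Granting this claim, the data structure is just ${\mathcal T}_{GH}$ together with one precomputed $S$-mincut $C_m$ and the number $\lambda_S$ — in total ${\mathcal O}(n)$ space. To answer $\textsc{cut}(e=(u,v),\Delta)$ for a Type-2 edge, I would ask ${\mathcal T}_{GH}$ for a least-capacity $u$-$v$ cut $C$ and its capacity $c(C)$, which takes ${\mathcal O}(|C|)$ and ${\mathcal O}(1)$ time respectively. By the claim $C$ is a mincut for $e$; since $0\le\Delta\le w(e)$ and $e$ contributes to $C$, the same standard reasoning used in the proof of Theorem \ref{thm : single edge sensitivity oracle using cheng and hu} shows that the $S$-mincut of the modified graph has capacity $\min\{\lambda_S,\,c(C)-\Delta\}$. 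Hence, if $c(C)-\Delta<\lambda_S$ I report $C$, and otherwise I report the stored $C_m$; in either case an $S$-mincut of the modified graph is returned in ${\mathcal O}(|C|)$ time, which is exactly the statement of the lemma.

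I do not expect any serious obstacle here: once the classification of edges is fixed, the only non-routine observation is that an object as lightweight as a \textsc{Gomory Hu Tree} — which a priori encodes nothing about Steiner cuts, only about pairwise minimum cuts — nonetheless stores a mincut for every Type-2 edge. The reason this works is precisely that a Type-2 edge has both endpoints in $S$, so every cut separating its endpoints is already a Steiner cut; this is exactly the property that fails for Type-1 edges (handled in Lemma \ref{lem : type 1 result}) and, more severely, for Type-3 edges, which is why those cases will demand genuinely different machinery.
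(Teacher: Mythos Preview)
Your proposal is correct and follows essentially the same approach as the paper: prove that the \textsc{Gomory Hu Tree} stores a mincut for every Type-2 edge (precisely Lemma~\ref{lem : gomory hu tree type 2}, via the observation that both endpoints being Steiner forces any separating cut to be a Steiner cut), then use it together with a stored $S$-mincut $C_m$ to answer the query. The only cosmetic slip is in the final timing claim: when $c(C)-\Delta\ge\lambda_S$ you report $C_m$, so the running time is ${\mathcal O}(|C_m|)$ rather than ${\mathcal O}(|C|)$---but this is exactly what the lemma's ``${\mathcal O}(|C|)$'' means (time linear in the reported cut), so the argument stands.
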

For global mincut or $S=V$, both endpoints of every edge are Steiner vertices. Therefore, Theorem \ref{thm : global mincut} can also be seen as a corollary of Lemma \ref{lem : type 2 result}.

\subsection{An ${\mathcal O}((n-|S|)n)$ Space Data Structure for All Edges from Type-3}
In this section, the objective is to design a data structure for answering query $\textsc{cut}$ for all edges from Type-3. 
Observe that the size of the smallest set of vertices that contains all the endpoints of all edges from Type-3 can be $\Omega(n)$. Therefore, using Lemma \ref{lem : tree with k endpoints of vital edges} and Algorithm \ref{alg : hierarchy tree}, we can have an ${\mathcal O}(n^2)$ space data structure, which is no way better than the trivial data structure for answering query \textsc{cut} (Theorem \ref{thm : single edge sensitivity oracle using cheng and hu}). Now, each edge from Type-3 has exactly one nonSteiner endpoint. So, unlike edges from Type-2, Lemma \ref{lem : gomory hu tree type 2} no longer holds for edges from Type-3.
This shows the limitations of the approaches taken so far in designing a data structure for answering query $\textsc{cut}$. Trivially storing a mincut for every edge from Type-3 requires ${\mathcal O}((n-|S|)n|S|)$ space. For $|S|=\frac{n}{k}$, any constant $k\ge 2$, it occupies ${\mathcal O}(n^3)$ space. Interestingly, we present a data structure occupying only ${\mathcal O}((n-|S|)n)$ space for answering query \textsc{cut} for all edges from Type-3.

For any edge $e_1=(x,u)$ with $x\in S$ from Type-3, without loss of generality, we assume that any mincut $C$ for edge $e_1$  contains the Steiner vertex $x$, otherwise consider $\overline{C}$. 
Note that the set of global mincuts and $(s,t)$-mincuts are closed under both intersection and union. This property was crucially exploited in designing a compact structure for storing them \cite{dinitz1976structure, DBLP:journals/mp/PicardQ80}. 
To design a compact structure for storing a mincut for every edge from Type-3, we also explore the relation between a pair of mincuts for edges from Type-3. Let $A$ and $B$ be mincuts for edges $e_1$ and $e_2$ from Type-3, respectively. Unfortunately, it turns out that if $A$ crosses $B$, then it is quite possible that neither $A\cap B$ nor $A\cup B$ is a mincut for $e_1$ or $e_2$ even if both are Steiner cuts (refer to Figure \ref{fig : type-3 edges}($i$)). This shows that mincuts for edges from Type-3 are not closed under intersection or union. To overcome this hurdle, we first present a partitioning of the set of edges from Type-3 based on the nonSteiner vertices as follows.


\begin{figure}
 \centering
    \includegraphics[width=0.9\textwidth]{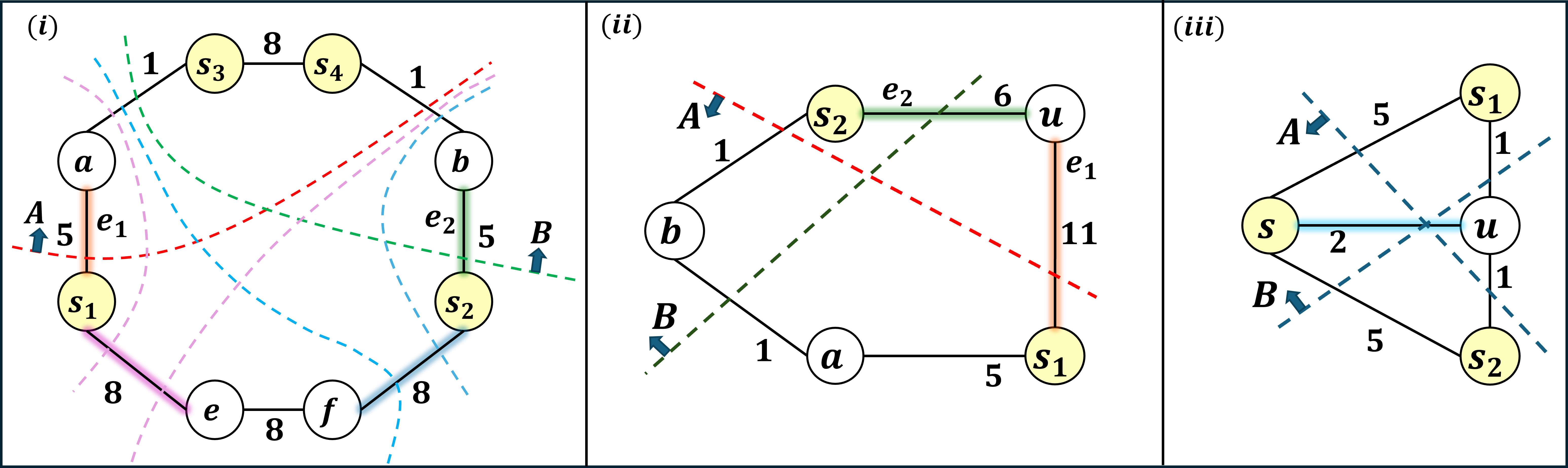} 
   \caption{Yellow vertices are Steiner vertices. A mincut for an edge is represented by the same color. $(i)$ Mincuts $A,B$ are for edges $e_1=(s_1,a)$ and $e_2=(s_2,b)$. Observe that $A\cap B$ and $A\cup B$ are Steiner cuts but not mincuts for edges $e_1,e_2$. Moreover, cuts $A\setminus B$ and $B\setminus A$, in which edges $e_1$ and $e_2$ are contributing, are not even Steiner cuts. $(ii)$ Edges $e_1$ and $e_2$ are vital and from Type-3($u$). Mincuts $A$ and $B$ for edges $e_1$ and $e_2$ are crossing, but $A\cap B$ and $A\cup B$ are not Steiner cuts. $(iii)$ Edge $(s,u)$ is from Type-3 and $N((s,u))=\{A,B\}$. }
  \label{fig : type-3 edges} 
\end{figure}

Let $V'\subseteq V\setminus S$ be the smallest set of nonSteiner vertices such that every edge from Type-3 has endpoint in $V'$. Let $u$ be any vertex from $V'$. Let Type-3($u$) be the set that contains all edges from Type-3 having $u$ as one of the two endpoints. We aim to design an ${\mathcal O}(n)$ space data structure that can report a mincut for each edge from Type-3($u$). This is because storing an ${\mathcal O}(n)$ space data structure for every nonSteiner vertex of $G$ would lead to an ${\mathcal O}((n-|S|)n)$ space data structure.

In order to design an ${\mathcal O}(n)$ space data structure for edges from Type-3($u$), we consider the set of nearest mincuts (Definition \ref{def : nearest mincut}) for edges from Type-3($u$). The following lemma provides a strong reason behind the use of nearest mincuts for edges from Type-3($u$). 
\begin{lemma}[\textsc{Disjoint Property}] \label{lem : disjointness of nearest mincuts}
    Let $C\in N(e_1)$ and $C'\in N(e_2)$ such that $e_1=(x,u)$, $e_2=(x',u)$ are edges from Type-3($u$). Then, $x'\notin C$ and  $x\notin C'$ if and only if $C\cap C'=\emptyset$. 
\end{lemma}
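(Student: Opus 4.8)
## Proof Proposal for Lemma (\textsc{Disjoint Property})

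The plan is to prove the two directions separately: the forward implication is immediate, and the reverse one I would establish by contradiction, combining the submodularity of cuts with the minimality built into the definition of a nearest mincut.

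For the easy direction, assume $C\cap C'=\emptyset$. By the convention adopted for Type-3($u$) edges we have $x\in C$ and $x'\in C'$; disjointness then immediately gives $x\notin C'$ and $x'\notin C$.

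For the converse, I would assume $x'\notin C$ and $x\notin C'$ and suppose, for contradiction, that $C\cap C'\neq\emptyset$. The target is to exhibit a mincut for $e_1$ that still contains $x$ but is strictly contained in $C$, contradicting $C\in N(e_1)$. First I would record the positions of the four relevant vertices: since $e_1=(x,u)$ contributes to $C$ with $x\in C$, necessarily $u\notin C$, and likewise $u\notin C'$; the hypotheses then place $x\in C\setminus C'$ and $x'\in C'\setminus C$. The key point I would argue is that both residue sets $C\setminus C'$ and $C'\setminus C$ are \emph{Steiner} cuts: each of them separates the Steiner pair $\{x,x'\}$ (for instance $x\in C\setminus C'$ while $x'\in C'$ lies outside $C\setminus C'$, and symmetrically). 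Furthermore $e_1$ contributes to $C\setminus C'$ (as $x\in C\setminus C'$ but $u\notin C\setminus C'$) and $e_2$ contributes to $C'\setminus C$. Next I would apply part (2) of Lemma~\ref{submodularity of cuts} to $C$ and $C'$, giving
\[ c(C)+c(C')\ \ge\ c(C\setminus C')+c(C'\setminus C). \]
Because $C$ is a mincut for $e_1$ (Definition~\ref{def : mincut for an edge}) and $C\setminus C'$ is a Steiner cut to which $e_1$ contributes, $c(C\setminus C')\ge c(C)$; symmetrically $c(C'\setminus C)\ge c(C')$. Together with the displayed inequality this forces $c(C\setminus C')=c(C)$, so $C\setminus C'$ is itself a mincut for $e_1$. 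It contains $x$, and since $C\cap C'\neq\emptyset$ it is a proper subset of $C$ — contradicting the defining property of the nearest mincut $C\in N(e_1)$ (Definition~\ref{def : nearest mincut}). Hence $C\cap C'=\emptyset$.

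The one step I expect to require genuine care is the verification that the residue sets $C\setminus C'$ and $C'\setminus C$ are Steiner cuts, since the notion of a mincut for an edge is restricted to Steiner cuts and, as Figure~\ref{fig : type-3 edges}$(i)$ illustrates, for arbitrary (non-nearest) mincuts the analogous residue sets can fail to be Steiner cuts. In the present setting it goes through precisely because the two Steiner endpoints $x$ and $x'$ land on opposite sides of each residue set, which is exactly where the hypotheses $x\notin C'$ and $x'\notin C$ — and the fact that $e_1,e_2$ share the nonSteiner endpoint $u$, keeping $u$ outside both $C$ and $C'$ — get used.
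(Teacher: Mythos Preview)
Your proof is correct and follows essentially the same approach as the paper's: both arguments place $x\in C\setminus C'$ and $x'\in C'\setminus C$, observe that these residue sets are Steiner cuts separating $x$ from $x'$, and then apply part~(2) of Lemma~\ref{submodularity of cuts}. The only cosmetic difference is where the contradiction lands: the paper first extracts the strict inequality $c(C\setminus C')>c(C)$ from the nearest-mincut property of $C$ and then contradicts that $C'$ is a mincut for $e_2$, whereas you derive equality $c(C\setminus C')=c(C)$ from the two weak mincut inequalities and then contradict that $C$ is a \emph{nearest} mincut for $e_1$; these are two endings of the same argument.
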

\begin{proof}
    Suppose $C\cap C'\neq \emptyset$. Since $x\notin C'$ and $x'\notin C$, so we have $x\in C\setminus C'$  and $x'\in C'\setminus C$. Evidently, $C\setminus C'$, as well as $C'\setminus C$, is a Steiner cut. It is given that $C$ is the nearest mincut for edge $(x,u)$ and $x\in C\setminus C'$. This implies that $c(C\setminus C')> c(C)$. It follows from sub-modularity of cuts (Lemma \ref{submodularity of cuts}$(2)$) that $c(C'\setminus C)<c(C')$. Therefore, we get a Steiner cut $C'\setminus C$ of capacity strictly less than $c(C')$ and edge $(x',u)$ is a contributing edge of Steiner cut $C'\setminus C$, a contradiction. 

    The proof of the converse part is immediate. 
\end{proof}
Let $e_1=(x,u)$ and $e_2=(x',u)$ be any pair of edges from Type-3($u$). Let $C$ be a nearest mincut for $e_1$ and $C'$ be a nearest mincut for $e_2$. Lemma \ref{lem : disjointness of nearest mincuts} essentially states that if $e_2$ contributes to $C'\setminus C$ and $e_1$ contributes to $C\setminus C'$, then $C$ must not cross $C'$. Now, the problem arises when one of the two edges $e_1,e_2$ is contributing to a nearest mincut for the other edge. Firstly, there might exist multiple nearest mincuts for an edge (refer to Figure \ref{fig : type-3 edges}($iii$)). It seems quite possible that an edge, say $e_2$, is contributing to one nearest mincut for $e_1$ and is not contributing to another nearest mincut for $e_1$. Secondly, the union of a pair of nearest mincuts for an edge $e_1$ from Type-3($u$) might not even be a Steiner cut if they cross (refer to Figure \ref{fig : type-3 edges}($iii$)). 
Hence, the union of them can have a capacity strictly less than the capacity of mincut for $e$. So, it seems that the nearest mincuts for edges from Type-3($u$) appear quite \textit{arbitrarily}. It might not be possible to have an ${\mathcal O}(n)$ space structure for storing them. Interestingly, we are able to circumvent all the above challenges as follows. 

Observe that we are interested in only those edges from Type-3($u$) whose failure reduces $S$-mincut. They are the set of all vital edges that belong to Type-3($u$), denoted by VitType-3($u$). By exploiting \textit{vitality} of edges from VitType-3($u$), we establish the following crucial insight for any pair of crossing mincuts for edges from VitType-3($u$). Interestingly, the following result holds even if the union of a pair of mincuts for a pair of edges from VitType-3($u$) is not always a Steiner cut (refer to Figure \ref{fig : type-3 edges}($ii$)). 

 \begin{lemma} [\textsc{Property of Intersection}]\label{lem : main lemma for weighted edge} 
     Let $C_1$ and $C_2$ be mincuts for edges $e_1=(x_1,u)$ and $e_2=(x_2,u)$ from VitType-3$(u)$ respectively. Steiner vertex $x_2$ is present in $C_1$ if and only if $C_1\cap C_2$ is a mincut for edge $e_2$. 
 \end{lemma}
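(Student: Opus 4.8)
The backward direction is immediate, so the plan is to reduce the forward direction to a single structural claim about $C_1\cup C_2$ and then settle that claim using vitality. First I record the setup: by the normalization convention $x_1\in C_1$ and $x_2\in C_2$, and since $e_1$ contributes to $C_1$ and $e_2$ contributes to $C_2$, the common endpoint $u$ lies outside both cuts; hence $u\notin C_1\cap C_2$ and $u\in\overline{C_1\cup C_2}$. For the backward direction, if $C_1\cap C_2$ is a mincut for $e_2$ then $e_2=(x_2,u)$ contributes to it, so $C_1\cap C_2$ separates $x_2$ from $u$; as $u\notin C_1\cap C_2$, this forces $x_2\in C_1\cap C_2\subseteq C_1$.

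For the forward direction, assume $x_2\in C_1$, so $x_2\in C_1\cap C_2$. Then $e_2$ contributes to $C_1\cap C_2$, and since $\overline{C_1\cap C_2}\supseteq\overline{C_2}$ contains a Steiner vertex (because $C_2$ is a Steiner cut), $C_1\cap C_2$ is itself a Steiner cut. By minimality of $C_2$ among Steiner cuts that $e_2$ contributes to, $c(C_1\cap C_2)\ge c(C_2)$, so it remains only to prove $c(C_1\cap C_2)\le c(C_2)$. By submodularity (Lemma \ref{submodularity of cuts}), $c(C_1\cap C_2)\le c(C_1)+c(C_2)-c(C_1\cup C_2)$, so it suffices to show $c(C_1\cup C_2)\ge c(C_1)$; and if $C_1\cup C_2$ is a Steiner cut this is automatic, since $e_1$ contributes to $C_1\cup C_2$ ($x_1$ inside, $u$ outside) and $C_1$ is a mincut for $e_1$. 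Thus the whole forward direction reduces to the claim that \emph{$C_1\cup C_2$ is a Steiner cut} --- precisely the scenario that Figure \ref{fig : type-3 edges}$(ii)$ warns can fail for crossing mincuts, and where vitality must enter.

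To prove this claim I would argue by contradiction. Suppose $C_1\cup C_2$ is not a Steiner cut. Since $x_1$ is a Steiner vertex in $C_1\cup C_2$, the region $\overline{C_1\cup C_2}$ (which contains $u$) then contains no Steiner vertex; using that $C_1$ and $C_2$ are Steiner cuts it follows that $C_1\setminus C_2$ and $C_2\setminus C_1$ each contain a Steiner vertex, and a quick check shows both are Steiner cuts, so their capacities are at least $\lambda_S$. Thus $V$ splits into the four parts $C_1\cap C_2$, $C_1\setminus C_2$, $C_2\setminus C_1$, $\overline{C_1\cup C_2}$, the last being Steiner-free and containing $u$. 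The heart of the argument is a weight accounting across this partition combined with vitality. The edge $e_2$ joins $x_2\in C_1\cap C_2$ to $u\in\overline{C_1\cup C_2}$, so it lies in the edge-group between $C_1\cap C_2$ and $\overline{C_1\cup C_2}$; hence deleting $e_2$ decreases $c(C_2)$ by at most the total weight of that group, and vitality of $e_2$ forces $c(C_2)$ minus the weight of that group to fall below $\lambda_S$. Now split on the location of $x_1$. If $x_1\in C_1\setminus C_2$, then $e_1$ contributes to the Steiner cut $C_1\setminus C_2$, so $c(C_1)\le c(C_1\setminus C_2)$ by minimality of $C_1$; plugging this relation into the bound from $e_2$ drives $c(C_2\setminus C_1)$ below $\lambda_S$, contradicting that $C_2\setminus C_1$ is a Steiner cut. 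If instead $x_1\in C_1\cap C_2$, then $e_1$ also lies in the $C_1\cap C_2$--$\overline{C_1\cup C_2}$ edge-group, so vitality of $e_1$ gives the symmetric bound for $C_1$; adding the two bounds yields $c(C_1\setminus C_2)+c(C_2\setminus C_1)<2\lambda_S$, again impossible. Hence $C_1\cup C_2$ is a Steiner cut, and with the reduction above this completes the proof.

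The step I expect to be the main obstacle is exactly this last claim. For arbitrary crossing mincuts, none of $C_1\cap C_2$, $C_1\cup C_2$, $C_1\setminus C_2$, $C_2\setminus C_1$ need be a Steiner cut, so the usual submodularity bookkeeping on these four sets collapses. The decisive point is that $e_1$ and $e_2$ share the endpoint $u$: once $C_1\cup C_2$ fails to be a Steiner cut, $u$ is quarantined in the Steiner-free region $\overline{C_1\cup C_2}$ together with the ``boundary mass'' that vitality forces to be large, and this is what lets the failures of $e_1$ and $e_2$ be charged against the Steiner cuts $C_1\setminus C_2$ and $C_2\setminus C_1$. Pinning down which edge-groups cross which of the four parts, and verifying that $C_1\setminus C_2$ and $C_2\setminus C_1$ are genuinely Steiner cuts, is the only delicate bookkeeping; everything else is routine submodularity together with the minimality defining a mincut for an edge.
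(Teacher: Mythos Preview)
Your proof is correct and follows essentially the same strategy as the paper: submodularity together with the vitality of $e_1,e_2$ and a case split on the location of $x_1$. The organization differs slightly. You first reduce the forward direction to the single claim that $C_1\cup C_2$ is a Steiner cut (handling what the paper calls Cases~1.1 and~2.1 in one stroke via Lemma~\ref{submodularity of cuts}(1)), and then derive a contradiction from its failure. The paper instead splits first on whether $x_1\in C_2$ and only inside each branch asks whether $\overline{C_1\cup C_2}$ contains a Steiner vertex. In the hard sub-case $x_1\in C_1\setminus C_2$, you use minimality of $C_1$ as an \emph{input} ($c(C_1)\le c(C_1\setminus C_2)$) to push $c(C_2\setminus C_1)$ below $\lambda_S$, whereas the paper runs submodularity in $G\setminus\{e_2\}$ to push $c(C_1\setminus C_2)$ below $c(C_1)$ and contradicts minimality of $C_1$; these are dual rearrangements of the same edge-group inequalities. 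In the sub-case $x_1\in C_1\cap C_2$, you invoke vitality of \emph{both} $e_1$ and $e_2$ to get $c(C_1\setminus C_2)+c(C_2\setminus C_1)<2\lambda_S$, while the paper uses only vitality of $e_2$ together with $c(C_1)=c(C_2)$; both are valid since the hypothesis places $e_1,e_2$ in VitType-3$(u)$. Your framing is arguably cleaner, but the content is the same.
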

      \begin{figure}
 \centering
    \includegraphics[width=0.8\textwidth]{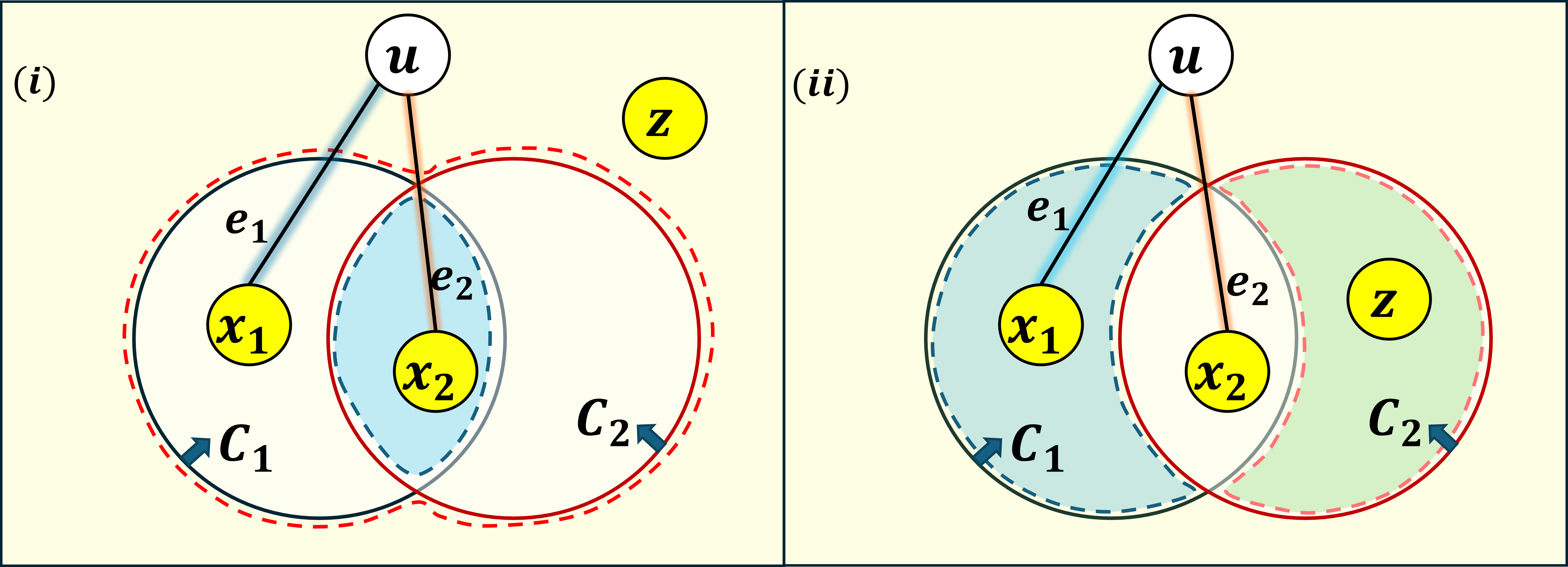} 
   \caption{Illustration of the proof of Lemma \ref{lem : main lemma for weighted edge}. $(i)$ There is a  Steiner vertex $z$ in $\overline{C_1\cup C_2}$. The red dashed cut shows cut $C_1\cup C_2$ and the blue dashed cut (blue region) shows cut $C_1\cap C_2$. $(ii)$ There is a Steiner vertex $z$ in $C_2\setminus C_1$. Blue dashed cut (blue region) is $C_1\setminus C_2$. Similarly, red dashed cut (light green region) is $C_2\setminus C_1$. }
  \label{fig : main lemma proof weighted edge} 
\end{figure}
 \begin{proof}
    Suppose Steiner vertex $x_2$ is present in $C_1$. Since $u$ is a common endpoint of both edges $e_1,e_2$, we have $u\notin C_1\cup C_2$. $C_1$ is a mincut for edge $e_1$, so, $x_1\in C_1$. Now, there are two possibilities -- either $(1)$ $x_1\notin C_2$ or $(2)$ $x_1\in C_2$. We now establish each case separately.\\

    \noindent
    \textbf{Case $\mathbf{1}$.} Suppose $x_1\notin C_2$, in other words, $x_1\in C_1\setminus C_2$. It implies that $C_1\setminus C_2$ is nonempty. We consider that $C_2$ is not a subset of $C_1$; otherwise, $C_1\cap C_2=C_2$ is a mincut for $e_2$ and the lemma holds. So, $C_2\setminus C_1$ is also nonempty. Let $c(C_1)=\lambda_1$ and $c(C_2)=\lambda_2$. Since $x_2\in C_1\cap C_2$, edge $e_2$ is contributing to $C_1\cap C_2$. Therefore, $c(C_2)\le c(C_1\cap C_2)$; otherwise, $C_2$ is not a mincut for edge $e_2$. Since $C_1$ is a Steiner cut, there must be a Steiner vertex $z$ such that $z\notin C_1$. Based on the position of $z$ with respect to cut $C_2$, observe that $z$ appears either $(1.1)$ in $\overline{C_1\cup C_2}$ or  $(1.2)$ in $C_2\setminus C_1$ (refer to Figure \ref{fig : main lemma proof weighted edge}). \\

\noindent
    \textbf{Case $\mathbf{1.1}$.} Suppose $z\in \overline{C_1\cup C_2}$ (refer to Figure \ref{fig : main lemma proof weighted edge}$(i)$). Observe that $C_1\cup C_2$ is a Steiner cut in which edge $e_1$ is contributing. So, the capacity of $C_1\cup C_2$ has to be at least $\lambda_1$. By sub-modularity of cuts (Lemma \ref{submodularity of cuts}$(1)$), $c(C_1\cap C_2)+c(C_1\cup C_2)\le \lambda_1+\lambda_2$. It follows that $c(C_1\cap C_2)\le \lambda_2$. Since $C_1\cap C_2$ is a Steiner cut in which edge $e_2$ is contributing, therefore, the capacity of $C_1\cap C_2$ is exactly $\lambda_2$. Hence $C_1\cap C_2$ is a mincut for edge $e_2$.\\
   
   \noindent
   \textbf{Case 1.2.} We show that this case does not arise. Assume to the contrary that $z\in C_2\setminus C_1$ (refer to Figure \ref{fig : main lemma proof weighted edge}$(ii)$). Here, we crucially exploit the fact that edge $e_2$ is a vital edge from Type-3($u$). Let us now consider graph $G\setminus \{e_2\}$. Since $e_2$ is a vital edge in $G$, therefore, in graph $G\setminus \{e_2\}$, the capacity of $S$-mincut is $\lambda_2-w(e_2)$ and $C_2$ is an $S$-mincut. In $G$, edge $e_2$ is also a contributing edge of $C_1$. Therefore, the capacity of $C_1$ in $G\setminus \{e_2\}$ is $\lambda_1-w(e_2)$. Without causing any ambiguity, let us denote the capacity of any cut $A$ in $G\setminus \{e_2\}$ by $c(A)$. By sub-modularity of cuts (Lemma \ref{submodularity of cuts}$(2)$), in graph $G\setminus \{e_2\}$, we have $c(C_1\setminus C_2)+c(C_2\setminus C_1)\le \lambda_1+\lambda_2-2w(e_2)$. Recall that $x_1 \in C_1\setminus C_2$ and $z\in C_2\setminus C_1$. So, both $C_1\setminus C_2$ and $C_2\setminus C_1$ are Steiner cuts in graph $G\setminus \{e_2\}$. Therefore, the capacity of $C_2\setminus C_1$ is at least $\lambda_2-w(e_2)$. It follows that the capacity of $C_1\setminus C_2$ in $G\setminus\{e_2\}$ is at most $\lambda_1-w(e_2)$. We now obtain graph $G$ by adding edge $e_2$ to graph $G\setminus \{e_2\}$. Observe that edge $e_2$ does not contribute to Steiner cut $C_1\setminus C_2$. Therefore, the capacity of cut $C_1\setminus C_2$ remains the same in graph $G$, which is at most $\lambda_1-w(e_2)$. Since $e_2$ is a vital edge, so, $w(e_2)>0$. This implies that we have $\lambda_1-w(e_2)<\lambda_1$. Therefore, for cut $C_1\setminus C_2$ in $G$, $C_1\setminus C_2$ is a Steiner cut and has a capacity that is strictly less than $\lambda_1$. Moreover, edge $e_1$ is contributing to $C_1\setminus C_2$. So, $C_1$ is not a mincut for edge $e_1$, a contradiction. \\
   
       \noindent
   \textbf{Case ${\mathbf{2.}}$} In this case, we have $x_1\in C_2$. Since $C_1$ and $C_2$ both are Steiner cuts, observe that either (2.1) there is at least one Steiner vertex $z$ in $\overline{C_1\cup C_2}$ or (2.2) there exists a pair of Steiner vertices $z_1,z_2$ such that $z_1\in C_1\setminus C_2$ and $z_2\in C_2\setminus C_1$. The proof of case (2.1) is along similar lines to the proof of case (1.1). So, let us consider case (2.2). 
   Edges $e_1$ and $e_2$ are contributing to both $C_1$ and $C_2$. It implies that $c(C_1)=c(C_2)$. Let $c(C_1)$ (or $c(C_2)$) be $\lambda$. Let us consider graph $G\setminus \{e_2\}$. Since $e_2$ is a vital edge, the capacity of $S$-mincut is $\lambda-w(e_2)$. The capacity of cuts $C_1$ and $C_2$ in $G\setminus \{e_2\}$ is $\lambda-w(e_2)$ since $e_2$ contributes to both of them. Without causing any ambiguity, let us denote the capacity of any cut $A$ in $G\setminus \{e_2\}$ by $c(A)$. By sub-modularity of cuts (Lemma \ref{submodularity of cuts}(2)), $c(C_1\setminus C_2)+c(C_2\setminus C_1)\le 2\lambda-2w(e_2)$. Now, it is given that $z_1\in C_1\setminus C_2$ and $z_2\in C_2\setminus C_1$. Therefore, in $G\setminus \{e_2\}$, $C_1\setminus C_2$ and $C_2\setminus C_1$ are Steiner cuts. It follows that $c(C_1\setminus C_2)$, as well as $c(C_2\setminus C_1)$, is exactly $\lambda-w(e_2)$. Let us obtain graph $G$ from $G\setminus \{e_2\}$. Observe that $e_2$ contributes neither to $C_1\setminus C_2$ nor to $C_2\setminus C_1$. Therefore, the capacity of cuts $C_1\setminus C_2$ and $C_2\setminus C_1$ remains the same in $G$, which is $\lambda-w(e_2)$. Since $e_2$ is vital, $w(e_2)>0$. So, $\lambda-w(e_2)<\lambda_S<\lambda$, where $\lambda_S$ is the capacity of $S$-mincut in $G$. Hence, we have a Steiner cut of capacity strictly smaller than $S$-mincut, a contradiction. \\
  
   We now prove the converse part. Suppose $C_1\cap C_2$ is a mincut for edge $e_2$. Since $C_1\cap C_2\subseteq C_1$, $u$ belongs to $\overline{C_1\cap C_2}$. Hence, $x_2$ must belong to $C_1\cap C_2$ because $e_2$ has to be a contributing edge of $C_1\cap C_2$. This completes the proof.
\end{proof}      
For any pair of nonSteiner vertices $a,b\in V'$, it turns out that Lemma \ref{lem : main lemma for weighted edge} does not necessarily hold (In Figure \ref{fig : type-3 edges}$(i)$, nearest mincut $A$ of $(s_1,a)$ contains $s_2$ but nearest mincut $B$ for $(s_2,b)$ crosses $A$). So, a collaboration between mincuts from VitType-3($v$) and VitType-3($v'$), for any pair $v,v'\in V'$, does not seem possible. 

Recall that our objective is to design an ${\mathcal O}(n)$ space structure for storing a mincut for every edge from VitType-3($u$). By Lemma \ref{lem : disjointness of nearest mincuts}, the set of nearest mincuts for all edges from VitType-3($u$) satisfies \textsc{Disjoint} property.
By exploiting Lemma \ref{lem : main lemma for weighted edge}, we now establish two interesting properties (\textsc{uniqueness} and \textsc{subset}) satisfied by the nearest mincuts for all edges from VitType-3$(u)$.  
 These properties help in designing an ${\mathcal O}(n)$ space data structure for storing them. We first establish the uniqueness property in the following lemma. 
\begin{lemma} [\textsc{Uniqueness Property}] \label{lem : unique nearest mincut for an edge}
    For any edge $e=(x,u)$ from VitType-3($u$), the nearest mincut for edge $e$ is unique.
\end{lemma}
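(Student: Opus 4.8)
The plan is to derive uniqueness directly from the \textsc{Property of Intersection} (Lemma~\ref{lem : main lemma for weighted edge}) by applying it with both edges taken to be the \emph{same} edge $e$. Suppose, for contradiction, that $C_1$ and $C_2$ are two distinct nearest mincuts for $e=(x,u)$ from VitType-3$(u)$. By the orientation convention fixed earlier for Type-3 edges (every mincut for an edge $(x,u)$ with $x\in S$ is assumed to contain the Steiner endpoint $x$), we have $x\in C_1$ and $x\in C_2$, and, since $u$ is the common nonSteiner endpoint, $u\notin C_1\cup C_2$.

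Next I would invoke Lemma~\ref{lem : main lemma for weighted edge} with $e_1=e_2=e$ and $x_1=x_2=x$. The hypothesis asks that the Steiner vertex $x_2=x$ be present in $C_1$, which holds automatically from the convention above; hence the lemma yields that $C_1\cap C_2$ is a mincut for edge $e$. Moreover $x\in C_1\cap C_2$, so $C_1\cap C_2$ is nonempty and (again by the convention) is a mincut for $e$ of exactly the form that Definition~\ref{def : nearest mincut} speaks about, lying inside both $C_1$ and $C_2$.

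Finally, since $C_1\neq C_2$, the set $C_1\cap C_2$ is a proper subset of at least one of them; say $C_1\cap C_2\subsetneq C_1$ (the reverse case, including the case where one of $C_1,C_2$ contains the other, is entirely symmetric). Then $C_1\cap C_2$ is a mincut for $e$ containing $x$ with $C_1\cap C_2\subsetneq C_1$, contradicting the minimality requirement in the definition of a nearest mincut for $C_1$. Therefore $C_1=C_2$, which proves the claim.

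The argument is short precisely because all the real content is packed into Lemma~\ref{lem : main lemma for weighted edge}; the only points needing a moment's care are (i) verifying that the intersection lemma legitimately applies in the degenerate instance $e_1=e_2$ — in particular that its ``$x_2$ present in $C_1$'' hypothesis is free here — and (ii) checking that the intersection it produces is a bona fide candidate for the ``nearest'' comparison, i.e.\ that it still contains $x$. Neither poses a genuine obstacle, so I do not expect a hard step; the main conceptual move is simply recognizing that the \textsc{Property of Intersection}, specialized to a single edge, already forces uniqueness.
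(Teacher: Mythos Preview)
Your proposal is correct and follows essentially the same route as the paper: assume two distinct nearest mincuts, apply the \textsc{Property of Intersection} (Lemma~\ref{lem : main lemma for weighted edge}) with $e_1=e_2=e$ to conclude that $C_1\cap C_2$ is again a mincut for $e$, and then contradict the minimality in Definition~\ref{def : nearest mincut}. You are in fact slightly more careful than the paper in noting that $C_1\cap C_2$ need only be a \emph{proper} subset of at least one of $C_1,C_2$ (and in handling the case where one contains the other), but the argument is the same in spirit.
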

\begin{proof}
    Suppose $C_1$ and $C_2$ are a pair of distinct nearest mincuts for edge $e$. 
    It follows from Lemma \ref{lem : main lemma for weighted edge} that $C=C_1\cap C_2$ is a mincut for edge $e$. So, $C$ is a proper subset of both $C_1$ and $C_2$, which contradicts that $C_1$ and $C_2$ are nearest mincuts for edge $e$. 
\end{proof}
Although the nearest mincut for each edge from VitType-3$(u)$ is unique (Lemma \ref{lem : unique nearest mincut for an edge}), the \textsc{Uniqueness Property} alone can only guarantee a data structure occupying ${\mathcal O}(n|S|)$ space for all edges from VitType-3($u$). To achieve a better space, we now explore the relation between the nearest mincuts for a pair of edges from VitType-3($u$). Since nearest mincut for an edge from VitType-3($u$) is unique (Lemma \ref{lem : unique nearest mincut for an edge}), without causing any ambiguity, we consider $N(e)$ to denote the unique nearest mincut for an edge $e$ from VitType-3($u$).

Let $e_1=(x_1,u)$ and $e_2=(x_2,u)$ be a pair of edges from VitType-3$(u)$. If neither $x_1\in N(e_2)$ nor $x_2\in N(e_1)$, then, by Lemma \ref{lem : disjointness of nearest mincuts}, $N(e_1)$ is disjoint from $N(e_2)$. For the other cases when $x_1\in N(e_2)$ or $x_2\in N(e_1)$, we now establish the following subset property that states  $N(e_1)$ is either identical to $N(e_2)$ or one of $\{N(e_1)$, $N(e_2)\}$ contains the other.

\begin{lemma}[\textsc{Subset Property}] \label{lem : subset relation of nearest mincut}
    Let $(x,u)$ and $(x',u)$ be a pair of edges from VitType-3$(u)$. Then, $x'\in N((x,u))$ if and only if $N((x',u))\subseteq N((x,u))$.
\end{lemma}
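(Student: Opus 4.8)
The plan is to prove the two directions of the biconditional separately, using the \textsc{Property of Intersection} (Lemma~\ref{lem : main lemma for weighted edge}) and the \textsc{Uniqueness Property} (Lemma~\ref{lem : unique nearest mincut for an edge}) as the main engines. Throughout, write $C_x = N((x,u))$ and $C_{x'} = N((x',u))$ for the (unique) nearest mincuts; recall our convention that a mincut for $(x,u)$ contains $x$, so $x \in C_x$ and $x' \in C_{x'}$, while $u \notin C_x \cup C_{x'}$.

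For the easy direction, suppose $N((x',u)) \subseteq N((x,u))$, i.e. $C_{x'} \subseteq C_x$. Since $x' \in C_{x'}$, we immediately get $x' \in C_x$, which is exactly what we want. (This direction uses nothing beyond the definitions.)

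For the forward direction, suppose $x' \in C_x$. We want to conclude $C_{x'} \subseteq C_x$. The first step is to apply Lemma~\ref{lem : main lemma for weighted edge} with $C_1 = C_x$ (a mincut for $e_1 = (x,u)$), $C_2 = C_{x'}$ (a mincut for $e_2 = (x',u)$), in the role where $x_1 = x$, $x_2 = x'$: since $x_2 = x' \in C_1 = C_x$, the lemma yields that $C_x \cap C_{x'}$ is a mincut for edge $e_2 = (x',u)$. Now $C_x \cap C_{x'} \subseteq C_{x'}$, and $x' \in C_x \cap C_{x'}$ (because $x' \in C_x$ by hypothesis and $x' \in C_{x'}$ always), so $C_x \cap C_{x'}$ is a mincut for $(x',u)$ containing $x'$ that is a subset of $C_{x'}$. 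But $C_{x'} = N((x',u))$ is the \emph{nearest} mincut for $(x',u)$, i.e. the minimal such cut containing $x'$, so no mincut for $(x',u)$ containing $x'$ can be a proper subset of $C_{x'}$. Hence $C_x \cap C_{x'} = C_{x'}$, which is equivalent to $C_{x'} \subseteq C_x$, as desired.

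I do not expect a serious obstacle here: the statement is essentially a direct corollary of the \textsc{Property of Intersection} once one observes that the nearest-mincut minimality forces the intersection $C_x \cap C_{x'}$ to collapse onto $C_{x'}$. The one point requiring a little care is checking the hypotheses of Lemma~\ref{lem : main lemma for weighted edge} are met — both $e_1$ and $e_2$ are in VitType-3$(u)$ by assumption, so they are vital edges of Type-3 sharing the endpoint $u$, exactly as the lemma requires — and being consistent about the containment convention (every mincut for $(x',u)$ is taken to contain $x'$), so that "minimality of the nearest mincut" applies to $C_x \cap C_{x'}$.
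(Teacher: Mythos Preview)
Your proof is correct and follows essentially the same approach as the paper: apply the \textsc{Property of Intersection} (Lemma~\ref{lem : main lemma for weighted edge}) to deduce that $C_x\cap C_{x'}$ is a mincut for $(x',u)$, then use minimality of the nearest mincut to force $C_x\cap C_{x'}=C_{x'}$. The paper phrases the forward direction by contradiction (assuming $C_{x'}\nsubseteq C_x$ so that the intersection is a \emph{proper} subset), whereas you argue directly, but the content is identical.
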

\begin{proof}
      Let $C=N((x,u))$ and $C'=N((x',u))$. Let us assume to the contrary that $C'\nsubseteq C$. It is given that $x'\in C$. Therefore, by Lemma \ref{lem : main lemma for weighted edge}, $C\cap C'$ is also a mincut for edge $(x',u)$. This contradicts that $C'$ is a nearest mincut for edge $(x',u)$.    

    Since $N((x',u))\subseteq N((x,u))$ and $(x',u)$ is a contributing edge of $N((x',u))$ with $x'\in N((x',u))$, therefore, $x'$ also belong to $N((x,u))$. This completes the proof.
\end{proof}

Let $C_1$ and $C_2$ be a pair of nearest mincuts for edges $(x_1,u)$ and $(x_2,u)$ from VitType-3$(u)$, where $x_1,x_2\in S$. It follows from Lemma \ref{lem : subset relation of nearest mincut} and Lemma \ref{lem : disjointness of nearest mincuts} that there are three possibilities for $C_1$ and $C_2$ -- $C_1$ is the same as $C_2$, one of $C_1$ and $C_2$ is a proper subset of the other, and $C_1$ is disjoint from $C_2$. Therefore, for any vertex $u\in V'$, the set containing the nearest mincuts for every edge from VitType-3($u$) forms a Laminar family ${\mathcal L}(u)$ (Definition \ref{def : laminar family}) on set $V$. It follows from Lemma \ref{lem : tree for a laminar family} that there is an ${\mathcal O}(n)$ space tree ${\mathcal T}_{{\mathcal L}(u)}$ that satisfies the following property. 

\begin{lemma} \label{lem : subtree is the nearest cut}
     For each edge $(x,u)$ from VitType-3$(u)$ with $x\in S$, $\textsc{SubTree}(x)$ of tree ${\mathcal T}_{{\mathcal L}(u)}$ is the nearest mincut for edge $(x,u)$. 
\end{lemma}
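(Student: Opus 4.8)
The plan is to prove the two inclusions $\textsc{SubTree}(x)\subseteq N((x,u))$ and $N((x,u))\subseteq \textsc{SubTree}(x)$ separately, relying on the characterization (recorded in the paragraph on representing a laminar family by a rooted tree) that $\textsc{SubTree}(x)$ is the unique minimal cut of ${\mathcal L}(u)$ containing $x$, together with the \textsc{Subset Property} (Lemma \ref{lem : subset relation of nearest mincut}).

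First I would note that $N((x,u))$ is by construction a member of the family ${\mathcal L}(u)$, and that $x\in N((x,u))$ because $(x,u)$ must contribute to its nearest mincut and we fixed the side containing the Steiner vertex $x$. In particular $x$ lies in at least one cut of ${\mathcal L}(u)$, so $\phi_{{\mathcal L}(u)}(x)$ is not the root of ${\mathcal T}_{{\mathcal L}(u)}$ and $\textsc{SubTree}(x)$ is a genuine element of ${\mathcal L}(u)$. Since $\textsc{SubTree}(x)$ is the minimal member of ${\mathcal L}(u)$ containing $x$ and $N((x,u))$ is one such member, we immediately get $\textsc{SubTree}(x)\subseteq N((x,u))$.

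For the reverse inclusion, set $C=\textsc{SubTree}(x)$. Because $C\in {\mathcal L}(u)$, there is an edge $(x',u)$ from VitType-3$(u)$ with $x'\in S$ such that $C=N((x',u))$. Now $x\in C=N((x',u))$, so applying Lemma \ref{lem : subset relation of nearest mincut} with the two edges taken as $(x',u)$ and $(x,u)$ (in that order) gives $N((x,u))\subseteq N((x',u))=C=\textsc{SubTree}(x)$. Combining the two inclusions yields $\textsc{SubTree}(x)=N((x,u))$, which is exactly the claim.

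The main obstacle has, in effect, already been cleared: the \emph{uniqueness} of the nearest mincut (Lemma \ref{lem : unique nearest mincut for an edge}) and the \textsc{Subset Property} (Lemma \ref{lem : subset relation of nearest mincut}), resting ultimately on the \textsc{Property of Intersection} (Lemma \ref{lem : main lemma for weighted edge}), are what make the family ${\mathcal L}(u)$ laminar and well-behaved. Consequently this step is essentially bookkeeping on the tree encoding of a laminar family; the only points that genuinely need care are verifying that $N((x,u))$ is an actual element of ${\mathcal L}(u)$, that $x$ is mapped to a non-root node, and that Lemma \ref{lem : subset relation of nearest mincut} is invoked with the two edges in the correct orientation. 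No fresh submodularity argument is required here.
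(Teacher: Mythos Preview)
Your proof is correct and follows the same approach the paper implicitly relies on: the paper states this lemma without proof, treating it as a direct consequence of the laminar-family tree construction (Lemma \ref{lem : tree for a laminar family}) together with the \textsc{Subset Property} (Lemma \ref{lem : subset relation of nearest mincut}). You have carefully spelled out the two inclusions and handled the small bookkeeping points (that $N((x,u))\in{\mathcal L}(u)$, that $x$ is not mapped to the root, and the correct orientation when invoking Lemma \ref{lem : subset relation of nearest mincut}), which is exactly what is needed.
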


\paragraph*{Data Structure ${\mathcal F}_3$ for all vital edges from Type-3:} For each nonSteiner vertex $u\in V'$, we construct a tree ${\mathcal T}_{{\mathcal L}(u)}$ based on the laminar family ${\mathcal L}(u)$ consisting of the nearest mincuts for all edges from VitType-3($u$). Since $V'$ contains nonSteiner vertices of $G$ only, there can be at most $n-|S|$ vertices in $V'$. This implies that the overall space occupied by the data structure is ${\mathcal O}(n(n-|S|))$. 

\paragraph*{Reporting a mincut for a vital edge from Type-3 using ${\mathcal F}_3$:} Given any vital edge $e=(x,u)$ from Type-3, where $x\in S$ and $u\in V\setminus S$, by following Lemma \ref{lem : subtree is the nearest cut}, we report the set of vertices stored in $\textsc{SubTree}(x)$ of tree ${\mathcal T}_{{\mathcal L}(u)}$ as the nearest mincut for edge $(x,u)$. 

Note that given any edge $e$ from Type-3 and any value $\Delta$ satisfying $0\le \Delta \le w(e)$, by using the data structure of Lemma \ref{lem : report cap and determine vitality}, we can determine in ${\mathcal O}(1)$ time whether the capacity of $S$-mincut reduces after reducing $w(e)$ by $\Delta$. This leads to the following lemma for answering query \textsc{cut} for all edges from Type-3.

\begin{lemma}[Sensitivity Oracle for Type-3 Edges] \label{lem : type 3 result}
    For any Steiner set $S\subseteq V$, there is an ${\mathcal O}((n-|S|)n)$ space data structure that, given any edge $e$ from Type-3 and any value $\Delta$ satisfying $0\le \Delta\le w(e)$, can report an $S$-mincut $C$ in ${\mathcal O}(|C|)$ time after reducing the capacity of edge $e$ by $\Delta$.  
\end{lemma}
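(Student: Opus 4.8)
The plan is to bolt together the structure ${\mathcal F}_3$ and the query procedure already outlined above, and then check correctness and the space/time bounds. First I would fix the data structure: store (i)~the ${\mathcal O}(n)$-space tree ${\mathcal T}(G)$ of Lemma~\ref{lem : report cap and determine vitality}, which on input $(e,\Delta)$ returns in ${\mathcal O}(1)$ time the capacity $c(C(e))$ of a mincut for $e$ and hence decides whether reducing $w(e)$ by $\Delta$ strictly lowers $\lambda_S$; (ii)~one fixed $S$-mincut of $G$ written out explicitly in ${\mathcal O}(n)$ space; and (iii)~for every nonSteiner vertex $u\in V'$, the rooted tree ${\mathcal T}_{{\mathcal L}(u)}$ of Lemma~\ref{lem : subtree is the nearest cut}. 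Since $|V'|\le n-|S|$ and each ${\mathcal T}_{{\mathcal L}(u)}$ occupies ${\mathcal O}(n)$ space, the total is ${\mathcal O}((n-|S|)n)$, as claimed.

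Next I would spell out the query for $\textsc{cut}(e,\Delta)$ with $e=(x,u)$ of Type-3, say $x\in S$ and $u\in V\setminus S$: query ${\mathcal T}(G)$ for $c(C(e))$; if $c(C(e))-\Delta\ge\lambda_S$, the $S$-mincut capacity is unchanged and I return the stored $S$-mincut of $G$; otherwise $c(C(e))-\Delta<\lambda_S$, and since $\Delta\le w(e)$ this forces $c(C(e))-w(e)<\lambda_S$, i.e.\ $e$ is vital, so $e\in\text{VitType-3}(u)$, and I return the vertex set $\textsc{SubTree}(x)$ read off ${\mathcal T}_{{\mathcal L}(u)}$, which by Lemma~\ref{lem : subtree is the nearest cut} is the (unique, by Lemma~\ref{lem : unique nearest mincut for an edge}) nearest mincut $N(e)$.

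For correctness of the vital branch I would argue as follows. Put $C=N(e)$; since $C$ is a mincut for $e$, the edge $e$ contributes to $C$ and $c(C)=c(C(e))$, so after reducing $w(e)$ by $\Delta$ the capacity of $C$ becomes $c(C(e))-\Delta<\lambda_S$. Every Steiner cut to which $e$ contributes had capacity $\ge c(C(e))$ in $G$, hence $\ge c(C(e))-\Delta$ after the reduction, while every Steiner cut to which $e$ does not contribute still has capacity $\ge\lambda_S>c(C(e))-\Delta$. Thus $c(C(e))-\Delta$ is the new $S$-mincut capacity and $C$ attains it, so $C$ is an $S$-mincut of the modified graph; it is reported by enumerating $\textsc{SubTree}(x)$ in ${\mathcal O}(|C|)$ time, and the preceding vitality test costs ${\mathcal O}(1)$.

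The hard part, which has in fact already been discharged in the lemmas above, is the structural fact that the nearest mincuts of VitType-3$(u)$ form a laminar family on $V$ --- this is exactly the combination of the \textsc{Disjoint}, \textsc{Uniqueness}, and \textsc{Subset} properties (Lemmas~\ref{lem : disjointness of nearest mincuts}, \ref{lem : unique nearest mincut for an edge}, \ref{lem : subset relation of nearest mincut}), all of which are driven by the \textsc{Property of Intersection} (Lemma~\ref{lem : main lemma for weighted edge}) whose proof uses vitality in an essential way (Cases~1.2 and~2.2). Without vitality, nearest mincuts can cross and their union need not even be a Steiner cut (Figure~\ref{fig : type-3 edges}$(iii)$), which would break the ${\mathcal O}(n)$-per-$u$ tree encoding; and restricting attention to VitType-3$(u)$ loses nothing precisely because any query with $\Delta\le w(e)$ that changes $\lambda_S$ already forces $e$ to be vital. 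Finally, Lemma~\ref{lem : tree for a laminar family} turns each laminar family ${\mathcal L}(u)$ into the ${\mathcal O}(n)$-space tree ${\mathcal T}_{{\mathcal L}(u)}$, completing the argument.
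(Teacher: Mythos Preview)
Your proposal is correct and follows essentially the same approach as the paper: build the forest ${\mathcal F}_3=\{{\mathcal T}_{{\mathcal L}(u)}:u\in V'\}$ of laminar trees, use ${\mathcal T}(G)$ from Lemma~\ref{lem : report cap and determine vitality} to test vitality, and in the vital case return $\textsc{SubTree}(x)$. You spell out a couple of points the paper leaves implicit (storing one fixed $S$-mincut for the non-reducing branch, and the verification that $N(e)$ really is an $S$-mincut of the modified graph), but the argument and the data structure are the same.
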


Lemma \ref{lem : type 1 result}, Lemma \ref{lem : type 2 result}, and Lemma \ref{lem : type 3 result} complete the proof of Theorem \ref{thm : main result}(2). 
\begin{algorithm}
\caption{Answering Query \textsc{cut}}
\label{alg : query cut}
\begin{algorithmic}[1]
\Procedure{\textsc{cut}$(e=(x,y),\Delta)$}{}
    \State Let $C$ be a Steiner mincut of $G$;
    \State Assign $\textsc{mincut}\gets C$;

    \State $\textsc{type}\gets \text{the type of edge $e$ determined using the endpoints } \{x,y\}$;        
    \If{$\textsc{type}==1$}
        \State Assign $\textsc{mincut}\gets \text{a mincut for edge $e$ using data structure of Lemma \ref{lem : type 1 result}}$;
    \ElsIf{$\textsc{type}==2$}
        \State Assign $\textsc{mincut}\gets \text{a mincut for edge $e$ using data structure of Lemma \ref{lem : type 2 result}}$;
    \ElsIf{$\textsc{type}==3$}
         \State Verify using the data structure in Lemma \ref{lem : report cap and determine vitality} whether $e$ is vital; \label{alg : step : vitality check}
    \If{$e$ is a vital edge}
        \State Assign $\textsc{mincut}\gets \text{a mincut for edge $e$ using data structure of Lemma \ref{lem : type 3 result}}$;
    \Else
        \State do nothing;
    \EndIf        
    \EndIf
    \State \Return $\textsc{mincut}$;
\EndProcedure
\end{algorithmic}
\end{algorithm}

The pseudo-code for answering query \textsc{cut} is provided in Algorithm \ref{alg : query cut}. Algorithm \ref{alg : query cut} is invoked with the failed edge $e$ and the change in capacity $\Delta$ of edge $e$ satisfying $0\le\Delta \le w(e)$.
In Step \ref{alg : step : vitality check} of Algorithm \ref{alg : query cut}, the change in capacity ($\Delta$) is required to determine if edge $e$ is a vital edge. Otherwise, Algorithm \ref{alg : query cut} fails to report the valid $S$-mincut after reducing the capacity of edge $e$. 

\section{Lower Bound} \label{sec : lower bound}
In this section, we provide lower bounds for the following three problems. Given any undirected weighted graph $G$, designing a data structure that, after failure of any edge in $G$, can (1) report the capacity of $S$-mincut, (2) determine whether the capacity of $S$-mincut has changed, and (3) report an $S$-mincut. 
\begin{figure}
 \centering
    \includegraphics[width=350pt]{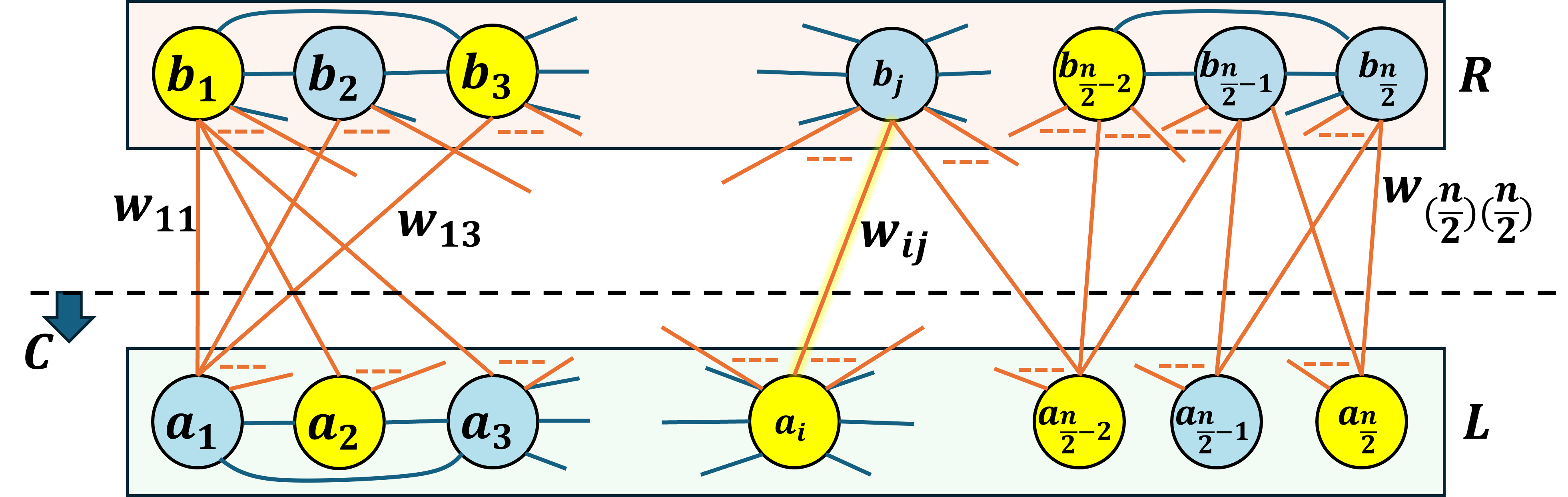} 
   \caption{Graph $G(M)$. Steiner vertices are the yellow vertices and nonSteiner vertices are the blue vertices. Blue edges are of infinite capacities. Every orange edge $(a_i,b_j)$ between set $L$ and $R$ are of capacity $M[i,j]$.}
  \label{fig : reporting capacity lower bound}. 
\end{figure}
\subsection{Reporting Capacity} \label{sec : reporting capacity lower bound}
For any $n\ge 2$, let $M$ be any $\lfloor\frac{n}{2}\rfloor\times \lfloor\frac{n+1}{2}\rfloor$ matrix of integers such that for any $\{i,j\}$ with $i\in [\lfloor \frac{n}{2} \rfloor]$ and $j\in [\lfloor \frac{n+1}{2}\rfloor]$, $M[i, j]$ stores an integer in the range $[1,n^c]$ for some constant $c>0$. Given matrix $M$, we construct the following graph $G(M)$ (refer to Figure \ref{fig : reporting capacity lower bound} for better understanding). 
\paragraph*{Construction of $G(M)$:} The vertex set $V_M$ of $G(M)$ consists of $n$ vertices. Let $S_M\subseteq V_M$ be a Steiner set such that $S_M$ contains at least two vertices. 
For all the rows of matrix $M$, there is a set $L=\{a_1,a_2,\ldots, a_{\lfloor\frac{n}{2}\rfloor}\}$ containing $\lfloor\frac{n}{2}\rfloor$ vertices such that $\lfloor\frac{|S_M|}{2}\rfloor$ vertices of $S_M$ and $\lfloor \frac{n-|S_M|}{2} \rfloor$ vertices of $V_M\setminus S_M$ belong to $L$. For all the columns of matrix $M$, there is a set $R=\{b_1,b_2,\ldots,b_{\lfloor\frac{n+1}{2}\rfloor}\}$ containing $\lfloor \frac{n+1}{2} \rfloor$ vertices such that $\lfloor\frac{|S_M|+1}{2}\rfloor$ vertices of $S_M$ and $\lfloor \frac{n-|S_M|+1}{2} \rfloor$ vertices of $V_M\setminus S_M$ belong to $R$.\\
The set of edges of $G(M)$ are defined as follows. For every $\{i,j\}$ with $i\in [\lfloor \frac{n}{2} \rfloor]$ and $j\in [\lfloor \frac{n+1}{2}\rfloor]$, there is an edge between vertex $a_i$ of set $L$ and vertex $b_j$ of set $R$ of capacity $w_{ij}=M[i,j]$. For every pair of vertices from set $L$, there is an edge of infinite capacity. Similarly, for every pair of vertices from set $R$, there is an edge of infinite capacity.
Let $\lambda$ be the capacity of Steiner mincut for graph $G(M)$ for Steiner set $S_M$.\\

The following lemma establishes a relation between graph $G(M)$ and Matrix $M$.

\begin{lemma} \label{lem : reporting capacity lower bound main lemma}
    For any $\{i,j\}$ with $i\in [\lfloor \frac{n}{2} \rfloor]$ and $j\in [\lfloor \frac{n+1}{2}\rfloor]$, after the failure of edge $(a_i,b_j)$, the capacity of Steiner mincut of graph $G_{M}$ is $\lambda-M[i,j]$.
\end{lemma}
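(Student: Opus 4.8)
The plan is to exploit the rigidity that the infinite-capacity edges force on $G(M)$. First I would observe that any cut of $G(M)$ having \emph{finite} capacity must keep all of $L$ on one side and all of $R$ on the other side: if such a cut separated two vertices of $L$ (or two vertices of $R$), it would contain one of the infinite-capacity edges inside $L$ (respectively inside $R$) as a contributing edge, hence have infinite capacity. Since the construction gives $|L|+|R|=\lfloor \tfrac{n}{2}\rfloor+\lfloor\tfrac{n+1}{2}\rfloor=n=|V_M|$ and $L\cap R=\emptyset$, the only cut $C$ with $\emptyset\neq C\subsetneq V_M$ respecting this constraint is (up to taking complements) the one with $C=L$ and $\overline{C}=R$; call it $C^\star$. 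Its contributing edges are exactly the edges $(a_i,b_j)$ between $L$ and $R$, so $c(C^\star)=\sum_{i,j} w_{ij}=\sum_{i,j} M[i,j]$.

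Next I would verify that $C^\star$ is in fact a Steiner cut, so that it is a legitimate competitor for the Steiner mincut. By construction $L$ receives $\lfloor \tfrac{|S_M|}{2}\rfloor$ vertices of $S_M$ and $R$ receives $\lfloor \tfrac{|S_M|+1}{2}\rfloor$ vertices of $S_M$, and since $|S_M|\ge 2$ both quantities are at least $1$; hence there is a Steiner vertex on each side of $C^\star$. Every other Steiner cut separates two vertices of $L$ or two vertices of $R$ and therefore has infinite capacity. Consequently $\lambda=c(C^\star)=\sum_{i,j} M[i,j]$, and in particular the Steiner mincut of $G(M)$ is attained uniquely at $C^\star$.

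Finally I would run the same argument in $G(M)\setminus\{(a_i,b_j)\}$. Deleting $(a_i,b_j)$ does not remove any infinite-capacity edge, so the structural observation is untouched: $C^\star$ is still the unique Steiner cut of finite capacity, and its capacity is now $\sum_{(i',j')\neq(i,j)} M[i',j']=\lambda-M[i,j]$, while all other Steiner cuts remain infinite. Therefore the capacity of the Steiner mincut of $G(M)\setminus\{(a_i,b_j)\}$ equals $\lambda-M[i,j]$, which is exactly the claim (and, since $M[i,j]\ge 1>0$, it also shows $(a_i,b_j)$ is vital).

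There is no real obstacle here; the only points needing care are checking $|L|+|R|=|V_M|$ so that ``all of $L$ on one side'' pins down a \emph{unique} finite cut, and checking that the Steiner-set split places at least one Steiner vertex in each of $L$ and $R$ so that $C^\star$ qualifies as a Steiner cut. Both follow immediately from the floor identities written into the construction of $G(M)$.
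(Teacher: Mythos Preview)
Your proof is correct and follows essentially the same approach as the paper: identify $C^\star=L$ as the unique finite-capacity cut (because the infinite-weight cliques on $L$ and on $R$ make every other cut infinite), check that it is a Steiner cut since both $L$ and $R$ receive at least one Steiner vertex, and conclude that deleting $(a_i,b_j)$ lowers its capacity by exactly $M[i,j]$. Your version is more explicit about the floor arithmetic and the uniqueness of $C^\star$, but the argument is the same as the paper's.
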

\begin{proof}
    The capacity of Steiner mincut of $G(M)$ is $\lambda$. Let us consider the cut $C=L$. It follows from the construction of $G(M)$ that each of two sets $L$ and $R$ contains at least one Steiner vertex. Therefore, cut $C$ is a Steiner cut. Moreover, since $C$ is the only cut of finite capacity, so, it is the only Steiner mincut of $G(M)$. The set of contributing edges of $C$ is the set of edges that lie between set $L$ and set $R$. Therefore, after the failure of edge $(a_i,b_j)$, the capacity of $C$ reduces by the amount $w_{ij}=M[i,j]$. Hence the resulting capacity of Steiner mincut is $\lambda-M[i,j]$.
\end{proof}
Let $D(G(M))$ be a data structure that can report the capacity of Steiner mincut after the failure of any edge in $G$. By Lemma \ref{lem : reporting capacity lower bound main lemma}, data structure $D(G(M))$ can also report $M[i,j]$ for  any $\{i,j\}$ with $i\in [\lfloor \frac{n}{2} \rfloor]$ and $j\in [\lfloor \frac{n+1}{2}\rfloor]$ as follows. It reports $\lambda-\lambda'$ as the value of $M[i,j]$, where $\lambda'$ is the value reported by $D(G(M))$ after the failure of edge $(a_i,b_j)$ in $G(M)$.

It is easy to observe that there are $(2^{(\lfloor\frac{n}{2}\rfloor)(\lfloor\frac{n+1}{2}\rfloor)c\log~n})$ different instances of matrix $M$ is possible. By Lemma \ref{lem : reporting capacity lower bound main lemma}, for every pair of different instances of matrix $M$, the encoding of the corresponding data structures has to be different. Therefore, there is an instance of the matrix $M$ such that the encoding of the corresponding data structure requires $\Omega(n^2\log~n)$ bits of space. This completes the proof of Theorem \ref{thm : lower bound on reporting capacity}.

\subsection{Determining the Change in Capacity}
Suppose $H=(L_H,R_H,E_H)$ is a undirected unweighted bipartite graph on $n=|L_H\cup R_H|$ vertices and $m=|E_H|$ edges such that $L_H$ contains $\lfloor \frac{n}{2}\rfloor$ and $R_H$ contains $\lfloor\frac{n+1}{2}\rfloor$ vertices. Let ${\mathcal B}$ be the class of all bipartite graphs $H$. Given an instance $B$ of ${\mathcal B}$, we construct the following graph $G(B)$.

\paragraph*{Construction of $G(B)$:} Graph $G(B)=(V_B,E_B)$ is a undirected weighted graph. The vertex set of $G(B)$ is the same as $B$. That is, there are two subsets of $V_B$, one is $L_H$ and the other is $R_H$. To obtain $G(B)$, the following edges are added to $B$. For every pair of vertices $\{a,b\}$ in $L_H$ (likewise in $R_H$), we add an edge $(a,b)$ of capacity infinity. For every pair of vertices $\{a,b\}$ with $a\in L_H$ and $b\in R_H$, add an edge $(a,b)$ in $G(B)$ and the capacity of $(a,b)$ is given as follows. If $(a,b)$ is an edge in $B$, then the capacity of $(a,b)$ in $G(B)$ is $1$, otherwise, the capacity of $(a,b)$ in $G(B)$ is $0$. Let $S\subseteq V_B$ be a Steiner set containing at least two and at most $n$ vertices. There is at least one Steiner vertex in $L_H$ and at least one Steiner vertex in $R_H$.  \\

The following lemma establishes a close relationship between the existence of an edge in graph $B$ and the problem of determining whether the capacity of $S$-mincut has changed after the failure of any edge in graph $G(B)$.
\begin{lemma} \label{lem : main lemma for determining capacity lower bound}
    An edge $(a,b)$ is present in $B$ if and only if upon failure of an edge $(a,b)$ with $a\in L_H$ and $b\in R_H$, the capacity of $S$-mincut has changed in $G(B)$. 
\end{lemma}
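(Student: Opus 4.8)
The plan is to analyze the structure of $S$-mincuts in $G(B)$ directly, using the same idea that worked in the proof of Lemma~\ref{lem : reporting capacity lower bound main lemma}. First I would observe that the only finite-capacity Steiner cut in $G(B)$ is $C = L_H$ (equivalently $\overline{C}=R_H$): any cut that separates a vertex of $L_H$ from another vertex of $L_H$, or a vertex of $R_H$ from another vertex of $R_H$, cuts an infinite-capacity edge and hence has infinite capacity; and $C=L_H$ is a Steiner cut because both $L_H$ and $R_H$ contain at least one Steiner vertex by construction. Therefore $\lambda_S = c(L_H)$, and the contributing edges of $C$ are exactly the $L_H$--$R_H$ edges, each of capacity $0$ or $1$. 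So $\lambda_S$ equals the number of edges of $B$, i.e. $|E_H|=m$.

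Next I would argue the two directions of the biconditional. Suppose $(a,b)$ is an edge of $B$ with $a\in L_H$, $b\in R_H$. Then in $G(B)$ the edge $(a,b)$ has capacity $1$ and contributes to the unique Steiner mincut $C=L_H$. After its failure, the only finite-capacity Steiner cut is still $C=L_H$ (the argument about infinite edges is unchanged, and the finitely many $L_H$--$R_H$ edges still form the only way to get finite capacity), but now its capacity is $m-1<m=\lambda_S$. Hence the capacity of $S$-mincut strictly decreases. Conversely, suppose $(a,b)$ is not an edge of $B$; then its capacity in $G(B)$ is $0$, so removing it does not change the capacity of any cut at all, and in particular $\lambda_S$ is unchanged. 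This establishes the claimed equivalence.

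One subtlety worth spelling out is why no \emph{new} finite-capacity Steiner cut can appear after deleting an $L_H$--$R_H$ edge: deletion of an edge only increases (weakly) the capacity of every cut that did not contain that edge as a contributing edge, and decreases the capacity of cuts that did; since the infinite edges are untouched, every cut that is not a subset-complement pair $(L_H,R_H)$ still has infinite capacity, so $C=L_H$ remains the unique candidate for the Steiner mincut. I expect this verification of uniqueness to be the only place requiring any care; the rest is a direct capacity computation. (As in Lemma~\ref{lem : reporting capacity lower bound main lemma}, edges of capacity $0$ are a harmless formal device — one may equivalently say that the ``query edge'' $(a,b)$ either exists with capacity $1$ in $G(B)$ or is absent, in which case the statement becomes vacuous for that pair; either convention yields the same conclusion.)

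With Lemma~\ref{lem : main lemma for determining capacity lower bound} in hand, the information-theoretic lower bound follows in the same style as the previous subsection: a data structure that can determine, after the failure of any edge, whether the capacity of $S$-mincut changed lets one recover, for each pair $(a,b)$ with $a\in L_H$, $b\in R_H$, exactly the bit indicating whether $(a,b)\in E_H$; since there are $2^{\lfloor n/2\rfloor\lfloor (n+1)/2\rfloor}$ distinct bipartite graphs $B\in{\mathcal B}$ and each forces a distinct encoding, some instance needs $\Omega(n^2)$ bits. This is the route I would take to the subsequent theorem, but for the lemma itself the structural argument above is all that is needed.
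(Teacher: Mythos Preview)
Your proposal is correct and follows essentially the same approach as the paper: both argue that $C=L_H$ is the unique finite-capacity Steiner cut in $G(B)$ (using the infinite-capacity edges within $L_H$ and within $R_H$), and then observe that failure of a capacity-$1$ edge reduces $c(C)$ while failure of a capacity-$0$ edge does not. Your write-up is in fact slightly more careful than the paper's, in that you explicitly verify that no new finite-capacity Steiner cut can appear after deletion and you phrase the converse direction directly as a contrapositive rather than the paper's somewhat terse ``$(a,b)$ cannot have zero capacity'' step.
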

\begin{proof}
    Suppose edge $(a,b)$ is present in $B$. It follows from the construction of $G(B)$ that there is an edge $(a,b)$ in $G(B)$ of capacity $1$. In graph $G(B)$, the set of vertices belonging to $C=L_H$ (or the complement set, that is, $R_H$) is the only cut of finite capacity. Moreover, it is ensured in the construction that both $L_H$ and $R_H$ contain at least one Steiner vertex of $G(B)$. Therefore, $C$ is the only Steiner cut of finite capacity. This implies $C$ is a Steiner mincut of $G(B)$. Moreover, edge $(a,b)$ is a contributing edge of $C$ of capacity $1$. As a result, after the failure of edge $(a,b)$ in $G(B)$, the capacity of Steiner mincut decreases. 

    Suppose upon failure of an edge $(a,b)$, the capacity of $S$-mincut has changed in $G(B)$. As established in the proof of forward direction, $C=L_H$ is the only Steiner mincut of $G(B)$. Moreover, since failure of edge $(a,b)$ changes the capacity of Steiner mincut, it implies that $(a,b)$ cannot have zero capacity. So, edge $(a,b)$ has capacity $1$ in $G(B)$. Therefore, by the construction of $G(B)$, edge $(a,b)$ definitely exists in $B$.
\end{proof}
Let $D(G(B))$ be any data structure for graph $G(B)$ that, after the failure of any edge in $G(B)$, can determine whether the capacity of $S$-mincut has changed. It follows from Lemma \ref{lem : main lemma for determining capacity lower bound} that, for any given pair of vertices $u\in L_H$ and $v\in R_H$, data structure $D(G(B))$ can be used to determine whether edge $(u,v)$ is present in $B$. For a pair of instances $B_1$ and $B_2$ of ${\mathcal B}$, there is at least one edge $e$ such that $e$ is present in $B_1$ but not in $B_2$ or vice versa. Therefore, by Lemma \ref{lem : main lemma for determining capacity lower bound}, the encoding of data structure $D(G(B_1))$ must be different from the encoding of data structure $D(G(B_2))$. It is easy to observe that there are $\Omega(2^{(\lfloor\frac{n}{2}\rfloor)(\lfloor\frac{n+1}{2}\rfloor)})$ different possible instances of ${\mathcal B}$. As a result, there exists at least one instance $B$ from ${\mathcal B}$ such that the encoding of $D(G(B))$ requires $\Omega(n^2)$ bits of space. This completes the proof of the following theorem.
\begin{theorem} \label{thm : lower bound determining capacity has changed or not}
     Let $G=(V,E)$ be an undirected weighted graph on $n$ vertices. For every Steiner set $S\subseteq V$, any data structure that can determine whether the capacity of Steiner mincut is changed after the failure of any edge from $G$ must occupy $\Omega(n^2)$ bits of space in the worst case, irrespective of the query time. 
\end{theorem}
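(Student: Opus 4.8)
The plan is to prove Theorem~\ref{thm : lower bound determining capacity has changed or not} by an encoding (information-theoretic) argument, essentially reusing the gadget $G(B)$ from Section~\ref{sec : reporting capacity lower bound}'s sibling construction. First I would fix a vertex count $n$ and consider the family $\mathcal{B}$ of all bipartite graphs on a fixed bipartition $(L_H,R_H)$ with $|L_H|=\lfloor n/2\rfloor$ and $|R_H|=\lfloor(n+1)/2\rfloor$; there are $2^{\lfloor n/2\rfloor\cdot\lfloor(n+1)/2\rfloor}=2^{\Theta(n^2)}$ such graphs, since each of the $\lfloor n/2\rfloor\cdot\lfloor(n+1)/2\rfloor$ potential cross-edges is independently present or absent. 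For each $B\in\mathcal{B}$, build $G(B)$ as described: keep the bipartition, put an infinite-capacity clique inside $L_H$ and inside $R_H$, make every cross-pair $(a,b)$ an edge whose capacity is $1$ if $ab\in B$ and $0$ otherwise, and pick the Steiner set $S$ so that $L_H$ and $R_H$ each contain at least one Steiner vertex. The single key structural fact, already isolated as Lemma~\ref{lem : main lemma for determining capacity lower bound}, is that $C=L_H$ is the unique finite-capacity Steiner cut of $G(B)$, and hence the unique Steiner mincut; consequently, the failure of a cross-edge $(a,b)$ changes $\lambda_S$ if and only if that edge had capacity $1$, i.e.\ if and only if $ab$ is an edge of $B$.

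Next I would run the standard reconstruction step. Suppose $D$ is any data structure that, given $G$ and any failed edge, correctly reports whether $\lambda_S$ has changed. Applying $D$ to $G(B)$ and querying it on each of the $\lfloor n/2\rfloor\cdot\lfloor(n+1)/2\rfloor$ cross-edges recovers, by Lemma~\ref{lem : main lemma for determining capacity lower bound}, the full adjacency matrix of $B$ across the bipartition, hence $B$ itself. Therefore the map $B\mapsD(G(B))$ (where I mean the bit-encoding of the data structure) is injective on $\mathcal{B}$: if $B_1\neq B_2$ then some cross-edge is present in exactly one of them, so $D(G(B_1))$ and $D(G(B_2))$ must answer that query differently and thus be distinct encodings. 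An injection from a set of size $2^{\Theta(n^2)}$ into binary strings forces some string to have length $\Omega(n^2)$ bits, which gives the claimed worst-case lower bound, and note it is completely independent of the query time and of $|S|$ since the construction works for any valid Steiner set containing one vertex from each side. This completes the argument.

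The only genuine content is Lemma~\ref{lem : main lemma for determining capacity lower bound}, and that is already proved in the excerpt, so the remaining work is purely the counting/injectivity bookkeeping above; I do not expect a real obstacle. The one point that deserves a sentence of care is the uniqueness of the finite-capacity Steiner cut: every cut that splits either $L_H$ or $R_H$ nontrivially cuts an infinite-capacity clique edge and so has infinite capacity, while $C=L_H$ has finite capacity equal to the number of edges of $B$ (at most $\lfloor n/2\rfloor\cdot\lfloor(n+1)/2\rfloor$), and it is a Steiner cut precisely because we placed a Steiner vertex on each side; thus $C$ is the unique Steiner mincut and a single cross-edge failure can only affect $\lambda_S$ through $C$. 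With that in hand, the injectivity and the pigeonhole step are routine, and the theorem follows. If one wants a cleaner count, note $\lfloor n/2\rfloor\cdot\lfloor(n+1)/2\rfloor\ge (n-1)^2/4=\Omega(n^2)$, so the family has at least $2^{\Omega(n^2)}$ members, yielding the stated $\Omega(n^2)$-bit bound.
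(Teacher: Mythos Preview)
Your proposal is correct and follows essentially the same approach as the paper: the same gadget $G(B)$, the same key lemma (Lemma~\ref{lem : main lemma for determining capacity lower bound}) establishing that a cross-edge failure changes $\lambda_S$ iff the edge is present in $B$, and the same counting/injectivity argument over the $2^{\Omega(n^2)}$ bipartite graphs. The only cosmetic difference is that you spell out the uniqueness of the finite-capacity Steiner cut and the pigeonhole bound $\lfloor n/2\rfloor\cdot\lfloor(n+1)/2\rfloor\ge (n-1)^2/4$ a bit more explicitly than the paper does.
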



\subsection{Reporting Cut}
\begin{figure}
 \centering
    \includegraphics[width=350pt]{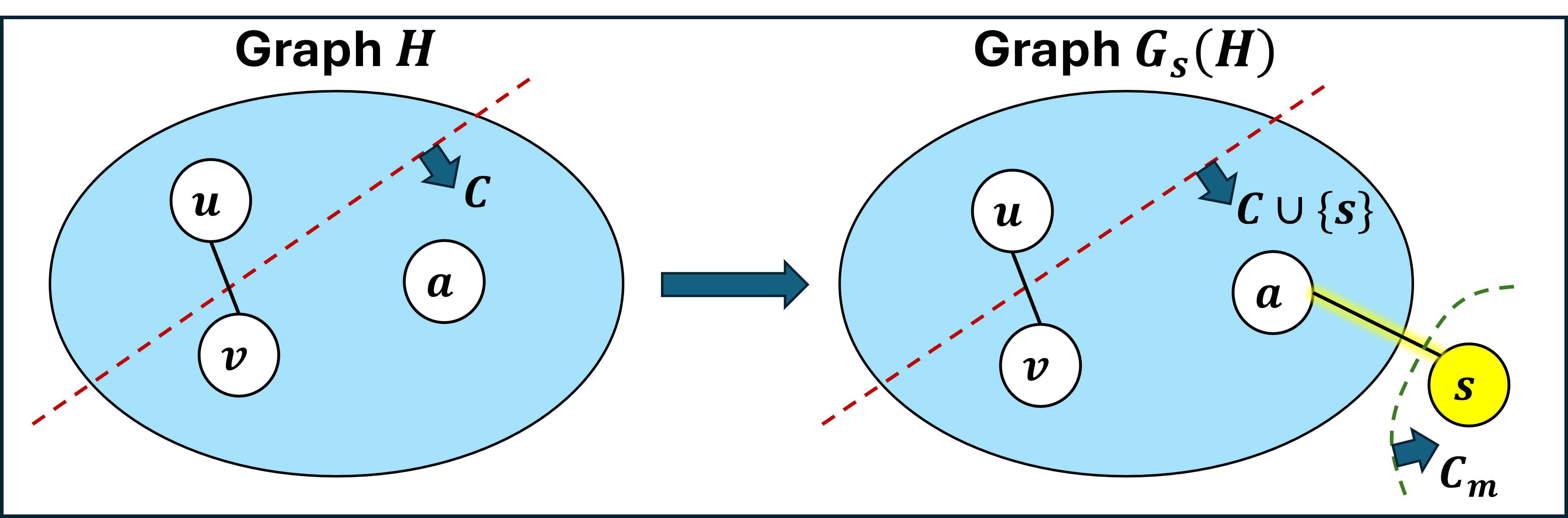} 
   \caption{Transformation of graph $H$ to Graph $G_s(H)$. For mincut $C$ for a vital edge $(u,v)$ in $H$, there is a mincut $C\cup \{s\}$ for vital edge $(u,v)$ in $G_s(H)$.}
  \label{fig : reporting cut lower bound}. 
\end{figure}
Let $H=(V_H,E_H)$ be an undirected weighted graph on $n$ vertices with a Steiner set $S_H\subseteq V_H$. Given graph $H$, we construct the following graph $G_s(H)$ (refer to Figure \ref{fig : reporting cut lower bound} for better readability).

\paragraph*{Construction of $G_s(H)$:} Let $\lambda$ be the capacity of Steiner mincut of graph $H$. Let $\alpha=\max\{c(C(e))-w(e)\}$ for every vital edge $e\in G$, where $C(e)$ denote a mincut for edge $e$ in $H$. Graph $G_s(H)$ is obtained by adding one vertex $s$ and an edge $(s,a)$ of capacity $\lambda'=\frac{\lambda+\alpha}{2}$ to $H$ where $a$ is any vertex of $H$. Set $S=S_H\cup\{s\}$ is the Steiner 
set for graph $G_s(H)$.\\

Without loss of generality, assume that, for any cut $C$ in $H$ and in $G_s(H)$, vertex $a$ belongs to $C$; otherwise, consider $\overline{C}$. Let $C_m=V_H$ be the Steiner cut of capacity $\lambda'$ in $G_s(H)$. By construction of $G_s(H)$, the following lemma holds except Steiner cut $C_m$.
\begin{lemma} \label{lem : property of GsH}
    Except Steiner cut $C_m$, $C$ is a Steiner cut in $H$ if and only if $C\cup \{s\}$ is a Steiner cut in $G_s(H)$. Moreover, the capacity of $C$ in $H$ is the same as the capacity of $C\cup \{s\}$ in $G_s(H)$.
\end{lemma}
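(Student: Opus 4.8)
\textbf{Proof plan for Lemma \ref{lem : property of GsH}.}

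The plan is to prove both directions of the equivalence while simultaneously tracking capacities, exploiting that the only structural change from $H$ to $G_s(H)$ is the addition of the vertex $s$ together with the single edge $(s,a)$ of capacity $\lambda'$, and that the Steiner set grows from $S_H$ to $S_H\cup\{s\}$. Throughout, by the stated convention, every cut $C$ under consideration (in either graph) is taken to contain the fixed vertex $a$; since $s$ is adjacent only to $a$, placing $s$ on the $a$-side of any cut means the edge $(s,a)$ never contributes. This is the key bookkeeping observation that keeps capacities unchanged.

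First I would prove the forward direction. Let $C\neq C_m$ be a Steiner cut in $H$, so $a\in C$ and there exist Steiner vertices $p\in C$, $q\in \overline{C}^{H}=V_H\setminus C$ with $p,q\in S_H$. Consider $C'=C\cup\{s\}$ in $G_s(H)$. Since $s\in C'$ and $s\in S=S_H\cup\{s\}$, while $q\in \overline{C'}=V_H\setminus C$ is a Steiner vertex of $G_s(H)$ as well (because $S_H\subseteq S$), the set $C'$ is a Steiner cut in $G_s(H)$ (in fact $p$ and $q$ already witness this). For the capacity claim: the contributing edges of $C'$ in $G_s(H)$ are exactly the contributing edges of $C$ in $H$, together with the edge $(s,a)$ only if it is cut — but $s,a\in C'$, so $(s,a)$ does not contribute. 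Hence $c_{G_s(H)}(C')=c_H(C)$. I should also note why we must exclude $C_m=V_H$: the set $V_H$ corresponds to $C=V_H$ in $H$, which is not a cut there (its complement is empty), so the equivalence genuinely needs this single exception, and conversely $C_m\cup\{s\}=V_{G_s(H)}$ is not a cut either, so no information is lost.

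Next I would prove the converse direction. Suppose $D$ is a Steiner cut in $G_s(H)$ with $a\in D$ and $D\neq C_m$; I want to write $D=C\cup\{s\}$ for a Steiner cut $C$ in $H$. If $s\in D$, set $C=D\setminus\{s\}=D\cap V_H$; if $s\notin D$, then since by convention $a\in D$ and $D$ is a Steiner cut, $D$ already lies inside $V_H$, but then $\overline{D}$ contains $s$, and I would observe that in that case $\overline{D}\cup\{s\}$ has the same capacity and reduces to the first case after swapping $D$ with $\overline{D}$ — here I need to be slightly careful that swapping is consistent with the ``$a\in C$'' convention, and handle it by noting $a\in D$ forces $s\in \overline D$ only when $D\neq V_H$; the case $D=V_H$ is exactly $C_m$, already excluded. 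So assume $s\in D$ and $C=D\cap V_H$. Because $s$ is adjacent only to $a\in D$, removing $s$ does not change which $V_H$–$V_H$ edges are cut, so $c_H(C)=c_{G_s(H)}(D)$. Finally, $C$ is a Steiner cut in $H$: $D$ being a Steiner cut in $G_s(H)$ gives Steiner vertices on both sides; if the witnessing pair uses $s$, then $s\in D$ and some $q\in S_H$ lies in $\overline D\subseteq V_H$, and we still need a Steiner vertex of $S_H$ inside $C$ — this is where I would invoke $C\neq C_m$, i.e. $C\subsetneq V_H$, combined with the fact that $H$ itself has a Steiner cut of finite capacity (so $S_H$ is not entirely contained in $C$ unless... ) — actually the cleanest argument is: since $S_H$ contains at least two vertices and $C\neq V_H$, if no vertex of $S_H$ were in $\overline{C}$ then... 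I would instead argue directly that the only way $D$ fails to restrict to a Steiner cut of $H$ is $C\cap S_H=\emptyset$ or $C\supseteq S_H$; the former contradicts $a$ being... hmm.

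The main obstacle I anticipate is precisely this last point: ruling out the degenerate possibility that $D\cap V_H$ separates no pair of $S_H$-vertices (so that $D$ is a Steiner cut in $G_s(H)$ only by virtue of $s$ sitting alone on one side). Resolving it requires using that $D\neq C_m=V_H$ and that $C=D\cap V_H$ is a proper, nonempty subset of $V_H$ containing $a$; together with the placement of $s$ and the definition of the Steiner set, one shows $S_H$ must meet both $C$ and $V_H\setminus C$, since otherwise $D$ or $\overline D$ would contain all of $S$, contradicting that $D$ is a cut. I expect this to be a short but slightly fiddly case analysis, and everything else (the capacity preservation and the forward direction) to be routine given the convention that $a$ — hence $s$ — lies on a fixed side.
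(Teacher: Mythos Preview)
The paper offers no proof beyond the sentence ``By construction of $G_s(H)$, the following lemma holds,'' so there is nothing substantive to compare your argument against. Your forward direction and the capacity claim are correct and essentially what the paper has in mind: since $s$ is adjacent only to $a$ and both lie on the same side, the edge $(s,a)$ never contributes, and adding $s$ to the Steiner set can only help a cut qualify as Steiner.

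Your converse direction, however, has a genuine gap, and it is exactly the one you flagged but did not resolve. Take $D=C\cup\{s\}$ with $a\in C\subsetneq V_H$ and suppose $S_H\cap C=\emptyset$ (this is possible because the construction allows $a$ to be \emph{any} vertex of $H$, in particular $a\notin S_H$). Then $D$ is a Steiner cut in $G_s(H)$, since $s\in S\cap D$ and $S_H\subseteq S\cap\overline{D}$, yet $C$ is \emph{not} a Steiner cut in $H$ because it contains no vertex of $S_H$. Your proposed fix, ``otherwise $D$ or $\overline{D}$ would contain all of $S$,'' handles only the case $S_H\subseteq C$ (then indeed $S\subseteq D$); it says nothing when $S_H\cap C=\emptyset$, where $D\cap S=\{s\}$ and $\overline{D}\cap S=S_H$ are both nonempty. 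So the ``short but slightly fiddly case analysis'' you anticipate cannot succeed as stated.

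The cleanest repair is to strengthen the construction by choosing $a\in S_H$ (permitted, since $|S_H|\ge 2$ and the paper says ``$a$ is any vertex of $H$''). Then $a\in C$ forces $S_H\cap C\neq\emptyset$ automatically, the bad case disappears, and your argument goes through. Alternatively, note that for the paper's downstream uses one only needs that every Steiner cut of $G_s(H)$ other than $C_m$ has capacity at least $\lambda$ and that mincuts for edges of $H$ correspond; both follow once $a\in S_H$. Your handling of the side case $s\notin D$ is also tangled (e.g.\ $\overline{D}\cup\{s\}=\overline{D}$ already), but that case is outside the scope of the biconditional anyway, since such $D$ is not of the form $C\cup\{s\}$.
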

In graph $H$, for every vital edge $e$ and a mincut $C(e)$ for $e$, by Definition \ref{def : mincut for an edge} and Definition \ref{def : vital edge}, $c(C(e))-w(e)<\lambda$. Hence, the value of $\alpha$ is strictly smaller than the Steiner mincut of $H$, that is, $\alpha<\lambda$. Moreover, since $\lambda'$ is the average of $\lambda$ and $\alpha$, therefore, $\lambda'<\lambda$. In graph $G_s(H)$, $s$ is a Steiner vertex, and hence, $C_m=V_H$ defines a Steiner cut because $S\setminus\{s\} \subseteq V_H$. Also, for every Steiner cut $C$ except $C_m$ in $G_s(H)$, by Lemma \ref{lem : property of GsH}, there is a Steiner cut $C\setminus \{s\}$ in $H$ having the same capacity as $C$. So, except $C_m$, the capacity of every Steiner cut in $G_s(H)$ is at least $\lambda$. Therefore, the capacity of Steiner mincut in $G_s(H)$ is $\lambda'$. Moreover, $C_m$ is the only Steiner mincut in $G_s(H)$. Let us now prove the following lemma.
\begin{lemma} \label{lem : vital in H iff vital in GsH}
    Except edge $(s,a)$, an edge $e$ is a vital edge in $G_s(H)$ if and only if edge $e$ is a vital edge in $H$.
\end{lemma}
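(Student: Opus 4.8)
The plan is to prove \Cref{lem : vital in H iff vital in GsH} by relating mincuts for an edge $e$ in $H$ with mincuts for $e$ in $G_s(H)$, using \Cref{lem : property of GsH} as the main bridge. Throughout, fix an edge $e\ne (s,a)$; note $e$ lies entirely inside $V_H$, so $e$ contributes to a cut $C$ of $G_s(H)$ if and only if $e$ contributes to $C\setminus\{s\}$ in $H$ (when $s\in C$) or to $C$ itself (when $s\notin C$). The key quantitative facts already established in the excerpt are: the Steiner mincut of $H$ has capacity $\lambda$, the Steiner mincut of $G_s(H)$ has capacity $\lambda'=\frac{\lambda+\alpha}{2}<\lambda$, and $C_m=V_H$ is the \emph{unique} Steiner mincut of $G_s(H)$. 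Also recall $\alpha=\max\{c(C(e))-w(e)\}$ over vital edges $e$ of $H$, so for every vital edge $e$ of $H$ we have $c(C(e))-w(e)\le\alpha<\lambda'$ where $C(e)$ is a mincut for $e$ in $H$.

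First I would handle the forward direction. Suppose $e$ is vital in $G_s(H)$, and let $C$ be a mincut for $e$ in $G_s(H)$; I may assume $a\in C$. Since $e$ does not touch $s$ and $e=(s,a)$ is excluded, $C\ne C_m$ is forced as soon as we observe $e$ must contribute to $C$ while no edge of $H$ contributes to $C_m=V_H$ — actually more carefully, $C_m$ has only the single contributing edge $(s,a)$, so $C\ne C_m$. Hence by \Cref{lem : property of GsH}, $C\setminus\{s\}$ (if $s\in C$) or $C$ (if $s\notin C$) is a Steiner cut in $H$ of the same capacity, and $e$ contributes to it. Its capacity is $c(C)$, which is a mincut-for-$e$ capacity in $G_s(H)$; I need to argue this capacity equals the mincut-for-$e$ capacity in $H$. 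This follows because \Cref{lem : property of GsH} gives a capacity-preserving bijection between all Steiner cuts of $H$ and all Steiner cuts of $G_s(H)$ other than $C_m$, and $C_m$ has no contributing edge from $H$; so the minimum capacity among Steiner cuts to which $e$ contributes is the same in both graphs. Thus $c(C(e))$ in $H$ equals $c(C)$ in $G_s(H)$. Vitality of $e$ in $G_s(H)$ means $c(C)-w(e)<\lambda'<\lambda$, so $c(C(e))-w(e)<\lambda$ in $H$, i.e.\ $e$ is vital in $H$.

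For the reverse direction, suppose $e$ is vital in $H$ with mincut $C(e)$ in $H$, so $c(C(e))-w(e)<\lambda$. By the same capacity-preserving correspondence, the mincut-for-$e$ capacity in $G_s(H)$ equals $c(C(e))$, realized by a cut $C$ (namely $C(e)\cup\{s\}$ if needed, which is $\ne C_m$ since $e$ must contribute to it and $C_m$ has no such contributing edge). I need $c(C(e))-w(e)<\lambda'$: this is exactly where $\alpha$ was chosen — since $e$ is vital in $H$, $c(C(e))-w(e)\le\alpha<\lambda'=\frac{\lambda+\alpha}{2}$. Hence after removing $e$ from $G_s(H)$, the cut $C$ drops to capacity $c(C(e))-w(e)<\lambda'$, a Steiner cut strictly below the current Steiner mincut capacity, so $e$ is vital in $G_s(H)$. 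The main obstacle, and the step deserving the most care, is the bookkeeping around $C_m$: one must consistently check that $C_m=V_H$ never interferes — it is never a mincut for $e$ (its only contributing edge is $(s,a)$), and it is excluded from the bijection of \Cref{lem : property of GsH} — and that the capacity-preserving bijection really does transfer the "mincut for an edge" quantity faithfully in both directions. Everything else is a direct application of \Cref{lem : property of GsH} and the numeric inequality $\alpha<\lambda'<\lambda$.
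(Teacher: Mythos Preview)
Your proposal is correct and follows essentially the same approach as the paper: both directions use \Cref{lem : property of GsH} to identify the mincut-for-$e$ capacity in $H$ with that in $G_s(H)$ (after ruling out $C_m$, which has $(s,a)$ as its only contributing edge), and then invoke the chain $\alpha<\lambda'<\lambda$ to transfer vitality. If anything, your write-up is slightly more careful than the paper's in explicitly arguing why $C_m$ never interferes and in using $c(C(e))-w(e)\le\alpha$ rather than the paper's looser phrasing.
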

\begin{proof}
    Suppose edge $e$ is a vital edge in $G_s(H)$. By construction, except edge $(s,a)$ and vertex $s$, the graph $G_s(H)$ is the same as graph $H$. Moreover, by Definition \ref{def : vital edge}, after the removal of any vital edge $e$ in $G_s(H)$, the capacity of Steiner mincut of $G_s(H)$ is strictly less than $\lambda'$. It follows from Lemma \ref{lem : property of GsH} that the capacity of mincut for edge $e$ in $H$ is the same as the capacity of mincut for edge $e$ in $G_s(H)$. Therefore, after the removal of edge $e$ from $H$, the capacity of mincut for edge $e$ in $H$ is also strictly less than $\lambda'$. Since $\lambda'<\lambda$, the capacity of Steiner mincut in $H$ is reduced. So, every vital edge $e$ of $G_s(H)$, except edge $(s,a)$, is also a vital edge in $H$. This completes the proof of the forward direction.

    Let us now prove the converse part. Suppose $e$ is a vital edge in $H$. So, after the removal of edge $e$ from $H$, the capacity of Steiner mincut in $H$ is $\lambda-w(e)$. By construction of $G_s(H)$, for every vital edge $e'$ in $H$, the capacity of Steiner mincut of $G_s(H)$ is strictly greater than $\lambda-w(e')$, that is, $\lambda'>\lambda-w(e')$. Moreover, by Lemma \ref{lem : property of GsH}, for every mincut $C$ for vital edge $e$ in $H$, there is a mincut $C\cup \{s\}$ in $G_s(H)$ such that the capacity of $C$ in $H$ is the same as the capacity of $C\cup \{s\}$ in $G_s(H)$. So, the removal of a vital edge $e$ from $G_s(H)$ reduces the capacity of Steiner mincut in $G_s(H)$ to $\lambda-w(e)$ from $\lambda'$. Therefore, every vital edge $e$ in $H$ is also a vital edge in $G_s(H)$. 
\end{proof}


We now establish the following interesting relation between graph $H$ and graph $G_s(H)$.
\begin{lemma} \label{lem : main lemma for reporting cut lower bound}
    After the failure of any edge $(x,y)\ne (s,a)$ in $G_s(H)$, Steiner cut $C_m$ does not remain the Steiner mincut for $G_s(H)$ if and only if after the failure of edge $(x,y)$ in $H$, the capacity of Steiner mincut has changed in $H$. 
\end{lemma}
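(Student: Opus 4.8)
The plan is to prove the biconditional by analyzing what can happen to the Steiner mincut of $G_s(H)$ after deleting an edge $(x,y)\ne(s,a)$, using the structural facts already established: $C_m=V_H$ is the unique Steiner mincut of $G_s(H)$ with capacity $\lambda'$; every other Steiner cut of $G_s(H)$ has the form $C\cup\{s\}$ for a Steiner cut $C$ of $H$ with the same capacity (Lemma on property of $G_s(H)$), hence capacity at least $\lambda$; and $\alpha<\lambda'<\lambda$ where $\alpha=\max\{c(C(e))-w(e)\}$ over vital edges $e$ of $H$. Note that since $(x,y)\ne(s,a)$, the edge $(x,y)$ lies entirely inside $H$, so it does not contribute to $C_m$; thus $c(C_m)=\lambda'$ is unchanged by the failure. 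Therefore $C_m$ ceases to be a Steiner mincut of $G_s(H)$ \emph{iff} after the failure some other Steiner cut of $G_s(H)$ has capacity strictly less than $\lambda'$.

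For the forward direction, suppose $C_m$ is no longer a Steiner mincut of $G_s(H)\setminus\{(x,y)\}$. By the observation above, there is a Steiner cut $C\cup\{s\}$ of $G_s(H)$ (with $C$ a Steiner cut of $H$) whose capacity after the failure is $<\lambda'<\lambda$. Since $(x,y)$ is an edge of $H$ and lies inside $C_m=V_H$, the amount by which the capacity of $C\cup\{s\}$ drops equals the amount by which the capacity of $C$ drops in $H$ (the edge contributes to $C\cup\{s\}$ in $G_s(H)$ exactly when it contributes to $C$ in $H$, and $(s,a)$ is untouched). Hence in $H\setminus\{(x,y)\}$ the cut $C$ is a Steiner cut of capacity $<\lambda$, so the Steiner mincut capacity of $H$ has strictly decreased; in particular it has changed.

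For the converse, suppose the Steiner mincut capacity of $H$ changes after deleting $(x,y)$; since capacities can only decrease under edge deletion, it strictly decreases, so $(x,y)$ is a vital edge of $H$. Let $C=C((x,y))$ be a mincut for $(x,y)$ in $H$; then in $H\setminus\{(x,y)\}$ it has capacity $c(C)-w((x,y))\le\alpha<\lambda'$. By the property-of-$G_s(H)$ lemma, $C\cup\{s\}$ is a Steiner cut of $G_s(H)$ of the same capacity $c(C)$, and since $(x,y)$ contributes to $C\cup\{s\}$ as well, after the failure its capacity is also $c(C)-w((x,y))<\lambda'$. Thus $G_s(H)\setminus\{(x,y)\}$ has a Steiner cut of capacity strictly below $\lambda'=c(C_m)$, so $C_m$ is no longer a Steiner mincut of $G_s(H)$. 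This completes the argument.

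The only subtle point — and the step I would be most careful about — is the bookkeeping that deleting an edge of $H$ changes the capacity of $C$ in $H$ and of $C\cup\{s\}$ in $G_s(H)$ by exactly the same amount, which rests on $(x,y)\ne(s,a)$ so that the added edge $(s,a)$ never interferes; this is precisely where the hypothesis excluding $(s,a)$ is used. The other place needing care is invoking $\alpha<\lambda'$: this is where the specific choice $\lambda'=(\lambda+\alpha)/2$ matters, guaranteeing that a vital edge of $H$ actually drops some cut below the new (smaller) Steiner mincut capacity of $G_s(H)$ rather than merely below $\lambda$.
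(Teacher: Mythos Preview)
Your proof is correct and follows essentially the same approach as the paper's. The only cosmetic difference is that the paper routes both directions through the auxiliary Lemma~\ref{lem : vital in H iff vital in GsH} (``$e$ is vital in $H$ iff $e$ is vital in $G_s(H)$''), whereas you work directly with the cut correspondence of Lemma~\ref{lem : property of GsH} together with the inequality $\alpha<\lambda'$, effectively inlining that lemma's content.
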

\begin{proof}
    Suppose after the failure of any edge $(x,y)\ne (s,a)$ in $G_s(H)$, $C_m$ does not remain the Steiner mincut for $G_s(H)$. This implies that after the failure of edge $(x,y)$, the capacity of Steiner mincut of $G_s(H)$ becomes strictly less than $\lambda'$. So, edge $(x,y)$ is a vital edge in $G_s(H)$. By Lemma \ref{lem : vital in H iff vital in GsH}, since $(x,y)\ne (s,a)$, edge $(x,y)$ is a vital edge in $H$ as well. Therefore, after the failure of edge $(x,y)$ in $H$, the capacity of Steiner mincut has changed in $H$. This completes the proof of the forward direction.

    Let us consider the converse part. Suppose after the failure of edge $(x,y)$ in $H$, the capacity of Steiner mincut has changed in $H$. This implies edge $(x,y)$ is a vital edge in $H$. By Lemma \ref{lem : vital in H iff vital in GsH}, edge $(x,y)$ is also a vital edge in graph $G_s(H)$. Since edge $(s,a)$ is not present in $H$, therefore, the capacity of cut $C_m$ remains unaffected. As a result, $C_m$ does not remain the Steiner mincut for graph $G_s(H)$ after the failure of edge $(x,y)$.     
\end{proof}
Let $D$ be a data structure that, after the failure of any edge from an undirected weighted graph, can report a Steiner mincut $C$ in ${\mathcal O}(|C|)$ time. By Lemma \ref{lem : main lemma for reporting cut lower bound}, we can use data structure $D$ to determine whether the capacity of Steiner mincut of graph $H$ has changed after the failure of any edge $e$ in $H$ as follows. We construct graph $G_s(H)$ from $H$. We have the data structure $D$ for graph $G_s(H)$. Let us denote it by $D(G_s(H))$. By construction, edge $e$ is also present in $G_s(H)$. Upon failure of any edge $e$ in $H$, we query data structure $D(G_s(H))$ for edge $e$. Suppose $D(G_s(H))$ returns Steiner mincut $C$ in ${\mathcal O}(|C|)$ time after the failure of edge $e$. We can also determine in ${\mathcal O}(|C|)$ time whether $C_m=C$, where $C_m$ is the only Steiner mincut of $G_s(H)$. By Lemma \ref{lem : main lemma for reporting cut lower bound}, if $C_m=C$, then the capacity of Steiner mincut of graph $H$ has not changed; otherwise, it has changed. Therefore, this argument, along with Theorem \ref{thm : lower bound determining capacity has changed or not}, completes the proof of Theorem \ref{thm : lower bound on reporting cut}.


\section{Conclusion} \label{sec : conclusion}
We have designed the first Sensitivity Oracle for Steiner mincuts in weighted graphs. 
It also includes the first Sensitivity Oracle for global mincut in weighted graphs. Interestingly, our Sensitivity Oracle occupies space subquadratic in $n$ when $|S|$ approaches $n$ and also achieves optimal query time. On the other hand, it matches the bounds on both space and query time with the existing best-known results for $(s,t)$-mincut \cite{DBLP:journals/anor/ChengH91, DBLP:journals/networks/AusielloFLR19}.

The quadratic space single edge Sensitivity Oracle in Theorem \ref{thm : single edge sensitivity oracle using cheng and hu}  does not assume that the capacity of failed edge is known. We have also complemented this result with matching lower bounds. Now,  
it would be great to see whether there is any matching lower bound on space and query time for single edge Sensitivity Oracles for Steiner mincuts assuming weight of the failed edge is known.  

Finally, our main obtained structure in Theorem \ref{thm : main result} that breaks the quadratic bound is quite simple as it is a forest of ${\mathcal O}(n-|S|)$ trees. We strongly believe that our techniques and structures will be quite useful for addressing several future problems, including the problem of designing a Sensitivity Oracle for $S$-mincut that can handle failure of multiple edges.

 \bibliography{main}

\end{document}